\newif\ifTechRep
\setlist{nolistsep,leftmargin=*}
\definecolor{vlightgray}{gray}{0.85}
\let\paragraph\relax 
\newtheorem{theorem}{Theorem}
\newtheorem{definition}[theorem]{Definition}
\newtheorem{proposition}[theorem]{Proposition}
\newtheorem{lemma}[theorem]{Lemma}
\newtheorem{example}[theorem]{Example}
\newcommand{\invariant}{constraint\xspace}
\newcommand{\invariants}{constraints\xspace}
\newcommand{\Invariant}{Constraint\xspace}
\newcommand{\Invariants}{Constraints\xspace}
\newcommand{\DI}{Conformance Constraint\xspace}
\newcommand{\DIs}{Conformance Constraints\xspace}
\newcommand{\di}{conformance constraint\xspace}
\newcommand{\dis}{conformance constraints\xspace}
\newcommand{\Di}{Conformance constraint\xspace}
\newcommand{\Dis}{Conformance constraints\xspace}
\newcommand{\paragraph}[1]{\noindent\textbf{#1}}
\newcommand{\ignore}[1]{{}}
\newcommand{\system}{\textsc{CCSynth}\xspace}
\newcommand{\nc}{unsafe\xspace}
\newcommand{\Nc}{Unsafe\xspace}
\newcommand{\extune}{ExTuNe\xspace}
\newcommand{\View}{Projection\xspace}
\newcommand{\Views}{Projections\xspace}
\newcommand{\view}{projection\xspace}
\newcommand{\views}{projections\xspace}
\newcommand{\maxand}{\rhd}
\newcommand{\norm}{\mathtt{simp}}
\newcommand{\citeTechRep}{\cite{technicalReport}}
\newcommand{\appOrTechRep}{\ifTechRep the Appendix\xspace\else our technical report~\citeTechRep\xspace\fi}
\newcommand{\mathcolorbox}[2]{\colorbox{#1}{$\displaystyle #2$}}
\definecolor{green}{RGB}{0,140,0}
\newcommand{\green}{black}
\newcommand{\blue}{black}
\newcommand{\red}{black}
\newcommand{\reviseone}[1]{{\color{\green} #1}}
\newcommand{\revisetwo}[1]{{\color{\blue} #1}}
\newcommand{\revisethree}[1]{{\color{\red} #1}}
\renewcommand\footnotetextcopyrightpermission[1]{} 
\begin{document}

\title[Conformance Constraint Discovery: Measuring Trust in Data-Driven Systems]{Conformance Constraint Discovery: \\Measuring Trust in Data-Driven Systems}
\titlenote{An earlier version of this paper had a different title: ``Data Invariants: On Trust in Data-Driven Systems''.}
\ifTechRep
\subtitle{Technical Report}
\fi

\author{Anna Fariha}
\authornote{Work done while the author was an intern at Microsoft.}
\authornote{Both authors contributed equally to this research.}
\affiliation{
  \institution{University of Massachusetts}
  \institution{Amherst, MA, USA}
}
\email{afariha@cs.umass.edu}

\author{Ashish Tiwari}
\authornotemark[3]
\author{Arjun Radhakrishna}
\author{Sumit Gulwani}
\affiliation{
  \institution{Microsoft}
}
\email{{astiwar, arradha, sumitg}@microsoft.com}

\author{Alexandra Meliou}
\affiliation{
  \institution{University of Massachusetts}
    \institution{Amherst, MA, USA}
}
\email{ameli@cs.umass.edu}

\renewcommand{\shortauthors}{Fariha and Tiwari, et al.}
\begin{abstract}
%
\looseness-1 The reliability of inferences made by data-driven systems hinges
on the data's continued conformance to the systems' initial settings and
assumptions. When serving data (on which we want to apply inference) deviates
from the profile of the initial training data, the outcome of inference becomes
unreliable. 
We introduce \emph{conformance constraints}, a new data profiling primitive
tailored towards quantifying the degree of \emph{non-conformance}, which can
effectively characterize if inference over that tuple is \emph{untrustworthy}.
\Dis are constraints over certain
arithmetic expressions (called \emph{projections}) involving the numerical
attributes of a dataset, which existing data profiling primitives such as
functional dependencies and denial constraints cannot model.

The key finding we present is that projections that incur \emph{low variance}
on a dataset construct effective \dis. This principle yields the surprising
result that low-variance components of a principal component analysis, which
are usually discarded for dimensionality reduction, generate stronger \dis than
the high-variance components. Based on this result, we provide a highly
scalable and efficient technique---linear in data size and cubic in the number
of attributes---for discovering conformance constraints for a dataset. To
measure the degree of a tuple's non-conformance with respect to a dataset, we
propose a \emph{quantitative semantics} that captures how much a tuple violates
the \dis of that dataset.
We demonstrate the value of \dis on two applications: \emph{trusted machine
learning} and \emph{data drift}. We empirically show that \dis offer mechanisms
to (1)~reliably detect tuples on which the inference of a machine-learned model
should not be trusted, and (2)~quantify data drift more accurately than the
state of the art.
\end{abstract}

\maketitle

\section{Introduction}\label{sec:introduction}

\looseness-1 Data is central to modern systems in a wide range of domains,
including healthcare, transportation, and finance. The core of modern
data-driven systems typically comprises of models learned from large datasets,
and they are usually optimized to target particular data and workloads. While
these data-driven systems have seen wide adoption and success, their
reliability and proper function hinge on the data's continued conformance to
the systems' initial settings and assumptions. If the serving data (on which
the system operates) deviates from the profile of the initial data (on which
the system was trained), then system performance degrades and system behavior
becomes unreliable. A mechanism to assess the trustworthiness of a system's
inferences is paramount, especially for systems that perform safety-critical or
high-impact operations.

A machine-learned (ML) model typically works best if the serving dataset
follows the profile of the dataset the model was trained on; when it doesn't,
the model's inference can be unreliable. One can profile a dataset in many
ways, such as by modeling the data distribution of the dataset, or by finding
the (implicit) \emph{constraints} that the dataset satisfies.
Distribution-oriented approaches learn data likelihood (e.g., joint or
conditional distribution) from the training data, and can be used to check if
the serving data is unlikely. An unlikely tuple does not necessarily imply that
the model would fail for it. The problem with the distribution-oriented
approaches is that they tend to overfit, and thus, are overly conservative
towards unseen tuples, leading them to report many such false positives.

\looseness-1 We argue that certain constraints offer a more effective and
robust mechanism to quantify trust of a model's inference on a serving tuple.
The reason is that learning systems implicitly exploit such constraints during
model training, and build models that assume that the constraints will continue
to hold for serving data. For example, when there exist high correlations among
attributes in the training data, learning systems will likely reduce the
weights assigned to redundant attributes that can be deduced from others, or
eliminate them altogether through dimensionality reduction. If the serving data
preserves the same correlations, such operations are inconsequential;
otherwise, we may observe model failure.

\looseness-1 In this paper, we characterize datasets with a new data-profiling
primitive, \emph{\dis}, and we present a mechanism to identify \emph{strong}
\dis, whose violation indicates unreliable inference. \Dis specify constraints
over \emph{arithmetic relationships} involving multiple numerical attributes of
a dataset. We argue that a tuple's conformance to the \dis is more critical for
accurate inference than its conformance to the training data distribution. This
is because any violation of \dis is likely to result in a catastrophic failure
of a learned model that is built upon the assumption that the \dis will always
hold. Thus, we can use a tuple's deviation from the \dis as a proxy for the
trust on a learned model's inference for that tuple. We proceed to describe a
real-world example of \dis, drawn from our case-study evaluation on
\emph{trusted machine learning} (TML).

\setlength{\tabcolsep}{.5em}
\renewcommand{\arraystretch}{.9}
\begin{figure}[t]
\centering
    \resizebox{1\columnwidth}{!}
	{\small
	
    \begin{tabular}{lcccc}
    \toprule
     							& \textbf{Departure} 		& \textbf{Departure Time} 		& \textbf{Arrival Time} 		& \textbf{Duration (min)} 	\\
								& \textbf{Date} 	 		& \textbf{[DT]} 				&  \textbf{[AT]} 				& \textbf{[DUR]} 		  	\\
    \midrule
		$t_1$ 					& May 2 					& \texttt{14:30} 				& \texttt{18:20}  				& 230 					  	\\
		$t_2$ 					& July 22 					& \texttt{09:05}				& \texttt{12:15}  				& 195 					  	\\
        \revisetwo{$t_3$} 		& 	\revisetwo{June 6} 		& \revisetwo{\texttt{10:20}}	  & \revisetwo{\texttt{12:20}}  & \revisetwo{115} 			\\ 
		$t_4$ 					& May 19 					& \texttt{11:10} 				& \texttt{13:05}  			   	& 117 						\\
		\rowcolor{vlightgray}
		$t_5$ 					& April 7 					& \texttt{22:30}  			  & \texttt{06:10}  			  	& 458						\\

    \bottomrule
    \end{tabular}
    }
	\vspace{-3mm}
    \caption{\small
	 Sample of the airlines dataset (details are in
	 Section~\ref{exp-invariants-for-ML}), showing departure, arrival, and
	 duration only. The dataset does not report arrival date, but an arrival time
	 earlier than departure time (e.g., last row), indicates an overnight flight.
	 \reviseone{All times are in 24 hour format} and in the same time zone. There is some
	 noise in the values.
	} 
	\vspace{-2mm}
    \label{fig:flights}
\end{figure}

\begin{example}\label{ex:tml}
	\looseness-1

We used a dataset with flight information that includes data on departure and
arrival times, flight duration, etc.\ (Fig.~\ref{fig:flights}) to train a
linear regression model to predict flight delays. \revisetwo{The model was
trained on a subset of the data that happened to include only daytime flights
(such as the first four tuples)}. In an empirical evaluation of the regression
accuracy, we found that the mean absolute error of the regression output more
than quadruples for overnight flights (such as the last tuple $t_5$), compared
to daytime flights. The reason is that tuples representing overnight flights
deviate from the profile of the training data \revisetwo{that only contained
daytime flights}. Specifically, daytime flights satisfy the \di that ``arrival
time is later than departure time and their difference is very close to the
flight duration'', which does not hold for overnight flights. Note that this
\invariant is just based on the covariates (predictors) and does not involve
the target attribute $delay$. Critically, although this \di is unaware of the
regression task, it was still a good proxy of the regressor's performance.
\revisetwo{In contrast, approaches that model data likelihood may report long
daytime flights as unlikely, since all flights in the training data
($t_1$--$t_4$) were also short flights, resulting in false alarms, as the model
works very well for most daytime flights, regardless of the duration (i.e., for
both short and long daytime flights).}
\end{example}

\revisetwo{Example~\ref{ex:tml} demonstrates that when training data has
\emph{coincidental} relationships (e.g., the one between $AT$, $DT$, and $DUR$ for
daytime flights), then ML models may \emph{implicitly} assume them as
\emph{invariants}. \Dis can capture such data invariants and flag
non-conforming tuples (overnight flights) during serving.}\label{extake}

\smallskip

\paragraph{\Dis.} \Dis complement the existing data profiling literature, as
the existing constraint models, such as functional dependencies and denial
constraints, cannot model arithmetic relationships. For example, the \di of
Example~\ref{ex:tml} is: $-\epsilon_1 \le AT - DT - DUR \le \epsilon_2$, where
$\epsilon_1$ and $\epsilon_2$ are small values. \Dis can capture complex linear
dependencies across attributes within a \emph{noisy} dataset. For example, if
the flight departure and arrival data reported the hours and the minutes across
separate attributes, the \invariant would be on a different arithmetic
expression: $(60\cdot arrHour + arrMin) - (60\cdot depHour + depMin) -
duration$.

\looseness-1 The core component of a \di is the arithmetic expression, called
\emph{projection}, which is obtained by a linear combination of the numerical
attributes. There is an unbounded number of projections that we can use to form
arbitrary \dis. For example, for the projection $AT$, we can find a broad range
$[\epsilon_3, \epsilon_4]$, such that all training tuples in
Example~\ref{ex:tml} satisfy the \di $\epsilon_3 \le AT \le \epsilon_4$.
However, this \invariant is too inclusive and a learned model is unlikely to
exploit such a weak constraint. In contrast, the projection $AT - DT - DUR\;$
leads to a stronger \di with a narrow range as its bounds, which is selectively
permissible, and thus, more effective.

\smallskip

\paragraph{Challenges and solution sketch.} The principal challenge is to
discover an \emph{effective} set of conformance constraints that are likely to
affect a model's inference implicitly. We first characterize ``good''
projections (that construct effective constraints) and then propose a method to
discover them. We establish through theoretical analysis two important results:
(1)~A projection is good over a dataset if it is almost constant (i.e., has low
variance) for all tuples in that dataset. (2)~A set of projections,
collectively, is good if the projections have small pair-wise correlations. We
show that low variance components of a principal component analysis (PCA) on a
dataset yield such a set of \views. Note that this is different from---and in
fact completely opposite to---the traditional approaches
(e.g.,~\cite{DBLP:conf/kdd/QahtanAWZ15}) that perform multidimensional analysis
based on the high-variance principal components, after reducing dimensionality
using PCA.

\smallskip

\paragraph{Scope.} \looseness-1 Fig.~\ref{relatedWorkMatrix} summarizes prior
work on related problems, but the scope of our setting differs significantly.
Specifically, we can detect if a serving tuple is non-conforming with respect
to the training dataset \emph{only based on its predictor attributes}, and
require no knowledge of the ground truth. This setting is essential in many
practical applications when we observe \emph{extreme verification
latency}~\cite{souzaSDM:2015}, where ground truths for serving tuples are not
immediately available. For example, consider a self-driving car that is using a
trained controller to generate actions based on readings of velocity, relative
positions of obstacles, and their velocities. In this case, we need to
determine, only based on the sensor readings (predictors), when the driver
should be alerted to take over vehicle control, as we cannot use ground-truths
to generate an alert.

Furthermore, we \emph{do not assume access to the model}, i.e., model's
predictions on a given tuple. This setting is necessary for (1)~safety-critical
applications, where the goal is to quickly alert the user, without waiting for
the availability of the prediction, (2)~auditing and privacy-preserving
applications where the prediction cannot be shared, and (3)~when we are unaware
of the detailed functionality of the system due to privacy concerns or lack of
jurisdiction, but only have some meta-information such as the system trains
some linear model over the training data.

We focus on identifying \emph{tuple-level} non-conformance as opposed to
dataset-level non-conformance that usually requires observing entire data's
distribution. However, our tuple-level approach trivially extends (by
aggregation) to the entire dataset.


\newcolumntype{?}{!{\vrule width 1pt}}
\newcommand{\sln}{2}
\newcommand{\eln}{18} 
\newcommand{\cm}{\checkmark}
\newcommand{\gm}{!}
\let\st\relax
\newcommand{\st}{$\star$}
\newcommand{\na}{$\bot$}
\newcommand{\nr}{$\varoslash$}
\newcommand*{\bc}{
  \textpdfrender{
    TextRenderingMode=FillStroke,
    LineWidth=.5pt,
  }{\checkmark}
}

\begin{figure}[t] 
	\setlength\tabcolsep{1pt} 
	\centering	
	\resizebox{1\columnwidth}{!}{ \small 
	\begin{tabular}
		{?c|c|p{50mm}?c|c|c|c|c|c?p{3mm}|c?c|c|c?c|c|c|c?c|c?} 
		
		\hline
		
		\hline
		
		\multicolumn{3}{?c?}{\cellcolor{vlightgray}{\multirow[c]{1}{*}{\cellcolor{vlightgray}{\rotatebox[origin=lb]{0}{{\textbf{Legend}}}}}}} & 
		
		\multicolumn{6}{c?}{{\multirow[c]{1}{*}{\rotatebox[origin=lb]{0}{{\scriptsize constraints}}} }} &
		\multicolumn{2}{c?}{{\multirow[c]{1}{*}{\rotatebox[origin=lb]{0}{{\scriptsize violation}}}}} &
		\multicolumn{3}{c?}{{\multirow[c]{1}{*}{\rotatebox[origin=lb]{0}{{\scriptsize setting}}}}} &
		\multicolumn{4}{c?}{{\multirow[c]{1}{*}{\rotatebox[origin=lb]{0}{{\scriptsize technique}}}}} &
		\multicolumn{2}{c?}{{\multirow[c]{1}{*}{\rotatebox[origin=lb]{0}{{\scriptsize TML}}}}}\\
		\hline
		
		\hline

		\multicolumn{2}{?r}{\cellcolor{vlightgray}{\small HP:	}} & 			\multicolumn{1}{l?}{\cellcolor{vlightgray}\small Hyper Parameter } &&&&&&&&&&&&&&&&&\\
		\multicolumn{2}{?r}{\cellcolor{vlightgray}{\small FD:	}} & 			\multicolumn{1}{l?}{\cellcolor{vlightgray}\small Functional Dependency} &&&&&&&&&&&&&&&&&\\
		\multicolumn{2}{?r}{\cellcolor{vlightgray}{\small DC:	}} & 			\multicolumn{1}{l?}{\cellcolor{vlightgray}\small Denial Constraint}		&&&&&&&&&&&&&&&&&\\
		\multicolumn{2}{?r}{\cellcolor{vlightgray}{\small \nr:	}} & 			\multicolumn{1}{l?}{\cellcolor{vlightgray}\small Does not require } &&&&&&&&&&&&&&&&&\\
		\multicolumn{2}{?r}{\cellcolor{vlightgray}{\small \na:	}} & 			\multicolumn{1}{l?}{\cellcolor{vlightgray}\small Not applicable } &&&&&&&&&&&&&&&&&\\
		\multicolumn{2}{?r}{\cellcolor{vlightgray}{\small \st:	}} & 			\multicolumn{1}{l?}{\cellcolor{vlightgray}\small Supports via extension} &&&&&&&&&&&&&&&&&\\
		\multicolumn{2}{?r}{\cellcolor{vlightgray}{\small !:	}} & 			\multicolumn{1}{l?}{\cellcolor{vlightgray}\small Partially}& 		 
		
		\multirow[t]{4}{*}{\begin{sideways}{{\small parametric}}\end{sideways}} &
		\multirow[t]{4}{*}{\begin{sideways}{{\small arithmetic}}\end{sideways}} &
		\multirow[t]{4}{*}{\begin{sideways}{{\small approximate}}\end{sideways}} &	
		\multirow[t]{4}{*}{\begin{sideways}{{\small conditional}}\end{sideways}}&
		\multirow[t]{4}{*}{\begin{sideways}{{\small notion of weight}}\end{sideways}} &
		\multirow[t]{4}{*}{\begin{sideways}{{\small interpretable}}\end{sideways}} &

		\multirow[t]{4}{*}{\begin{sideways}{{\small continuous}}\end{sideways}} &
		\multirow[t]{4}{*}{\begin{sideways}{{\small tuple-wise}}\end{sideways}} &

		\multirow[t]{4}{*}{\begin{sideways}{{\small noisy data}}\end{sideways}} &
		\multirow[t]{4}{*}{\begin{sideways}{{\small numerical attr.}}\end{sideways}} &
		\multirow[t]{4}{*}{\begin{sideways}{{\small categorial attr.}}\end{sideways}} &
		
		\multirow[t]{4}{*}{\begin{sideways}{{\small $\!\varoslash\!$ thresholds}}\end{sideways}} &
		\multirow[t]{4}{*}{\begin{sideways}{{\small $\!\varoslash\!$ distance metric}}\end{sideways}} &
		\multirow[t]{4}{*}{\begin{sideways}{{\small $\!\varoslash\!$ HP tuning}}\end{sideways}}&
		\multirow[t]{4}{*}{\begin{sideways}{{\small scalable}}\end{sideways}} &

		\multirow[t]{4}{*}{\begin{sideways}{{\small task agnostic}}\end{sideways}} &
		\multirow[t]{4}{*}{\begin{sideways}{{\small $\!\varoslash\!$ access to model}}\end{sideways}}\\
		
		\hline 
		
		\hline
		
		\multirow{13}{*}{\begin{sideways}{Data Profiling}\end{sideways}} & 
		 
						   \multicolumn{2}{l?}{\textbf{\DIs} } 																&\bc&\bc&\bc&\bc&\bc&\st&	\multicolumn{1}{c|}{\bc} &\bc&	\bc&\bc	&\bc&		\bc&\bc &\bc&\bc&		\bc&\bc\\
		\cline{\sln-20}& \multicolumn{2}{l?}{FD~\cite{papenbrock2015functional}} 											&	&	&	&	&	&\cm&		&	&	   &	&\cm&	\cm&\cm &\cm&   &		\multicolumn{2}{c?}{\multirow{12}{*}{\begin{sideways}{not addressed in prior work}\end{sideways}}}\\
		\cline{\sln-\eln}& \multicolumn{2}{l?}{Approximate FD~\cite{kruse2018efficient}} 									&	&	&\cm&	&	&\cm&		&	&	\cm&	&\cm&	   &\cm &\cm&	&	  \multicolumn{2}{c?}{}\\
		\cline{\sln-\eln}& \multicolumn{2}{l?}{Metric FD~\cite{koudas2009metric}} 											&	&	&\cm&	&	&\cm&		&	&	\cm&\cm	&\cm&	   &	&\na&\na&	  \multicolumn{2}{c?}{}\\
		\cline{\sln-\eln}& \multicolumn{2}{l?}{Conditional FD~\cite{DBLP:conf/icde/FanGLX09}} 								& ! &	&	&\cm&	&\cm&		&\cm&	\cm&	&\cm&	   &\cm &\cm&   &	  \multicolumn{2}{c?}{}\\
		\cline{\sln-\eln}& \multicolumn{2}{l?}{Pattern FD~\cite{qahtan2020pattern}} 										& ! &	&	&	&	&\cm&		&\cm&	\cm&	&\cm&	   &\cm &\cm&   &	  \multicolumn{2}{c?}{}\\		
		\cline{\sln-\eln}& \multicolumn{2}{l?}{Soft FD~\cite{DBLP:conf/sigmod/IlyasMHBA04}} 								&	&	&\cm&	&\cm&\cm&	\cm &	&	\cm&\cm	&\cm&	   &\cm &\cm&\cm&	  \multicolumn{2}{c?}{}\\
		\cline{\sln-\eln}& \multicolumn{2}{l?}{Relaxed FD~\cite{caruccio2016discovery}} 									&	&	&\cm&	&	&\cm&		&	&	\cm&\cm	&\cm&	   &	&\cm&	&	  \multicolumn{2}{c?}{}\\
		\cline{\sln-\eln}& \multicolumn{2}{l?}{FDX~\cite{DBLP:conf/sigmod/ZhangGR20}} 										&	&	&	&	&	&\cm&		&	&	\cm&	&\cm&	   &	&	&\cm&	  \multicolumn{2}{c?}{}\\
		\cline{\sln-\eln}& \multicolumn{2}{l?}{Differential Dependency~\cite{song2011differential}} 						&	&	&	&	&	&\cm&		&	&	   &\cm	&\cm&	   &	&\cm&	&	  \multicolumn{2}{c?}{}\\
		\cline{\sln-\eln}& \multicolumn{2}{l?}{DC~\cite{DBLP:journals/pvldb/ChuIP13, DBLP:journals/pvldb/BleifussKN17}} 	& ! &	&\cm&\cm&\cm&\cm&		&\cm&	\cm&\cm	&\cm&	   &\cm &\cm&	&	  \multicolumn{2}{c?}{}\\
		\cline{\sln-\eln}& \multicolumn{2}{l?}{Approximate DC~\cite{pena2019discovery, DBLP:journals/corr/abs-2005-08540}} 	& ! &	&\cm&\cm&\cm&\cm&		&\cm&	\cm&\cm	&\cm&	   &\cm &\cm&	&	  \multicolumn{2}{c?}{}\\
		\cline{\sln-\eln}& \multicolumn{2}{l?}{Statistical Constraint~\cite{DBLP:conf/sigmod/YanSZWC20}} 					&	&	&\cm&	&	&\cm&	\cm	&	&	\cm&\cm &\cm&	   &\cm &\cm&\cm&	  \multicolumn{2}{c?}{}\\
			
		\hline
						
		\hline
		
		\multirow{7}{*}{\begin{sideways}{Learning}\end{sideways}} & 
						   \multicolumn{2}{l?}{Ordinary Least Square} 														&\cm&\cm&\cm&	&	&\st		&\cm&\cm		&\cm&\cm&\st		&\cm&\cm&\cm&\cm		&	&	\\
		\cline{\sln-20}& \multicolumn{2}{l?}{Total Least Square} 															&\cm&\cm&\cm&	&	&\st		&\cm&\cm		&\cm&\cm&\st		&\cm&\cm&\cm&\cm		&\cm&\cm\\
		\cline{\sln-20}& \multicolumn{2}{l?}{Auto-encoder~\cite{DBLP:journals/corr/abs-1812-02765}} 						&\multicolumn{6}{c?}{\na}		&\cm&\cm		&\cm&\cm&\cm		&	&	&	&\cm		&\cm&\cm\\
		\cline{\sln-20}& \multicolumn{2}{l?}{Schelter et al.~\cite{DBLP:conf/sigmod/SchelterRB20}\textsuperscript{+}} 	&\multicolumn{6}{c?}{\na}		&\cm&			&	&\cm&\cm		&	&\cm&\cm&\cm		&	&	\\
		\cline{\sln-20}& \multicolumn{2}{l?}{Jiang et al.~\cite{DBLP:conf/nips/JiangKGG18}} 								&\multicolumn{6}{c?}{\na}		&\cm&\cm		&\cm&\cm&\cm		&\cm&	&	&\cm		&	&	\\
		\cline{\sln-20}& \multicolumn{2}{l?}{Hendrycks et al.~\cite{DBLP:journals/corr/HendrycksG16c}} 					&\multicolumn{6}{c?}{\na}		&	&\cm		&\cm&\cm&\cm		&\cm&\cm&\cm&\cm		&	&	\\
		\cline{\sln-20}& \multicolumn{2}{l?}{Model's Prediction Probability} 												&\multicolumn{6}{c?}{\na}		&\cm&\cm		&\multicolumn{7}{c?}{varies}				&   &	\\
		\hline
		
		\hline
		\multicolumn{20}{r}{\textsuperscript{+} {\scriptsize Requires additional information}}\\

	\end{tabular}
	} 	
	\vspace{-4mm} 	
	\caption{\Dis complement existing data profiling primitives and provide an 
	efficient mechanism to quantify trust in prediction, with
	minimal assumption on the setting. 
    } 
	\label{relatedWorkMatrix} 
	\vspace{-4mm} 
	
\end{figure}

\smallskip

\paragraph{Contrast with prior art.} We now discuss where \dis fit with respect
to the existing literature (Fig.~\ref{relatedWorkMatrix}) on data profiling and
literature on modeling trust in data-driven inferences

\subsubsection*{Data profiling techniques} \Dis fall under the umbrella of data
profiling, which refers to the task of extracting technical metadata about a
given dataset~\cite{DBLP:journals/vldb/AbedjanGN15}. A key task in data
profiling is to learn relationships among attributes. Functional dependencies
(FD)~\cite{papenbrock2015functional} and their variants only capture if a
relationship exists between two sets of attributes, but do not provide a
closed-form (parametric) expression of the relationship. Using the FD $\{AT,
DT\} \rightarrow$ $\{DUR\}$ to model the \invariant of Example~\ref{ex:tml}
suffers from several limitations. First, since the data is noisy, no exact FD
can be learned. Metric FDs~\cite{koudas2009metric} allow small variations in
the data (similar attribute values are considered identical), but hinge on
appropriate distance metrics and thresholds. For example, if $time$ is split
across two attributes ($hour$ and $minute$), the distance metric is
non-trivial: it needs to encode that $\langle hour = 4, min = 59 \rangle$ and
$\langle hour = 5, min = 1\rangle$ are similar, while $\langle hour = 4, min =
1\rangle$ and $\langle hour = 5, min = 59\rangle$ are not. In contrast, \dis
can model the composite attribute ($60 \cdot hour + minute$) by automatically
discovering the coefficients $60$ and $1$ for such a composite attribute.

\looseness-1 Denial constraints (DC)~\cite{DBLP:journals/pvldb/ChuIP13,
DBLP:journals/pvldb/BleifussKN17, pena2019discovery,
DBLP:journals/corr/abs-2005-08540} encapsulate a number of different
data-profiling primitives such as FDs and their variants (e.g,~\cite{
DBLP:conf/icde/FanGLX09}). Exact DCs can adjust to noisy data by adding
predicates until the constraint becomes exact over the entire dataset, but this
can make the constraint extremely large and complex, which might even fail to
provide the desired generalization. For example, a finite DC---whose language
is limited to universally-quantified first-order logic---cannot model the
constraint of Example~\ref{ex:tml}, which involves an arithmetic expression
(addition and multiplication with a constant). Expressing \dis requires a
richer language that includes linear arithmetic expressions. \revisetwo{Pattern
functional dependencies (PFD)~\cite{qahtan2020pattern} move towards addressing
this limitation of DCs, but they focus on text attributes: they are regex-based
and treat digits as characters. However, modeling arithmetic relationships of
numerical attributes requires interpreting digits as numbers.}

\looseness-1 To adjust for noise, FDs and DCs either relax the notion of
constraint violation or allow a user-defined fraction of tuples to violate the
(strict) constraint~\cite{pena2019discovery, huhtala1999tane,
kruse2018efficient, DBLP:conf/sigmod/IlyasMHBA04, koudas2009metric,
caruccio2016discovery, DBLP:journals/corr/abs-2005-08540}. Some
approaches~\cite{DBLP:conf/sigmod/IlyasMHBA04, DBLP:conf/sigmod/ZhangGR20,
DBLP:conf/sigmod/YanSZWC20} use statistical techniques to model other types of
data profiles such as correlations and conditional dependencies. However, they
require additional parameters such as noise and violation thresholds and
distance metrics. In contrast, \dis do not require any parameter from the user
and work on noisy datasets.

\revisetwo{Existing data profiling techniques are not designed to learn what ML
models exploit and they are sensitive to noise in the numerical attributes.
Moreover, data constraint discovery algorithms typically search over an
exponential set of candidates, and hence, are not scalable: their complexity
grows exponentially with the number of attributes or quadratically with data
size. In contrast, our technique for deriving \dis is highly scalable (linear
in data size) and efficient (cubic in the number of attributes). It does not
explicitly explore the candidate space, as PCA---which lies at the core of our
technique---performs the search \emph{implicitly} by iteratively refining
weaker \invariants to stronger ones.} \label{nocandidate}

\subsubsection*{Learning techniques} While \emph{ordinary least square} finds
the lowest-variance projection, it minimizes observational error on only the
target attribute, and thus, does not apply to our setting. \emph{Total least
square} offers a partial solution to our problem as it takes observational
errors on all predictor attributes into account. However, it finds only one
projection---the lowest variance one---that fits the data tuples best. But
there may exist other projections with slightly higher variances and we
consider them all. As we show empirically in
Section~\ref{exp-invariants-for-drift}, constraints derived from multiple
projections, collectively, capture various aspects of the data, and result in
an effective data profile targeted towards certain tasks such as data-drift
quantification. (More discussion is in \appOrTechRep.)

\smallskip 

\paragraph{Contributions.} We make the following contributions:

\begin{itemize}

     \item We ground the motivation of our work with two case studies on trusted
     machine learning (TML) and data drift. (Section~\ref{sec:casestudies})
     
     \item We introduce and formalize \dis, a new data profiling primitive that
     specify constraints over arithmetic relationships among numerical
     attributes of a dataset. We describe a \emph{conformance language} to
     express \dis, and a \emph{quantitative semantics} to quantify how much a
     tuple violates the \dis. In applications of constraint
     violations, some violations may be more or less critical than others. To
     capture that, we consider a notion of \invariant importance, and weigh
     violations against \invariants accordingly. (Section~\ref{sec:data-invs})
      
     \item We formally establish that strong \dis are constructed from
     projections with small variance and small mutual correlation on the given
     dataset. Beyond simple linear \invariants (e.g., the one in
     Example~\ref{ex:tml}), we derive \emph{disjunctive} \invariants, which are
     disjunctions of linear \invariants. We achieve this by dividing the
     dataset into disjoint partitions, and learning linear \invariants for each
     partition. We provide an efficient, scalable, and highly parallelizable
     algorithm for computing a set of linear \dis and disjunctions over them.
     We also analyze its runtime and memory complexity.
     (Section~\ref{sec:synth-data-inv})
     
     \item We formalize the notion of \emph{\nc} tuples in the context of
     trusted machine learning and provide a mechanism to detect \nc tuples
     using \dis. (Section~\ref{sec:di-for-tml})
	 
     \item We empirically analyze the effectiveness of \dis in our two
     case-study applications---TML and data-drift quantification. We show that
     \dis can reliably predict the trustworthiness of linear models and
     quantify data drift precisely, outperforming the state of the art.
     (Section~\ref{sec:experiments})
	 
  \end{itemize}

\section{Case Studies}\label{sec:casestudies}

\looseness-1 Like other data-profiling primitives, \dis have general
applicability in understanding and describing datasets. But their true power
lies in quantifying the degree of a tuple's non-conformance with respect to a
reference dataset. Within the scope of this paper, we focus on two case studies
in particular to motivate our work: trusted machine learning and data drift. We
provide an extensive evaluation over these applications in
Section~\ref{sec:experiments}.

\smallskip

\looseness-1 \paragraph{Trusted machine learning (TML)} refers to the problem
of quantifying trust in the inference made by a machine-learned model on a new
serving tuple~~\cite{DBLP:conf/hicons/TiwariDJCLRSS14,
DBLP:journals/crossroads/Varshney19, DBLP:journals/corr/abs-1904-07204,
DBLP:conf/kdd/Ribeiro0G16, DBLP:conf/nips/JiangKGG18}.
This is particularly useful in case of extreme verification
latency~\cite{souzaSDM:2015}, where ground-truth outputs for new serving tuples
are not immediately available to evaluate the performance of a learned model,
when auditing models for trustworthiness, and in privacy-preserving
applications where even the model's predictions cannot be shared.
When a model is trained using a dataset, the \dis for that dataset specify a
safety envelope~\cite{DBLP:conf/hicons/TiwariDJCLRSS14} that characterizes the
tuples for which the model is expected to make trustworthy predictions. If a
serving tuple falls outside the safety envelope (violates the \dis), then the
model is likely to produce an untrustworthy inference. Intuitively, the higher
the violation, the lower the trust. Some classifiers produce a confidence
measure along with the class prediction, typically by applying a softmax
function to the raw numeric prediction values. However, such a confidence
measure is not well-calibrated~\cite{DBLP:conf/nips/JiangKGG18,
guo2017calibration}, and therefore, cannot be reliably used as a measure of
trust in the prediction. Additionally, a similar mechanism is not available for
inferences made by regression models.

\looseness-1 In the context of TML, we formalize the notion of \emph{\nc
tuples}, on which the prediction may be untrustworthy. We establish that \dis
provide a sound and complete procedure for detecting \nc tuples, which
indicates that the search for \dis should be guided by the class of models
considered by the corresponding learning system (Section~\ref{sec:di-for-tml}). 

\smallskip

\paragraph{Data drift}~\cite{DBLP:conf/kdd/QahtanAWZ15,
DBLP:journals/tnn/KunchevaF14, DBLP:journals/csur/GamaZBPB14,
DBLP:journals/jss/BarddalGEP17} specifies a significant change in a dataset
with respect to a reference dataset, which typically requires that systems be
updated and models retrained. Aggregating tuple-level non-conformances over a
dataset gives us a \emph{dataset-level} non-conformance, which is an effective
measurement of data drift. To quantify how much a dataset $D'$ drifted from a
reference dataset $D$, our three-step approach is: (1)~compute \dis for $D$,
(2)~evaluate the \invariants on all tuples in $D'$ and compute their violations
(degrees of non-conformance), and (3)~finally, aggregate the tuple-level
violations to get a dataset-level violation. If all tuples in $D'$ satisfy the
\invariants, then we have no evidence of drift. Otherwise, the aggregated
violation serves as the drift quantity.

\smallskip

While we focus on these two applications here, we mention other applications of
\dis in \appOrTechRep.

\section{\DIs}\label{sec:data-invs}

\def\Inv{\mathtt{Inv}}
\def\Dom{\mathtt{Dom}}
\def\coDom{\mathtt{coDom}}
\def\DDom{\mathbf{Dom}}
\def\CC{\mathcal{C}}
\def\Real{{\mathbb{R}}}

\def\Bool{{\mathtt{Bool}}}
\def\true{{\mathtt{True}}}
\def\false{{\mathtt{False}}}
\def\vecw{{\vec{w}}}
\def\vecv{{\vec{v}}}
\def\lb{{\mathtt{lb}}}
\def\ub{{\mathtt{ub}}}
\def\atomic{{atomic}}
\def\Atomic{{Atomic}}
\def\M{{\infty}}
\newcommand{\semq}[1]{{[\![{#1}]\!]}}
\newcommand\avg[1]{{\mathtt{\mu}}(#1)}
\newcommand\sign[1]{{\mathtt{sign}({#1})}}
\newcommand\delF[2]{{\Delta{#1}({#2})}}
\newcommand\stddev[1]{{\mathtt{\sigma}}(#1)}
\def\month{M}

In this section, we define \dis that allow us to capture complex arithmetic
dependencies involving numerical attributes of a dataset. Then we propose a
language for representing them. Finally, we define quantitative semantics over
\dis, which allows us to quantify their violation.

\smallskip

\looseness-1
\paragraph{Basic notations.}
We use $\mathcal{R} (A_1, A_2, \dots, A_m)$ to denote a relation schema where
$A_i$ denotes the $i^{th}$ attribute of $\mathcal{R}$. We use $\Dom_i$ to
denote the domain of attribute $A_i$. Then the set ${\DDom}^m=\Dom_1\times
\cdots\times \Dom_m$ specifies the domain of all possible tuples. We use $t
\in\DDom^m$ to denote a tuple in the schema $\mathcal{R}$. A dataset
$D\subseteq {\DDom}^m$ is a specific instance of the schema $\mathcal{R}$. For
ease of notation, we assume some order of tuples in $D$ and we use $t_i \in D$
to refer to the $i^{th}$ tuple and $t_i.A_j \in \Dom_j$ to denote the value of
the $j^{th}$ attribute of $t_i$.

\smallskip

\looseness-1 \paragraph{\Di.} A \di $\Phi$ characterizes a set of allowable or
conforming tuples and is expressed through a \emph{conformance language}
(Section~\ref{sec:cl}). We write $\Phi(t)$ and $\neg\Phi(t)$ to denote that $t$
satisfies and violates $\Phi$, respectively.

\begin{definition}[\Di]\label{def:di2}
     A \di for a dataset $D \,{\subseteq}\, {\DDom}^m$ is a formula \linebreak
     $\Phi: {\DDom}^m \mapsto \{\true,\false\}$ such that $|\{t \in D \mid
     \neg\Phi(t)\}| \ll |D|$.
\end{definition}
The set $\{t \in D \mid \neg\Phi(t)\}$ denotes atypical tuples in $D$ that do
not satisfy the \di $\Phi$. In our work, we do not need to know the set of
atypical tuples, nor do we need to purge the atypical tuples from the dataset.
Our techniques derive \invariants in ways that ensure there are very few
atypical tuples (Section~\ref{sec:synth-data-inv}). 

\subsection{Conformance Language}\label{sec:cl}
\looseness-1 \paragraph{\View.} A central concept in our conformance language
is \linebreak \emph{\view}. Intuitively, a \view is a derived attribute that
specifies a ``lens'' through which we look at the tuples. More formally, a
\view is a function $F: \DDom^m \mapsto\Real$ that maps a tuple $t\in\DDom^m$
to a real number $F(t) \in \Real$. In our language for \dis, we only consider
\views that correspond to linear combinations of the numerical attributes of a
dataset. Specifically, to define a \view, we need a set of numerical
coefficients for all attributes of the dataset and the \view is defined as a
sum over the attributes, weighted by their corresponding coefficients. We extend
a projection $F$ to a dataset $D$ by defining $F(D)$ to be the sequence of
reals obtained by applying $F$ on each tuple in $D$ individually.

\smallskip

{
\paragraph{Grammar.} Our language for \dis consists of formulas
$\Phi$ generated by the following grammar:
{\small
\begin{eqnarray*}
    \phi & := & \lb \leq F(\vec{A}) \leq \ub \; \mid\; \wedge(\phi,\ldots,\phi)
    \\[-0.3em]
    \psi_A & := & \vee(\;(A = c_1) \maxand \phi,\; (A = c_2) \maxand \phi, \;\ldots)
    \\[-0.3em]
    \Psi & := & \psi_A \; \mid \; \wedge(\psi_{A_1}, \psi_{A_2}, \ldots)
	\\[-0.3em]
    \Phi & := & \phi  \;\mid\; \Psi
\end{eqnarray*}
}
\looseness-1 
\noindent 
The language consists of (1)~bounded constraints $\lb \leq F(\vec{A})\leq \ub$
where $F$ is a \view on $\DDom^m$, $\vec{A}$ is the tuple of formal parameters
$(A_1, A_2, \ldots, A_m)$, and $\lb, \ub \in \Real$ are reals; (2)~equality
constraints $A = c$ where $A$ is an attribute and $c$ is a constant in
$A$'s domain; and (3)~operators ($\maxand$, $\wedge$, and $\vee$,) that connect
the constraints.
}
Intuitively, $\maxand$ is a switch operator that specifies which \invariant $\phi$
applies based on the value of the attribute $A$, $\wedge$ denotes conjunction,
and $\vee$ denotes disjunction.
Formulas generated by $\phi$ and $\Psi$ are called {\em{simple \invariants}}
and {\em{compound \invariants}}, respectively. Note that a formula generated by
$\psi_A$ only allows equality constraints on a single attribute, namely $A$,
among all the disjuncts.

\begin{example}\label{ex:constraints}
    Consider the dataset $D$ consisting of the first four tuples  $\{t_1, t_2, t_3, t_4\}$ of Fig.~\ref{fig:flights}.
    A simple \invariant for $D$ is:
	{
	$$
    \phi_1: -5\leq AT - DT - DUR \leq 5.
	$$
	}
    Here, the projection $F(\vec{A}) {=} AT {-} DT {-} DUR$, with attribute coefficients
    $\langle 1, - 1, -1 \rangle$, $\lb = {-}5$, and $\ub = 5$. A compound
    \invariant is:
	{
	\begin{align*}
     \psi_2:   \month &= \text{``May''}  \maxand -2 \leq \mathcolorbox{lightgray}{AT - DT - DUR} \leq 0 \\[-0.4em]
     \vee \;\; \month &= \text{``June''} \maxand \phantom{-}0  \leq \mathcolorbox{lightgray}{AT - DT - DUR} \leq 5\\[-0.4em]
     \vee \;\; \month &= \text{``July''} \, \maxand {-}5 \leq \mathcolorbox{lightgray}{AT - DT - DUR} \leq 0
	\end{align*}
	}
     For ease of exposition, we assume that all times are converted to minutes
     (e.g., \texttt{06:10} $= 6 {\times} 60 {+} 10 = 370$) and $\month$ denotes
     the departure month, extracted from $Departure$ $Date$.
\end{example}

Note that arithmetic expressions that specify linear combination of numerical
attributes (highlighted above) are disallowed in denial
constraints~\cite{DBLP:journals/pvldb/ChuIP13} which only allow raw attributes
and constants (more details are in \appOrTechRep).

\subsection{Quantitative Semantics}\label{quant-sem}

\looseness-1 \Dis have a natural Boolean semantics: a tuple either satisfies a
\invariant or it does not. However, Boolean semantics is of limited use in
practice, because it does not quantify the degree of \invariant violation. We
interpret \dis using a quantitative semantics, which quantifies violations, and
reacts to noise more gracefully than Boolean semantics.

The quantitative semantics $\semq{\Phi}(t)$ is a measure of the violation of
$\Phi$ on a tuple $t$---with a value of $0$ indicating no violation and a value
greater than 0 indicating some violation. In Boolean semantics, if $\Phi(t)$ is
$\true$, then $\semq{\Phi}(t)$ will be $0$; and if $\Phi(t)$ is $\false$, then
$\semq{\Phi}(t)$ will be $1$. Formally, $\semq{\Phi}$ is a mapping from
$\DDom^m$ to $[0,1]$.

\smallskip

\noindent\emph{Quantitative semantics of simple \invariants.} 
\revisethree{We build upon $\epsilon$-insen\-sitive loss~\cite{vapnik1997support}
to define the quantitative semantics of simple \invariants, where the bounds
$\lb$ and $\ub$ define the $\epsilon$-insensitive
zone:}\footnote{\revisethree{For a target value $y$, predicted value $\hat{y}$,
and a parameter $\epsilon$, the $\epsilon$-insensitive loss is $0$ if $|y -
\hat{y}| < \epsilon$ and $|y - \hat{y}| - \epsilon$ otherwise.
}}\label{fn1}
\begin{align*}
    \semq{\lb\leq F(\vec{A})\leq \ub}(t) &:= \eta(\alpha \cdot \max(0, F(t)-\ub, \lb - F(t)))
    \\
    \semq{\wedge(\phi_1,\ldots,\phi_K)}(t) &:= \textstyle\sum_{k}^{K} \gamma_k \cdot \semq{\phi_k}(t)
\end{align*}
Below, we describe the parameters of the quantitative semantics, and provide further details on them in \appOrTechRep.

\smallskip 
\paragraph{Scaling factor $\alpha\in\Real^+$.}\\ \Views are unconstrained
functions and different \views can map the same tuple to vastly different
values. We use a scaling factor $\alpha$ to standardize the values computed by
a \view $F$, and to bring the values of different \views to the same comparable
scale. The scaling factor is automatically computed as the inverse of the
standard deviation: $\tfrac{1}{\stddev{F(D)}}$. 
We set $\alpha$ to a large positive number when $\stddev{F(D)}$ = $0$.

\smallskip
\paragraph{Normalization function $\eta(.): \Real\mapsto [0,1]$.}\\
The normalization function
maps values in the range $[0,\infty)$ to the range $[0,1)$.
While any monotone mapping from $\Real^{\geq 0}$ to $[0,1)$ can be used, we
pick $\eta(z) = 1 - e^{-z}$.

\smallskip 
\looseness-1
\paragraph{Importance factors $\gamma_k\in \Real^+$, $\textstyle\sum_{k}^{K}\gamma_k {=} 1$.}\\
The weights $\gamma_k$ control the contribution of each bounded-\view \invariant in a
conjunctive formula. This allows for prioritizing \invariants that are more
significant than others within the context of a particular application.
In our work, we derive the importance factor of a \invariant automatically,
based on its \view's standard deviation over $D$.

\subsubsection*{Quantitative semantics of compound \invariants} 
Compound \invariants are first simplified into simple \invariants, and they get
their meaning from the simplified form.
We define a function
$\norm(\psi, t)$ that takes a compound \invariant $\psi$ and a tuple $t$ and returns a simple
\invariant. It is defined recursively as follows:
\begin{align*}
    &\norm(\vee(\;(A = c_1) \maxand\phi_1, (A = c_2) \maxand \phi_2,\ldots), t) := 
      \phi_k  \; \mbox{if $t.A = c_k$}
    \\
    &\norm(\wedge(\psi_{A_1}, \psi_{A_2},\ldots), t)  := \wedge(\norm(\psi_{A_1}, t), \norm(\psi_{A_2},t),\ldots)
\end{align*}
\begin{sloppypar} If the condition in the definition above does not hold for any $c_k$, then
$\norm(\psi,t)$ is undefined and $\norm(\wedge(\dots,\psi,\dots),t)$ is
also undefined. If $\norm(\psi,t)$ is undefined, then $\semq{\psi}(t) :=
1$. When $\norm(\psi,t)$ is defined, the quantitative semantics of $\psi$ is
given by:
$$
\semq{\psi}(t) := \semq{\norm(\psi,t)}(t)
$$
\end{sloppypar}

Since compound \invariants simplify to simple \invariants, we mostly focus on 
simple \invariants. Even there, we pay special attention to bounded-projection 
\invariants ($\phi$) of the form \linebreak $\lb \le F(\vec{A}) \le \ub$, which lie at the core of simple \invariants. 

\begin{example}\label{ex:semantics}
    Consider the \invariant $\phi_1$ from Example~\ref{ex:constraints}.
	For $t \in D$,
    $\semq{\phi_1}(t)= 0$ since $\phi_1$ is satisfied by all tuples in $D$. The
    standard deviation of the projection $F$ over $D$, $\sigma(F(D)) {=}
    \sigma(\{0,-5, 5, -2\}) {=} 3.6$. Now consider the last tuple $t_5 \not\in
    D$. $F(t_5) = (370 - 1350)-458 = -1438$, which is way below the lower bound $-5$ of $\phi_1$. 
	Now we compute how much $t_5$ violates $\phi_1$: 
	$\semq{\phi_1}(t_5) = \semq{-5 \leq F(\vec{A}) \leq 5}(t_5) 
		 			   = \eta(\alpha \cdot \max(0, -1438 - 5, -5 + 1438)) = 1 - e^{-\frac{1433}{3.6}} \approx 1$.
	Intuitively, this implies that $t_5$ strongly violates $\phi_1$.
\end{example}

\section{\DI Synthesis}\label{sec:synth-data-inv}

\looseness-1 In this section, we describe our techniques for deriving \dis. We
start with the synthesis of simple \invariants (the $\phi$ \invariants in
our language specification), followed by compound \invariants (the $\Psi$
\invariants in our language specification). Finally, we analyze the time and
memory complexity of our algorithm.

\subsection{Simple \DIs}\label{sec:conjunctive} Synthesizing simple \dis
involves (a)~discovering the \views, and (b)~discovering the lower and upper
bounds for each \view. We start by discussing~(b), followed by the principle to
identify effective \views, based on which we solve~(a).

\subsubsection{Synthesizing Bounds for \Views}\label{synth-bounds} Fix a \view $F$
and consider the bounded-\view \invariant $\phi$: $\lb \leq
F(\vec{A}) \leq \ub$. 
Given a dataset $D$, a trivial choice for the bounds
that are valid on all tuples in $D$ 
is: $\lb = \min(F(D))$ and $\ub = \max(F(D))$. However, this choice is very
sensitive to noise: adding a single atypical tuple to $D$ can produce very
different \invariants.
Instead, we use a more robust choice as follows:
\begin{align*}
    \lb = \avg{F(D)} - C \cdot \stddev{F(D)}, \;\;\;\;
    \ub = \avg{F(D)} + C \cdot \stddev{F(D)}
\end{align*}
\looseness-1
Here, $\avg{F(D)}$ and $\stddev{F(D)}$ denote the mean and standard deviation
of the values in $F(D)$, respectively, and $C$ is some positive constant.
With these bounds, $\semq{\phi}(t) = 0$ implies that 
        $F(t)$ is within $C \times \stddev{F(D)}$ from the mean $\avg{F(D)}$. 
In our experiments, we set $C = 4$, which ensures that in expectation, very few
tuples in $D$ will violate the \invariant for many distributions of the values
in $F(D)$. Specifically, if $F(D)$ follows a normal distribution, then
$99.99\%$ of the population is expected to lie within $4$ standard deviations
from mean. \reviseone{Note that we make no assumption on the original data
distribution of each attribute.}

Setting the bounds $\lb$ and $\ub$ as $C\cdot\stddev{F(D)}$-away from the mean,
and the scaling factor $\alpha$ as $\frac{1}{\stddev{F(D)}}$, guarantees the
following property for our quantitative semantics:
\begin{lemma}\label{lemma:helper}
    Let $D$ be a dataset and let 
    $\phi_k$ be $\lb_k \leq F_k(\vec{A})\leq \ub_k$ for $k=1,2$.
    Then, for any tuple $t$, if 
    $\frac{|F_1(t) - \avg{F_1(D)}|}{\stddev{F_1(D)}} \geq \frac{|F_2(t) - \avg{F_2(D)}|}{\stddev{F_2(D)}}$,
    then $\semq{\phi_1}(t) \geq \semq{\phi_2}(t)$.
\end{lemma}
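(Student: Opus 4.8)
The plan is to show that, under the synthesized bounds and scaling factor, the violation score $\semq{\phi_k}(t)$ depends on the tuple $t$ only through the standardized deviation $z_k := |F_k(t) - \avg{F_k(D)}|/\stddev{F_k(D)}$, which is exactly the quantity compared in the hypothesis, and then to conclude by monotonicity. Writing $\mu_k := \avg{F_k(D)}$ and $\sigma_k := \stddev{F_k(D)}$ (both of which are positive, since otherwise the ratios in the hypothesis would be undefined), recall from Section~\ref{synth-bounds} that the bounds are $\lb_k = \mu_k - C\sigma_k$ and $\ub_k = \mu_k + C\sigma_k$, and that the scaling factor is $\alpha_k = 1/\sigma_k$.

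First I would unfold the quantitative semantics of a simple \invariant and substitute these bounds. Setting $d_k := F_k(t) - \mu_k$, the two signed slack terms become $F_k(t) - \ub_k = d_k - C\sigma_k$ and $\lb_k - F_k(t) = -d_k - C\sigma_k$. Since $\max(d_k - C\sigma_k,\, -d_k - C\sigma_k) = |d_k| - C\sigma_k$, the inner expression collapses to $\max(0,\, F_k(t)-\ub_k,\, \lb_k - F_k(t)) = \max(0,\, |d_k| - C\sigma_k)$. This is the only algebraic manipulation, and it is routine.

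Next I would multiply through by $\alpha_k = 1/\sigma_k > 0$, which distributes inside the maximum to give $\alpha_k \cdot \max(0,\, |d_k| - C\sigma_k) = \max(0,\, z_k - C)$, where $z_k = |d_k|/\sigma_k$ is precisely the standardized deviation appearing in the hypothesis. Hence $\semq{\phi_k}(t) = \eta(\max(0,\, z_k - C))$, i.e., the score is a function of $z_k$ alone.

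Finally, the map $z \mapsto \eta(\max(0,\, z - C))$ is a composition of nondecreasing functions---the affine shift $z - C$, the clamp $\max(0,\cdot)$, and the monotone normalizer $\eta(z) = 1 - e^{-z}$---and is therefore itself nondecreasing in $z$. The hypothesis $z_1 \geq z_2$ then yields $\semq{\phi_1}(t) = \eta(\max(0,\, z_1 - C)) \geq \eta(\max(0,\, z_2 - C)) = \semq{\phi_2}(t)$, as required. I do not expect any genuine obstacle here: the entire content of the lemma is the observation that the chosen bounds and scaling factor make the violation score depend on the standardized deviation only, so the comparison assumed in the hypothesis transfers directly to the scores.
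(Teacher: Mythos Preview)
Your proposal is correct and follows the same approach as the paper. The paper's own proof is a one-sentence sketch---it simply asserts that $\semq{\phi_k}(t)$ is a monotonically non-decreasing function of $\frac{|F_k(t)-\avg{F_k(D)}|}{\stddev{F_k(D)}}$ because $\eta$ is monotonically increasing---and you have fleshed out exactly that computation in full detail.
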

This means that larger deviation from the mean (proportionally to
the standard deviation) results in higher degree of violation under our
semantics. The proof follows from the fact that the normalization function
$\eta(.)$ is monotonically increasing, and hence, $\semq{\phi_k}(t)$ is a
monotonically non-decreasing function of $\frac{|F_k(t) -
\avg{F_k(D)}|}{\stddev{F_k(D)}}$.

\subsubsection{Principle for Synthesizing \Views}\label{view-synthesis}

\looseness-1 
We start by investigating what makes a \invariant more effective than others.
An effective \invariant (1)~should not overfit the data, but rather generalize
by capturing the properties of the data, and (2)~should not underfit the data,
because it would be too permissive and fail to identify deviations effectively.
Our flexible bounds (Section~\ref{synth-bounds}) serve to avoid overfitting. In
this section, we focus on identifying the principles that help us avoid
underfitting. We first describe the key technical ideas for characterizing
effective \views through example and then proceed to formalization.
 
\begin{example}\label{ex:badproj}
	 \looseness-1 Let $D$ be a dataset of three tuples
	 \{(1,1.1),(2,1.7),(3,3.2)\} with two attributes $X$ and $Y$. Consider two
	 arbitrary \views: $X$ and $Y$. For $X$: $\mu(X(D)) = 2$ and $\sigma(X(D)) =
	 0.8$. So, bounds for its \di are: $\lb = 2 - 4 \times 0.8 = -1.2$ and $\ub =
	 2 + 4 \times 0.8 = 5.2$. This gives us the \di: $-1.2 \le X \le 5.2$.
	 Similarly, for $Y$, we get the \di: $-1.6 \le Y \le 5.6$.
	 Fig.~\ref{subfig:views1} shows the conformance zone (clear region) defined by
	 these two \dis. The shaded region depicts non-conformance zone. The
	 conformance zone is large and too permissive: it allows many atypical tuples
	 with respect to $D$, such as $(0,4)$ and $(4, 0)$.
\end{example}

A natural question arises: are there other \views that can better characterize
conformance with respect to the tuples in $D$? The answer is yes and next we
show another pair of \views that shrink the conformance zone significantly.

\begin{example}\label{ex:goodproj} 
	 In Fig.~\ref{subfig:views3}, the clear region is defined by the \dis $-0.8
	 \le X - Y \le 0.8$ and $-2.8 \le X + Y \le 10.8$, over \views $X-Y$ and
	 $X+Y$, respectively. The region is indeed much smaller than the one in
	 Fig.~\ref{subfig:views1} and allows fewer atypical tuples.
\end{example}

How can we derive \view $X-Y$ from the \views $X$ and $Y$, given $D$? Note that
$X$ and $Y$ are highly correlated in $D$. In Lemma~\ref{LEMMA:MAIN}, we show
that two highly-correlated \views can be linearly combined to construct another
\view with lower standard deviation that generates a \emph{stronger} \invariant.
We proceed to formalize stronger \invariant---which defines whether a
\invariant is more effective than another in quantifying violation---and
\emph{incongruous} tuples---which help us estimate the subset of the data
domain for which a \invariant is stronger than the others.

\begin{definition}[Stronger \invariant]\label{def:stronger}
	\begin{sloppypar}
A \di $\phi_1$ is stronger
than another \di $\phi_2$ on a subset $H \subseteq \DDom^m$ if
$\forall t \in H, \; \semq{\phi_1}(t) \geq \semq{\phi_2}(t)$.
	\end{sloppypar}
\end{definition}

Given a dataset $D\subseteq \DDom^m$ and a \view $F$, for any tuple $t$, let
$\delF{F}{t} = F(t) - \avg{F(D)}$. For \views $F_1$ and $F_2$, the correlation
coefficient $\rho_{F_1,F_2}$ (over $D$) is defined as
$\frac{\frac{1}{|D|}\sum_{t\in D}
\delF{F_1}{t}\delF{F_2}{t}}{\stddev{F_1(D)}\stddev{F_2(D)}}$.
\begin{definition}[Incongruous tuple]	
A tuple $t$ is {\em{incongruous}} w.r.t.\ a
\view pair $\!\langle F_1, F_2\rangle\!$ on $D$ if: 
$\delF{F_1}{t}\cdot\delF{F_2}{t}\cdot\rho_{F_1,F_2}<0$.
\end{definition}

\begin{figure}
	\centering
	\resizebox{0.8\columnwidth}{!}{
	\begin{subfigure}{.18\textwidth}
	  \includegraphics[width=1\linewidth]{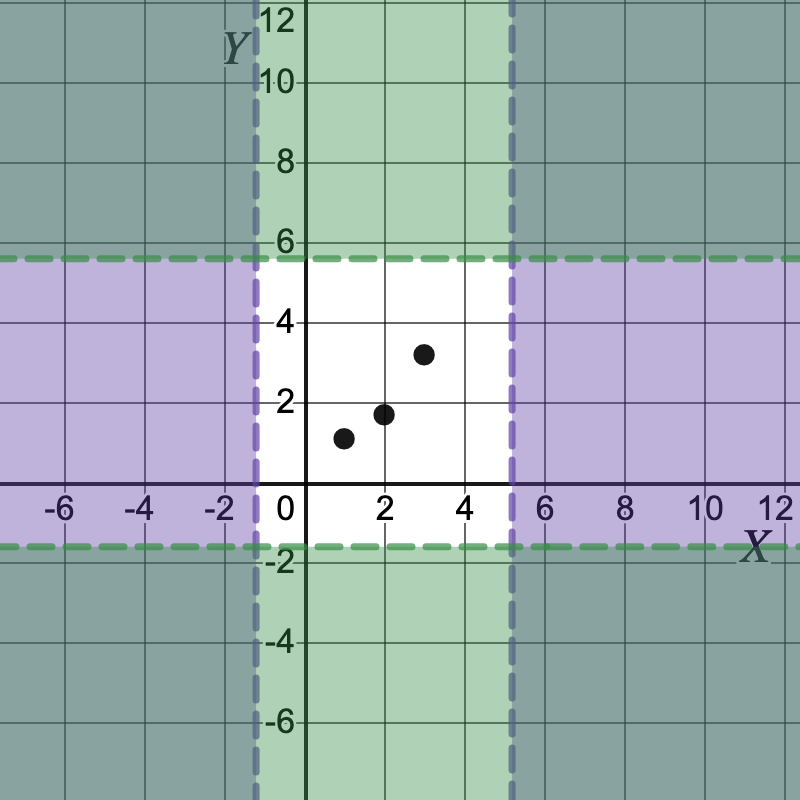}
	  \vspace{-6mm}
	  \caption{}
	  \label{subfig:views1}
	\end{subfigure}%
	\hspace{5mm}
	\begin{subfigure}{.18\textwidth}
	  \includegraphics[width=1\linewidth]{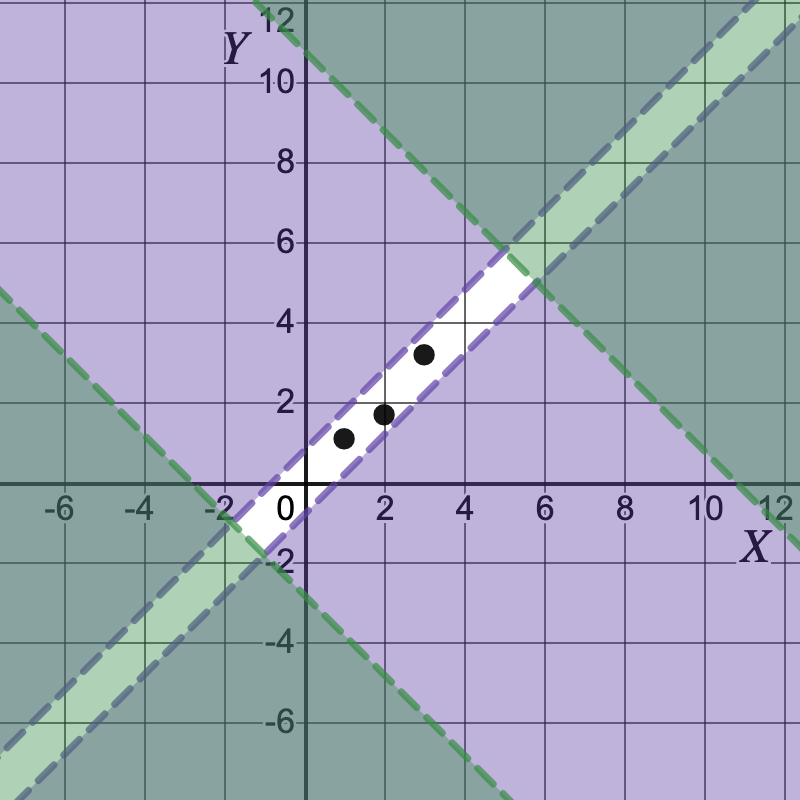}
	  \vspace{-6mm}
	  \caption{}
	  \label{subfig:views3}
	\end{subfigure}
	}
	\vspace{-4mm}
        \caption{\small
		\looseness-1
		Clear and shaded regions depict conformance and non-conformance zones, respectively.
		(a)~Correlated projections $X$ and $Y$ yield \dis forming a large conformance zone, 
		(b)~Uncorrelated (orthogonal) projections $X-Y$ and $X+Y$ yield \dis forming a smaller conformance zone.
		}
	\label{fig:views}
	\vspace{-3mm}
\end{figure}

Informally, an incongruous tuple for a pair of \views does not follow the
general trend of correlation between the \view pair. For example, if $F_1$ and
$F_2$ are positively correlated ($\rho_{F_1,F_2}>0$), an incongruous tuple $t$
deviates in opposite ways from the mean of each \view
($\delF{F_1}{t}\cdot\delF{F_2}{t}<0$). Our goal is to find \views that yield a
conformance zone with very few incongruous tuples.

\begin{example} 
	In Example~\ref{ex:badproj}, $X$ and $Y$ are positively
correlated with $\rho_{X,Y} \approx 1$. The tuple
$t=(0,4)$ is incongruous w.r.t. $\langle X, Y \rangle$, because $X(t) = 0 <
\mu(X(D)) = 2$, whereas $Y(t) = 4 > \mu(Y(D)) = 2$. Intuitively, the
incongruous tuples do not behave like the tuples in $D$ when viewed through the
\views $X$ and $Y$. Note that the narrow conformance zone of
Fig.~\ref{subfig:views3} no longer contains the incongruous tuple $(0, 4)$.
In fact, the conformance zone defined by the \dis derived from \views $X - Y$
and $X + Y$ are free from a vast majority of the incongruous tuples.
\end{example}

We proceed to state Lemma~\ref{LEMMA:MAIN}, which informally says that: any two
highly-correlated \views can be linearly combined to construct a new \view to
obtain a stronger \invariant. We write $\phi_F$ to denote the \di $\lb \le
F(\vec{A}) \le \ub$, synthesized from $F$. (All proofs are in \appOrTechRep.)

\begin{lemma}\label{LEMMA:MAIN}
    Let $D$ be a dataset and $F_1,F_2$ be two \views on $D$ s.t.\ $|\rho_{F_1,F_2}| \geq \frac{1}{2}$.
    Then, $\exists \beta_1, \beta_2 \in \Real$ s.t.\ $\beta_1^2 + \beta_2^2 = 1$ and for the new \view
    $F = \beta_1 F_1 + \beta_2 F_2$:
	 \begin{enumerate}[label=(\arabic*)]
	     \item $\stddev{F(D)} < \stddev{F_1(D)}$ and $\stddev{F(D)} < \stddev{F_2(D)}$, and
         \item $\phi_F$ is stronger than both $\phi_{F_1}$ and $\phi_{F_2}$
     on the set of tuples that are incongruous w.r.t. $\!\langle F_1,
     F_2\rangle$.
	 \end{enumerate}
\end{lemma}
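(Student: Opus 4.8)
The plan is to work entirely in the two-dimensional space of deviations and to reduce both claims to the single scalar quantity $\stddev{F(D)}$ together with Lemma~\ref{lemma:helper}. First I would use linearity of the mean to write, for every tuple $t$, $\delF{F}{t} = \beta_1\,\delF{F_1}{t} + \beta_2\,\delF{F_2}{t}$, and then, writing $\sigma_i = \stddev{F_i(D)}$ and $\rho = \rho_{F_1,F_2}$ and expanding with the definition of the correlation coefficient,
\begin{equation*}
\stddev{F(D)}^2 = \beta_1^2\sigma_1^2 + \beta_2^2\sigma_2^2 + 2\beta_1\beta_2\,\rho\,\sigma_1\sigma_2 .
\end{equation*}
This is the only object coupling the two parts of the lemma, so the whole task becomes: choose one unit direction $(\beta_1,\beta_2)$ making this variance small enough for~(1) while keeping each $|\beta_i|$ large enough for~(2).

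For part~(2) I would invoke Lemma~\ref{lemma:helper}: it suffices to show that the normalized deviation $\tfrac{|\delF{F}{t}|}{\stddev{F(D)}}$ dominates both $\tfrac{|\delF{F_i}{t}|}{\sigma_i}$ on every incongruous tuple. The key structural step is to restrict attention to directions with $\beta_1\beta_2\,\rho < 0$. For such a direction an incongruous tuple, one with $\delF{F_1}{t}\,\delF{F_2}{t}\,\rho < 0$, forces the two summands $\beta_1\delF{F_1}{t}$ and $\beta_2\delF{F_2}{t}$ to share the same sign, so they reinforce:
\begin{equation*}
|\delF{F}{t}| = |\beta_1|\,|\delF{F_1}{t}| + |\beta_2|\,|\delF{F_2}{t}| .
\end{equation*}
Dropping one nonnegative summand, the desired domination $\tfrac{|\delF{F}{t}|}{\stddev{F(D)}} \ge \tfrac{|\delF{F_i}{t}|}{\sigma_i}$ then reduces to the two \emph{scalar} conditions $\sigma_i|\beta_i| \ge \stddev{F(D)}$, equivalently $\sigma_i^2\beta_i^2 \ge \stddev{F(D)}^2$, for $i=1,2$.

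Next I would observe that these same two conditions hand us part~(1) for free: since the sign condition forces $\beta_1,\beta_2\neq 0$ and hence $\beta_i^2<1$, we get $\stddev{F(D)}^2 \le \sigma_i^2\beta_i^2 < \sigma_i^2$ for both $i$, which is the strict variance drop. So the entire lemma collapses to an \emph{existence} question: is there a direction with $\beta_1\beta_2\rho<0$ satisfying $\sigma_i^2\beta_i^2 \ge \stddev{F(D)}^2$ for both $i$? Substituting $u=\beta_1\sigma_1$, $v=\beta_2\sigma_2$ turns the variance into $u^2+v^2+2\rho uv$ and the two conditions into $v(v+2\rho u)\le 0$ and $u(u+2\rho v)\le 0$; in the reinforcing regime these are exactly $\tfrac{|u|}{2|\rho|} \le |v| \le 2|\rho|\,|u|$. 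This interval is nonempty precisely when $4\rho^2 \ge 1$, i.e.\ $|\rho|\ge\tfrac12$ — the hypothesis. I would pick any admissible $|v|$, recover $(\beta_1,\beta_2)$ with the sign dictated by $\beta_1\beta_2\rho<0$, and rescale to $\beta_1^2+\beta_2^2=1$; the inequalities are invariant under positive rescaling, so nothing is lost.

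The main obstacle, and the place to be careful, is this final feasibility count. It is tempting to simply take the variance-minimizing (smaller-eigenvector) direction, but that is the wrong move: that direction can over-minimize the variance and violate $\sigma_i^2\beta_i^2\ge\stddev{F(D)}^2$ for the attribute with smaller spread when $\sigma_1$ and $\sigma_2$ differ greatly (for instance $\sigma_1=1,\sigma_2=10,\rho=\tfrac12$). The correct argument treats the two magnitude constraints and the sign constraint jointly and shows their simultaneous feasibility reduces to the single threshold $|\rho|\ge\tfrac12$; verifying that this threshold is tight — and that the resulting direction still yields strict inequalities in~(1) even when~(2) holds only with equality — is exactly where the hypothesis $|\rho_{F_1,F_2}|\ge\tfrac12$ is consumed.
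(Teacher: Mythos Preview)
Your argument is correct and shares the paper's core mechanics: the same variance expansion, the same sign analysis showing that on incongruous tuples the two summands $\beta_1\delF{F_1}{t}$ and $\beta_2\delF{F_2}{t}$ reinforce, and the same appeal to Lemma~\ref{lemma:helper}. The difference is in how existence is established. The paper simply \emph{writes down} one direction---the one with $\mathrm{sign}(\rho)\,\beta_1\sigma_1 + \beta_2\sigma_2 = 0$, i.e.\ $|\beta_1\sigma_1|=|\beta_2\sigma_2|$---and then computes $\stddev{F(D)} = \sqrt{2(1-|\rho|)}\,|\beta_i|\sigma_i$ directly, from which both claims follow in one line once $|\rho|\ge\tfrac12$. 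You instead characterize the entire feasible region in $(u,v)$-coordinates as $\tfrac{|u|}{2|\rho|}\le|v|\le 2|\rho|\,|u|$ and read off that it is nonempty exactly when $|\rho|\ge\tfrac12$; the paper's choice is precisely the midpoint $|u|=|v|$ of your interval. Your route buys a clean explanation of \emph{why} $\tfrac12$ is the threshold (the interval degenerates to a point there) and your side remark that the variance-minimizing eigenvector direction can fail the magnitude constraint is a useful caution not present in the paper; the paper's route buys a shorter, fully explicit proof with no feasibility bookkeeping.
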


\noindent We now extend the result to multiple \views in Theorem~\ref{THM:MAIN}.

\begin{theorem}[Low Standard Deviation \Invariants]\label{THM:MAIN}
    Given a dataset $D$, let $\mathcal{F} {=} \{F_1,\ldots,F_K\}$ denote a set of \views on $D$
    s.t.\ $\exists F_i,F_j{ \in }\mathcal{F}$ with $|\rho_{F_i,F_j}| {\geq} \frac{1}{2}$.
    Then, there exist a nonempty subset $I {\subseteq} \{1,\ldots,K\}$ and a \view
    $F {=} \sum_{k\in I} \beta_k F_k$, where $\beta_k {\in} \Real$ s.t.
    \begin{enumerate}[label=(\arabic*)]
        \item $\forall k\in I$, $\stddev{F(D)} < \stddev{F_k(D)}$,
        \item $\forall k\in I$, $\phi_F$ is stronger than $\phi_{F_k}$ on the subset $H$, where\\ 
	$H {=} \{t \mid 
		\forall {k{\in} I} (\beta_k \delF{F_k}{t} {\geq} 0) \vee 
		\forall {k{\in} I} (\beta_k \delF{F_k}{t} {\leq} 0)\}$, and
        \item $\forall k\not\in I$, $|\rho_{F,F_k}| < \frac{1}{2}$.
    \end{enumerate}
\end{theorem}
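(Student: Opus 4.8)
The plan is to prove Theorem~\ref{THM:MAIN} by repeatedly applying the pairwise combination of Lemma~\ref{LEMMA:MAIN} and tracking how the variance-reduction and stronger-than guarantees accumulate. Throughout, write $\phi \succeq \phi'$ to mean ``$\phi$ is stronger than $\phi'$'' (Definition~\ref{def:stronger}); note that $\succeq$ is transitive on any fixed subset, since there it is just the pointwise inequality of the quantitative semantics. I would maintain a \emph{working} projection $F = \sum_{k\in I}\beta_k F_k$ indexed by a growing set $I$, and for each $I$ let $H_I$ denote the associated subset from the theorem, i.e.\ the tuples on which all terms $\beta_k\,\Delta F_k(t)$ share one sign. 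By hypothesis some pair $F_i,F_j$ has $|\rho_{F_i,F_j}|\ge\frac12$; applying Lemma~\ref{LEMMA:MAIN} to it initializes $I=\{i,j\}$ and $F$, establishing (1) and (2) for the base case (here $H_{\{i,j\}}$ coincides, up to the boundary where a product vanishes, with the incongruous set of $\langle F_i,F_j\rangle$). The loop is: while some $k_0\notin I$ has $|\rho_{F,F_{k_0}}|\ge\frac12$, apply Lemma~\ref{LEMMA:MAIN} to $\langle F,F_{k_0}\rangle$ to get $F'=\gamma F+\delta F_{k_0}$, and set $I'=I\cup\{k_0\}$ with coefficients $\beta'_k=\gamma\beta_k$ for $k\in I$ and $\beta'_{k_0}=\delta$. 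Each iteration enlarges $I$, so the loop terminates; on exit every $k\notin I$ has $|\rho_{F,F_k}|<\frac12$, which is exactly condition (3).

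Condition (1) is immediate by transitivity of the strict inequalities. Each step gives $\sigma(F'(D))<\sigma(F(D))$ and $\sigma(F'(D))<\sigma(F_{k_0}(D))$ from Lemma~\ref{LEMMA:MAIN}(1); chaining the first with the inductive hypothesis $\sigma(F(D))<\sigma(F_k(D))$ for $k\in I$ yields $\sigma(F'(D))<\sigma(F_k(D))$ for all $k\in I'$.

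The crux is condition (2), which I would prove by induction using two set inclusions plus transitivity of $\succeq$. Since means are linear, $\Delta F(t)=\sum_{k\in I}\beta_k\,\Delta F_k(t)$. First, on $H_{I'}$ the terms $\beta'_k\,\Delta F_k(t)=\gamma\beta_k\,\Delta F_k(t)$ for $k\in I$ all share one sign; dividing by the fixed scalar $\gamma$ shows the $\beta_k\,\Delta F_k(t)$ also all share one sign, so $H_{I'}\subseteq H_I$. Second, I would show $H_{I'}$ lies in the incongruous set of $\langle F,F_{k_0}\rangle$ on which Lemma~\ref{LEMMA:MAIN}(2) makes $\phi_{F'}$ stronger than both $\phi_F$ and $\phi_{F_{k_0}}$: on $H_{I'}$ the quantity $\gamma\,\Delta F(t)=\sum_{k\in I}\beta'_k\,\Delta F_k(t)$ and the term $\delta\,\Delta F_{k_0}(t)$ carry the same sign, and since Lemma~\ref{LEMMA:MAIN}(1)'s variance drop forces $\mathrm{sign}(\gamma\delta)=-\mathrm{sign}(\rho_{F,F_{k_0}})$, this common-sign condition is exactly the incongruous condition $\Delta F(t)\,\Delta F_{k_0}(t)\,\rho_{F,F_{k_0}}<0$ (up to boundary). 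Given these inclusions, for $k\in I$ I chain $\phi_{F'}\succeq\phi_F$ (on $H_{I'}$) with $\phi_F\succeq\phi_{F_k}$ (valid on $H_I\supseteq H_{I'}$ by induction) to conclude $\phi_{F'}\succeq\phi_{F_k}$ on $H_{I'}$; for $k=k_0$ the second inclusion gives $\phi_{F'}\succeq\phi_{F_{k_0}}$ directly. Thus the invariant ``$\phi_F\succeq\phi_{F_k}$ on $H_I$ for all $k\in I$'' is preserved, and its final form is condition (2).

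The main obstacle is the sign bookkeeping underlying condition (2): establishing $H_{I'}\subseteq H_I$ and $H_{I'}\subseteq\mathrm{incongruous}(F,F_{k_0})$ requires converting the common-sign description of $H_{I'}$ into the $\rho$-based incongruous condition of Lemma~\ref{LEMMA:MAIN}, which hinges on the observation that part~(1) of that lemma already pins down $\mathrm{sign}(\gamma\delta)=-\mathrm{sign}(\rho_{F,F_{k_0}})$ (a nonnegative cross-term would keep the variance at or above $\min(\sigma(F(D))^{2},\sigma(F_{k_0}(D))^{2})$), and on checking that scaling a whole family of signed quantities by the fixed scalar $\gamma$ preserves ``all of one sign.'' Once these sign facts are in place, termination, condition~(3), and condition~(1) are routine, and the only remaining care is the measure-zero boundary where some $\Delta F_k(t)$ vanishes, on which the stronger-than inequality holds trivially.
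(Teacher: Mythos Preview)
Your proposal is correct and follows essentially the same iterative scheme as the paper's proof: initialize $I=\{i,j\}$ via Lemma~\ref{LEMMA:MAIN}, then repeatedly absorb any $F_{k_0}$ with $|\rho_{F,F_{k_0}}|\ge\tfrac12$ until none remain, at which point (3) holds by construction and (1) by chaining the variance inequalities. The paper simply asserts that the final $F$ and $I$ ``can easily be seen'' to satisfy the claims; your careful sign bookkeeping for condition~(2)---showing $H_{I'}\subseteq H_I$ and $H_{I'}\subseteq\mathrm{incongruous}(F,F_{k_0})$ via the observation that Lemma~\ref{LEMMA:MAIN}(1) forces $\mathrm{sign}(\gamma\delta)=-\mathrm{sign}(\rho_{F,F_{k_0}})$---fills in exactly the detail the paper omits.
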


\looseness-1 The theorem establishes that to detect violations for tuples
in~$H$: (1)~\views with low standard deviations define stronger \invariants
(and are thus preferable), and (2)~a set of \invariants with highly-correlated
\views is suboptimal (as they can be linearly combined to generate stronger
\invariants).
Note that $H$ is a conservative estimate for the set of tuples where $\phi_F$
is stronger than each $\phi_{F_k}$; there exist tuples outside $H$ for which
$\phi_F$ is stronger.

\revisethree{ 
\smallskip

\paragraph{Bounded \views vs.\ convex polytope.} \label{polytope} 
Bounded projections (Example~\ref{ex:goodproj}) relate to the
computation of convex polytopes~\cite{turchini2017convex}. A convex polytope
is the smallest possible convex set of tuples that includes all the training
tuples; then any tuple falling outside the polytope would be considered
non-conforming. The problem with using convex polytopes is that they
overfit to the training tuples and is extremely sensitive to outliers. For
example, consider a training dataset over attributes $X$ and $Y$: $\{(1, 10),
(2,20), (3, 30)\}$. A convex polytope in this case would be a line
segment---starting at $(1, 10)$ and ending at $(3, 30)$---and the tuple $(5,
50)$ will fall outside it.

Unlike convex polytope---whose goal is to find the smallest possible
``inclusion zone'' that includes all training tuples---our goal is to find a
``conformance zone'' that reflects the \emph{trend} of the training tuples.
This is inspired from the fact that ML models aim to generalize to tuples
outside training set; thus, \dis also need to capture trends and avoid
overfitting. Our definition of good \dis (low variance and low mutual
correlation) balances overfitting and overgeneralization. Therefore, beyond the
minimal bounding hyper-box (i.e., a convex polytope) over the training tuples,
we take into consideration the distribution (variance and concentration) of the
\emph{interaction} among attributes (trends). For the above example, \dis will
model the interaction trend: $Y = 10X$ allowing the tuple $(5, 50)$, which
follows the same trend as the training tuples.
}

\subsubsection{PCA-inspired \View Derivation}\label{candidatesec}

\setlength{\textfloatsep}{0mm}
\begin{algorithm}[t]
	\caption{Procedure to generate linear \views.}
	\label{fig:pca}
	\LinesNumbered
	\small{ 
    	\SetKwInOut{Inputs}{Inputs}
		\SetKwInOut{Output}{Output}
		\Inputs{
			A dataset $D \subset \DDom^m$
		}
		\Output{
                    A set $\{(F_1,\gamma_1),\ldots,(F_K,\gamma_K)\}$ of \views and importance factors
    	}
                $D_N  \gets D {\mbox{ after dropping non-numerical attributes}}$  \label{algo2:line1}\\ 
                $D'_N  \gets [\vec{1} ; D_N]$ \label{algo2:line2} \\
                $\{\vec{w}_1,\ldots,\vec{w}_K\} \gets {\mbox{ eigenvectors of }} {D_N'}^T D_N'$ \label{algo2:line31} \\
				\ForEach{$1 \le k \le K$}{
					$\vec{w}_k' \gets {\mbox{ $\vec{w}_k$ with first element removed}}$\label{algo2:line32} \\
                                        $F_k \gets \lambda\vec{A}: \frac{\vec{A}^T\vec{w}_k'}{||\vec{w}_k'||}$\label{algo2:line33} \\
					$\gamma_k \gets \frac{1}{\log(2+\stddev{F_k(D_N)})}$ \label{algo2:line34}
				}
                \Return $\{(F_1,\frac{\gamma_1}{Z}),\ldots,(F_K,\frac{\gamma_K}{Z})\}$, where
                $Z = \sum_k \gamma_k$\label{algo2:line35}
	}
\end{algorithm}
\setlength{\textfloatsep}{5pt}
 
Theorem~\ref{THM:MAIN} sets the requirements for good \views (see
also~\cite{tveten2019principal, tveten2019tailored,
DBLP:journals/tnn/KunchevaF14} that make similar observations in different
ways). It indicates that we can start with any arbitrary \views and then
iteratively improve them. However, we can get the desired set of best \views in
one shot using an algorithm inspired by principal component analysis (PCA).
\revisetwo{PCA relies on computing eigenvectors. \label{pcadetails} There exist
different algorithms for computing eigenvectors (from the infinite space of
possible vectors). The general mechanism involves applying numerical approaches
to iteratively converge to the eigenvectors (up to a desired precision) as no
analytical solution exists in general.} Our algorithm returns \views that
correspond to the principal components of a
slightly modified version of the given dataset.
Algorithm~\ref{fig:pca} details our approach for discovering \views for
constructing \dis:

\smallskip

\begin{description}
    \looseness-1
    \item[Line~\ref{algo2:line1}] Drop all non-numerical attributes from $D$ to
    get the numeric dataset $D_N$. This is necessary because PCA only applies
    to numerical values. Instead of dropping, one can also consider embedding
    techniques to convert non-numerical attributes to numerical ones.

    \item[Line~\ref{algo2:line2}] Add a new column to $D_N$ that consists of
    the constant $1$, to obtain the modified dataset $D_N' := [\vec{1} ; D_N]$,
    where $\vec{1}$ denotes the column vector with $1$ everywhere. We do this
    transformation to capture the additive constant within principal
    components, which ensures that the approach works even for unnormalized
    data.

    \item[Line~\ref{algo2:line31}] Compute $K$ eigenvectors of the square
    matrix ${D_N'}^{T}D_N'$, where $K$ denotes the number of columns in $D'_N$.
	These eigenvectors provide coefficients to construct \views.

    \item[Lines~\ref{algo2:line32}--\ref{algo2:line33}] Remove the first
    element (coefficient for the newly added constant column) of all
    eigenvectors and normalize them to generate \views. Note that we no longer
    need the constant element of the eigenvectors since we can appropriately
    adjust the bounds, $\lb$ and $\ub$, for each \view by evaluating it on
    $D_N$.

    \item[Line~\ref{algo2:line34}] Compute importance factor for each \view.
    Since \views with smaller standard deviations are more discerning
    (stronger), as discussed in Section~\ref{quant-sem}, we assign each \view
    an importance factor ($\gamma$) that is inversely proportional to its
    standard deviation over $D_N$.

    \item[Line~\ref{algo2:line35}] Return the linear \views with corresponding
    normalized importance factors.

\end{description}

\smallskip

We now claim that the \views returned by Algorithm~\ref{fig:pca} include the
\view with minimum standard deviation and 
the correlation between any two \views 
is 0. This indicates that we cannot further improve
the \views, and thus they are optimal.

\begin{theorem}[Correctness of Algorithm~\ref{fig:pca}]\label{THM:ALGO2-CORRECTNESS}
    Given a numerical dataset $D$ over the schema $\mathcal{R}$, let
    $\mathcal{F} = \{F_1,F_2,\ldots, F_K\}$ be the set of linear \views
    returned by Algorithm~\ref{fig:pca}. Let $\sigma^* =
    \min_k^{K}\stddev{F_k(D)}$. If $\mu(A_k(D)) = 0$ for all attribute $A_k$ in
    $\mathcal{R}$, then,\footnote{When the condition $\forall A_k \;
    \mu(A_k(D)) = 0$ does not hold, slightly modified variants of the claim
    hold. However, by normalizing $D$ (i.e., by subtracting attribute mean
    $\mu(A_k(D))$ from each $A_k(D)$), it is always possible to satisfy the
    condition.}
	\begin{enumerate}[label=(\arabic*)]
            \item $\sigma^* \leq \stddev{F(D)}$ $\forall F=\vec{A}^T\vec{w}$ where $||\vec{w}||\geq 1$, and
    \item $\forall F_j, F_k \in \mathcal{F}$ s.t.\ $F_j\neq F_k$, 
        $\rho_{F_j, F_k} = 0$.
	\end{enumerate}
\end{theorem}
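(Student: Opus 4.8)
The plan is to exploit the mean-centering hypothesis $\avg{A_k(D)}=0$ to collapse the eigen-decomposition of the augmented matrix ${D'_N}^T D'_N$ from Algorithm~\ref{fig:pca} into an ordinary principal component analysis of the centered data, after which both claims reduce to standard facts about the covariance matrix. Let $n=|D|$, write $D_N$ for the $n\times m$ matrix of centered numerical values (rows $\vec{d}_i$, entries $d_{ij}$), and let $M=\tfrac{1}{n}D_N^T D_N$ be its covariance matrix. The decisive observation is that centering makes the augmented Gram matrix block-diagonal: the top-left entry is $\vec{1}^T\vec{1}=n$, each off-diagonal entry is $\sum_i d_{ij}=n\,\avg{A_j(D)}=0$, and the bottom-right block is $D_N^T D_N$, so
\[
{D'_N}^T D'_N=\begin{pmatrix} n & \vec{0}^T \\ \vec{0} & D_N^T D_N \end{pmatrix}.
\]
Hence every eigenvector is either $e_1=(1,\vec{0})$ (eigenvalue $n$), whose \view is degenerate since dropping the first coordinate on line~\ref{algo2:line2}'s companion step (line~\ref{algo2:line32}) yields $\vec{0}$ and is discarded, or has the form $(0,\vec{v}_k)$ with $\vec{v}_k$ a unit eigenvector of $D_N^T D_N$. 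For the latter, lines~\ref{algo2:line32}--\ref{algo2:line33} give $\vec{w}'_k=\vec{v}_k$ with $\|\vec{v}_k\|=1$, so the returned \views are exactly $F_k(\vec{A})=\vec{A}^T\vec{v}_k$, the principal directions of the centered data.

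Next I would record the single identity driving both parts. Centering gives $\avg{F(D)}=\bar{\vec{d}}^T\vec{w}=0$ for any $F=\vec{A}^T\vec{w}$, hence $\stddev{F(D)}^2=\tfrac{1}{n}\sum_i(\vec{d}_i^T\vec{w})^2=\vec{w}^T M\vec{w}$. In particular, since $M\vec{v}_k=\lambda_k\vec{v}_k$ and $\|\vec{v}_k\|=1$, we get $\stddev{F_k(D)}^2=\lambda_k$. Because the non-degenerate \views range over a full orthonormal eigenbasis of $M$, the minimum over $k$ is the global smallest eigenvalue, so $\sigma^*=\sqrt{\lambda_{\min}}$.

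For part~(1), I would invoke the Rayleigh-quotient bound for the symmetric positive-semidefinite $M$: for any $F=\vec{A}^T\vec{w}$ with $\|\vec{w}\|\ge 1$,
\[
\stddev{F(D)}^2=\vec{w}^T M\vec{w}\ge \lambda_{\min}\|\vec{w}\|^2\ge \lambda_{\min}=(\sigma^*)^2,
\]
so $\stddev{F(D)}\ge\sigma^*$. For part~(2), since $\avg{F_j(D)}=\avg{F_k(D)}=0$, the numerator of $\rho_{F_j,F_k}$ is the covariance $\tfrac{1}{n}\sum_t F_j(t)F_k(t)=\vec{v}_j^T M\vec{v}_k=\lambda_k\,\vec{v}_j^T\vec{v}_k$, which vanishes for $j\neq k$ because eigenvectors of the symmetric matrix $M$ may be taken orthonormal; hence $\rho_{F_j,F_k}=0$.

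The main obstacle is the very first step: recognizing that the centering hypothesis is precisely what block-diagonalizes ${D'_N}^T D'_N$ and thereby decouples the artificial constant column (added on line~\ref{algo2:line2}) from the genuine attributes. Without it, the dropped eigenvector coordinate entangles the constant direction with the data directions, the normalization by $\|\vec{w}'_k\|$ no longer recovers a unit eigenvector, and neither $\stddev{F_k(D)}^2=\lambda_k$ nor the orthogonality identity holds cleanly---this is what the footnote's ``slightly modified variants'' alludes to. The only remaining care is bookkeeping of degenerate cases: the discarded $e_1$ direction, and---should some $\lambda_k=0$---reading $\rho_{F_j,F_k}=0$ as the statement that the covariance vanishes, since the correlation is then formally $0/0$.
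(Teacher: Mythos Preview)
Your proof is correct and takes a cleaner, more direct route than the paper's. The paper works entirely in the augmented space: it keeps the constant column throughout, applies Courant--Fischer to ${D^e}^T D^e$, and tracks extended vectors $\vec{w}^e=\bigl(\begin{smallmatrix}-\mu\\\vec{w}\end{smallmatrix}\bigr)$ through a chain of facts (shift-invariance of $\sigma$, variance as $\|S\|^2-\avg{S}^2$, etc.), only at the end invoking centering to collapse the extension constants to zero. You instead front-load the centering hypothesis to observe that ${D'_N}^T D'_N$ is block-diagonal, which immediately decouples the constant direction and reduces everything to the ordinary covariance matrix $M$; both parts then become one-line Rayleigh-quotient and eigenvector-orthogonality facts. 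Your route is shorter and more transparent for the theorem as stated, while the paper's machinery is built to also yield the uncentered variants mentioned in the footnote (where the augmented eigenvectors genuinely mix the constant coordinate with the data and your block-diagonal shortcut is unavailable). One small bookkeeping point: Algorithm~\ref{fig:pca} as written does not explicitly discard the degenerate $e_1$ direction, so line~\ref{algo2:line33} would divide by $\|\vec{w}'_k\|=0$; you handle this sensibly, but it is an implicit assumption rather than something the algorithm enforces.
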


Using \views $F_1, \ldots, F_K$, and importance factors
$\gamma_1,\ldots,\gamma_K$, returned by Algorithm~\ref{fig:pca}, we generate
the simple (conjunctive) \invariant with $K$ conjuncts:
$
 \bigwedge_k  \lb_k \leq F_k(\vec{A}) \leq \ub_k
$.
We compute the bounds $\lb_k$ and $\ub_k$ following Section~\ref{synth-bounds}
and use the importance factor $\gamma_k$ for the $k^{th}$ conjunct in the
quantitative semantics.

\begin{example}\label{ex:int} Algorithm~\ref{fig:pca} finds the projection of the \di of
Example~\ref{ex:tml}, but in a different form. The actual airlines dataset has
an attribute $distance\; (DIS)$ that represents miles travelled by a flight. In
our experiments, we found the following \di\footnote{For ease of exposition, we
use $F(\vec{A}) \approx 0$ to denote $\epsilon_1 \le F(\vec{A}) \le
\epsilon_2$, where $\epsilon_i \approx 0$.} over the dataset of daytime flights:
{\small
\begin{align}\label{eq:one}
0.7 \times AT - 0.7 \times DT - 0.14 \times DUR - 0.07 \times DIS \approx 0
\end{align}
}
This \invariant is not quite interpretable by itself, but it is in fact a
linear combination of two expected and interpretable
\invariants:~\footnote{\revisetwo{Interpretability is not our explicit goal, but
we developed a tool~\cite{DBLP:conf/sigmod/FarihaTRG20} to explain causes of
non-conformance. More discussion and case studies are in \appOrTechRep.}}
{\small
\begin{align}
	AT - DT - DUR &\approx 0 \label{eq:two}\\
	DUR - 0.12 \times DIS &\approx 0 \label{eq:three}
\end{align}
}
Here, (\ref{eq:two}) is the one mentioned in Example~\ref{ex:tml} and
(\ref{eq:three}) follows from the fact that average aircraft speed is about
$500$ mph implying that it requires $0.12$ minutes per mile. $0.7$ $\times$
(\ref{eq:two}) + $0.56$ $\times$ (\ref{eq:three}) yields:
{\small
\begin{align*}
&0.7 \times (AT - DT - DUR) + 0.56 \times DUR - 0.56 \times 0.12 \times DIS \approx 0 \\
\implies & 0.7 \times AT - 0.7 \times DT - 0.14 \times DUR - 0.07 \times DIS \approx 0
\end{align*}
}
Which is exactly the \di~(\ref{eq:one}). Algorithm~\ref{fig:pca} found the
optimal \view of~(\ref{eq:one}), which is a linear combination of the \views
of~(\ref{eq:two}) and~(\ref{eq:three}). The reason is: there is a correlation
between the \views of~(\ref{eq:two}) and~(\ref{eq:three}) over the dataset
(Theorem~\ref{THM:MAIN}). One possible explanation of this correlation is:
whenever there is an error in the reported duration of a tuple, it violates
both~(\ref{eq:two}) and~(\ref{eq:three}). Due to this natural correlation,
Algorithm~\ref{fig:pca} returned the optimal \view of~(\ref{eq:one}), that
``covers'' both \views of~(\ref{eq:two}) or~(\ref{eq:three}).

\end{example}

\subsection{Compound \DIs}\label{sec:disjunctive}
The quality of our PCA-based simple linear \invariants relies on how many low
variance linear \views we are able to find on the given dataset. For many
datasets, it is possible we find very few, or even none, such linear \views. In
these cases, it is fruitful to search for compound \invariants; we first focus
on {\em{disjunctive \invariants}} (defined by $\psi_A$ in our language grammar).

PCA-based approach fails in cases where there exist different piecewise linear
trends within the data, as it will result into low-quality \invariants, with
very high variances. In such cases, partitioning the dataset and then learning
\invariants separately on each partition will result in significant improvement
of the learned \invariants. A disjunctive \invariant is a compound \invariant
of the form $\bigvee_k((A = c_k) \maxand \phi_k)$, where each $\phi_k$ is a
\invariant for a specific partition of $D$. Finding disjunctive \invariants
involves horizontally partitioning the dataset $D$ into smaller disjoint
datasets $D_1, D_2, \ldots, D_L$. Our strategy for partitioning $D$ is to use
categorical attributes with a small domain in $D$; in our implementation, we
use those attributes $A_j$ for which $|\{t.A_j | t\in D\}|\le50$. If $A_j$ is
such an attribute with values $v_1, v_2, \ldots, v_L$, we partition $D$ into
$L$ disjoint datasets $D_1, D_2,\ldots, D_L$, where $D_l = \{t \in D | t.A_j =
v_l\}$. Let $\phi_1, \phi_2, \ldots, \phi_L$ be the $L$ simple \dis we learn
for $D_1, D_2, \ldots, D_L$ using Algorithm~\ref{fig:pca}, respectively. We
compute the following disjunctive \di for $D$:\\
\indent $
 ((A_j = v_1) \maxand \phi_1) \vee 
 ((A_j = v_2) \maxand \phi_2) \vee 
\cdots \vee
 ((A_j = v_L) \maxand \phi_L)\\[-0.7em]
$

\looseness-1 We repeat this process and partition $D$ across multiple
attributes and generate a compound disjunctive \invariant for each attribute.
Then we generate the final compound conjunctive \di ($\Psi$) for $D$, which is
the conjunction of all these disjunctive \invariants. Intuitively, this final
\di forms a set of \emph{overlapping} hyper-boxes around the data tuples.

\subsection{Theoretical Analysis}\label{sec:complexity} 

\subsubsection{Runtime Complexity}\looseness-1 Computing simple \invariants
involves two computational steps: (1)~computing $X^TX$, where $X$ is an
$n\times m$ matrix with $n$ tuples (rows) and $m$ attributes (columns), which
takes $\mathcal{O}(nm^2)$ time, and (2)~computing the eigenvalues and
eigenvectors of an $m\times m$ positive definite matrix, which has complexity
$\mathcal{O}(m^3)$~\cite{DBLP:conf/stoc/PanC99}. Once we obtain the linear
\views using the above two steps, we need to compute the mean and variance of
these \views on the original dataset, which takes $\mathcal{O}(nm^2)$ time. In
summary, the overall procedure is cubic in the number of attributes and linear
in the number of tuples.
For computing disjunctive \invariants, we greedily pick attributes that take at
most $L$ (typically small) distinct values, and then run the above procedure
for simple \invariants at most $L$ times. This adds just a constant factor
overhead per attribute.

\subsubsection{Memory Complexity} The procedure can be implemented in
$\mathcal{O}(m^2)$ space. The key observation is that $X^T X$ can be computed
as $\sum_{i=1}^{n} t_i t_i^T$, where $t_i$ is the $i^{th}$ tuple in the
dataset. Thus, $X^TX$ can be computed incrementally by loading only one tuple
at a time into memory, computing $t_i t_i^T$, and then adding that to a running
sum, which can be stored in $\mathcal{O}(m^2)$ space. Note that instead of such
an incremental computation, this can also be done in an embarrassingly parallel
way where we horizontally partition the data (row-wise) and each partition is
computed in parallel.

\revisetwo{\subsubsection{Implication, Redundancy, and Minimality}\label{sec:impl}
Definition~\ref{def:stronger} gives us the notion of \emph{implication} on
\dis: for a dataset $D$, satisfying $\phi_1$ that is stronger than $\phi_2$
implies that $D$ would satisfy $\phi_2$ as well.
Lemma~\ref{LEMMA:MAIN} and Theorem~\ref{THM:MAIN} associate \emph{redundancy}
with correlation: correlated projections can be combined to construct a new
projection that makes the correlated projections redundant.
Theorem~\ref{THM:ALGO2-CORRECTNESS} shows that our PCA-based procedure finds
a non-redundant (orthogonal and uncorrelated) set of projections. For disjunctive
\invariants, it is possible to observe redundancy across partitions. However,
our quantitative semantics ensures that redundancy does not affect the
violation score.
Another notion relevant to data profiles (e.g., FDs) is \emph{minimality}. In
this work, we do not focus on finding the minimal set of \dis. Towards
achieving minimality for \dis, a future direction is to explore techniques for
optimal data partitioning. However, our approach computes only $m$ \dis for
each partition. Further, for a single tuple, only $m_N \cdot m_C$ \dis are applicable,
where $m_{N}$ and $m_C$ are the number of numerical and categorical attributes
in $D$ (i.e., $m = m_{N} + m_C$). The quantity $m_N \cdot m_C$ is upper-bounded by
$\frac{m^2}{4}$. }

\section{Trusted Machine Learning}\label{sec:di-for-tml}

\newcommand{\arrTime}{AT\xspace}
\newcommand{\depTime}{DT\xspace}
\newcommand{\duration}{DUR\xspace}

\looseness-1 In this section, we provide a theoretical justification of why
\dis are effective in identifying tuples for which learned models are likely to
make incorrect predictions. To that end, we define \emph{\nc} tuples and show
that an ``ideal'' \di provides a sound and complete mechanism to detect \nc
tuples. In Section~\ref{sec:synth-data-inv}, we showed that low-variance \views
construct strong \dis, which yield a small conformance zone. 
We now make a similar argument, but in a slightly different way: we show that
\views with zero variance give us equality constraints that are useful for
trusted machine learning. We start with an example to provide the intuition.

\begin{example}\label{ex:one} 
	 Consider the airlines dataset $D$ and assume that all tuples in $D$
	 satisfy the equality constraint $\phi := \arrTime - \depTime - \duration = 0$
	 (i.e., $\lb = \ub = 0$). Note that for equality constraint, the corresponding
	 \view\ has {\em{zero}} variance---the lowest possible variance. Now, suppose
	 that the task is to learn some function $f(\arrTime, \depTime, \duration)$.
	 If the above constraint holds for $D$, then the ML model can instead learn
	 the function $g(\arrTime, \depTime, \duration) = f(\depTime + \duration,
	 \depTime, \duration)$. $g$ will perform just as well as $f$ on $D$: in fact,
	 it will produce the same output as $f$ on $D$. If a new serving tuple $t$
	 satisfies $\phi$, then $g(t) = f(t)$, and the prediction will be correct.
	 However, if $t$ does not satisfy $\phi$, then $g(t)$ will likely be
	 significantly different from $f(t)$. Hence, violation of the \di is a strong
	 indicator of performance degradation of the learned prediction model. Note
	 that $f$ need not be a linear function: as long as $g$ is also in the class
	 of models that the learning procedure is searching over, the above argument
	 holds.
\end{example}

Based on the intuition of Example~\ref{ex:one}, we proceed to formally define
\nc tuples. We use $[D;Y]$ to denote the {\em{annotated dataset}} obtained by
appending the target attribute $Y$ to a dataset $D$, and $\coDom$ to denote
$Y$'s domain.

\begin{definition}[\Nc\ tuple]\label{def:trustworthy}
    Given a class $\CC$ of functions with signature $\DDom^m \mapsto \coDom$,
    and an annotated dataset $[D;Y] \subset (\DDom^m\times \coDom)$,
    a tuple $t \in \DDom^m$ is \nc w.r.t. $\CC$ and $[D;Y]$, if
    $\exists f, g \in \CC$ s.t. $f(D) = g(D) = Y$ but $f(t) \neq g(t)$. 
\end{definition}

Intuitively, $t$ is \nc if there exist two different predictor functions $f$
and $g$ that agree on all tuples in $D$, but disagree on $t$. Since, we can
never be sure whether the model learned $f$ or $g$, we should be cautious about
the prediction on $t$. Example~\ref{ex:one} suggests that $t$ can be \nc when
all tuples in $D$ satisfy the equality \di $f(\vec{A}) - g(\vec{A}) = 0$ but
$t$ does not. Hence, we can use the following approach for trusted machine
learning:
\begin{enumerate}
	\addtolength{\itemindent}{1cm}
	\item  Learn \dis $\Phi$ for the dataset $D$.
	\item  Declare $t$ as \nc if $t$ does not satisfy $\Phi$.
\end{enumerate}
\smallskip

The above approach is sound and complete for characterizing \nc tuples, thanks
to the following proposition.

\begin{proposition}\label{PROP:EXISTENCE}
    There exists a \di $\Phi$ for $D$ s.t. the following statement
    is true: ``$\neg\Phi(t)$ iff $t$ is \nc\ w.r.t. $\CC$ and $[D ; Y]$ for all $t \in \DDom^m$''.
\end{proposition}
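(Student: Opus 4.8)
The plan is to construct the desired \di $\Phi$ directly as the exact indicator of safety and then verify that it qualifies as a \di for $D$ in the sense of Definition~\ref{def:di2}. Concretely, I would define $\Phi : \DDom^m \mapsto \{\true,\false\}$ by declaring $\Phi(t) = \false$ precisely on the \nc tuples and $\Phi(t) = \true$ everywhere else; that is, $\Phi(t) := \true$ iff there do \emph{not} exist $f,g \in \CC$ with $f(D) = g(D) = Y$ and $f(t) \neq g(t)$. With this definition the required biconditional ``$\neg\Phi(t)$ iff $t$ is \nc'' holds for every $t \in \DDom^m$ by construction, so the only genuine obligation that remains is to show that this $\Phi$ is a legal \di, i.e.\ that $|\{t \in D \mid \neg\Phi(t)\}| \ll |D|$.

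The key step is to argue that no tuple of $D$ is \nc, which makes the violation set on $D$ empty. I would fix an arbitrary $t \in D$ and suppose, toward a contradiction, that $t$ is \nc, witnessed by $f,g \in \CC$ with $f(D) = g(D) = Y$ but $f(t) \neq g(t)$. Since $t \in D$, evaluating the equalities $f(D) = Y$ and $g(D) = Y$ at $t$ forces $f(t) = g(t)$ (both equal $t$'s target value recorded in $[D;Y]$), contradicting $f(t) \neq g(t)$. Hence every $t \in D$ is safe, so $\{t \in D \mid \neg\Phi(t)\} = \emptyset$, which trivially satisfies $|\emptyset| \ll |D|$ and establishes that $\Phi$ is a \di for $D$.

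The main obstacle here is conceptual rather than computational: it is recognizing that Definition~\ref{def:di2} only requires $\Phi$ to be \emph{some} Boolean-valued formula over $\DDom^m$ with few violators in $D$, and does \emph{not} demand that $\Phi$ be expressible in the restricted grammar of Section~\ref{sec:cl}. This is exactly what licenses taking $\Phi$ to be the exact safety indicator, and it reduces the entire statement to the single observation that the defining equalities $f(D)=g(D)=Y$ pin $f$ and $g$ to agree on all of $D$. The separate questions of whether such a $\Phi$ is expressible or learnable within the conformance language are addressed by the synthesis algorithm and by the equality-constraint intuition of Example~\ref{ex:one}, not by this existence proposition; accordingly, I would keep the proof at the level of the definitions and resist conflating existence with expressibility.
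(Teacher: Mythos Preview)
Your proposal is correct and takes essentially the same approach as the paper: your indicator $\Phi(t) = \true$ iff $\neg\exists f,g\in\CC\,(f(D)=g(D)=Y \wedge f(t)\neq g(t))$ is logically equivalent to the paper's $\Phi := \forall f,g\in\CC:\, f(D)=g(D)=Y \Rightarrow f(\vec{A})-g(\vec{A})=0$, and both proofs verify the \di condition by the same observation that $f(D)=g(D)$ forces $f(t)=g(t)$ for every $t\in D$. Your added remark that Definition~\ref{def:di2} does not require $\Phi$ to lie in the grammar of Section~\ref{sec:cl} is a useful clarification the paper leaves implicit.
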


The required \di $\Phi$ is: $\forall{f,g\in\CC}:f(D)=g(D)=Y \Rightarrow
f(\vec{A}) - g(\vec{A}) = 0$.
Intuitively, when all possible pairs of functions that agree on $D$ also agree
on $t$, only then the prediction on $t$ can be trusted. (More discussion is in
\appOrTechRep.)

\subsection{Applicability}\label{applicability}

\revisetwo{\paragraph{Generalization to noisy setting.} While our analysis and
formalization for using \dis for TML focused on the noise-free setting, the
intuition generalizes to noisy data. Specifically, suppose that $f$ and $g$ are
two possible functions a model may learn over $D$; then, we expect that the
difference $f-g$ will have small variance over $D$, and thus would be a good
\di. In turn, the violation of this constraint would mean that $f$ and $g$
diverge on a tuple $t$ (making $t$ unsafe); since we are oblivious of the
function the model learned, prediction on $t$ is untrustworthy.}

\smallskip

\reviseone{\paragraph{False positives.} \Dis are designed to work in a
model-agnostic setting. Although this setting is of great practical importance,
designing a perfect mechanism for quantifying trust in ML model predictions,
while remaining completely model-agnostic, is challenging. It raises the
concern of \emph{false positives}: \dis may incorrectly flag tuples for which
the model's prediction is in fact correct. This may happen when the model
ignores the trend that \dis learn. Since we are oblivious of the prediction
task and the model, it is preferable that \dis behave rather
\emph{conservatively} so that the users can be cautious about potentially \nc
tuples. Moreover, if a model ignores some attributes (or their interactions)
during training, it is still necessary to learn \dis over them. Particularly,
in case of concept drift~\cite{tsymbal2004problem}, the ground truth may start
depending on those attributes, and by learning \dis over all attributes, we can
better detect potential model failures.

\smallskip

\looseness-1 \paragraph{False negatives.} Another concern involving \dis is of
\emph{false negatives}: linear \dis may miss nonlinear constraints, and thus
fail to identify some unsafe tuples. However, the linear dependencies modeled
in \dis persist even after sophisticated (nonlinear) attribute transformations.
Therefore, violation of \dis is a strong indicator of potential failure of a
possibly nonlinear model.

\smallskip

\paragraph{Modeling nonlinear constraints.} \looseness-1 While linear \dis are
the most common ones, we note that our framework can be easily extended to
support nonlinear \dis using \emph{kernel
functions}~\cite{scholkopf2002learning}---which offer an efficient, scalable,
and powerful mechanism to learn nonlinear decision boundaries for support
vector machines (also known as ``kernel trick''). Briefly, instead of
explicitly augmenting the dataset with transformed nonlinear attributes---which
grows exponentially with the desired degree of polynomials---kernel functions
enable \emph{implicit} search for nonlinear models. The same idea also applies
for PCA called kernel-PCA~\cite{alzate2008kernel, DBLP:conf/nips/JiangKGG18}.
While we limit our evaluation to only linear kernel, polynomial kernels---e.g.,
radial basis function (RBF)~\cite{keerthi2003asymptotic}---can be plugged into
our framework to model nonlinear \dis.

In general, our conformance language is not guaranteed to model all possible
functions that an ML model can potentially learn, and thus is not guaranteed to
find the best \di. However, our empirical evaluation on real-world datasets
shows that our language models \dis effectively.}

\section{Experimental Evaluation}\label{sec:experiments}
\newcommand{\ut}{untrustworthy\xspace}
\newcommand{\Ut}{Untrustworthy\xspace}

We now present experimental evaluation to demonstrate the effectiveness of \dis over our two case-study applications
(Section~\ref{sec:casestudies}): trusted machine learning and data drift.
Our experiments target the following research questions:     

\begin{itemize}
    \item How effective are \dis for trusted machine learning?
    Is there a relationship between \invariant violation score and the ML model's
    prediction accuracy? (Section~\ref{exp-invariants-for-ML})

    \item Can \dis be used to quantify data drift? How do they
    compare to other state-of-the-art drift-detection techniques?
    (Section~\ref{exp-invariants-for-drift})

\end{itemize}

\smallskip \noindent\textbf{Efficiency.} In all our experiments, our algorithms
for deriving \dis were extremely fast, and took only a few seconds even for
datasets with 6 million rows. The number of attributes were reasonably small
($\sim$40), which is true for most practical applications. As our theoretical
analysis showed (Section~\ref{sec:complexity}), our approach is linear in
the number of data rows and cubic in the number of attributes. Since the runtime
performance of our techniques is straightforward, we opted to not include
further discussion of efficiency here and instead focus this empirical analysis
on the techniques' effectiveness.

\smallskip
\noindent 
\textbf{Implementation: \system.} 
We created an open-source implementation of \dis and our method for
synthesizing them, \system, in Python 3. Experiments were run on a Windows 10
machine (3.60 GHz processor and 16GB RAM).

\smallskip
\noindent
{\large \textit{Datasets}}
\smallskip

\noindent\textbf{Airlines}~\cite{airlineSource} contains data about
flights and has 14 attributes
\reviseone{---year, month, day, day of week, departure time, arrival time,
carrier, flight number, elapsed time, origin, destination, distance, diverted,
and arrival delay.}
We used a subset of the data containing all flight information for year 2008.
\reviseone{In this dataset, most of the attributes follow uniform distribution
(e.g., month, day, arrival and departure time, etc.); elapsed time and distance
follow skewed distribution with higher concentration towards small values
(implying that shorter flights are more common); arrival delay follows a
slightly skewed gaussian distribution implying most flights are on-time, few
arrive late and even fewer arrive early.}
The training and serving datasets contain 5.4M and 0.4M rows, respectively.

\smallskip

\noindent \looseness-1 \textbf{Human Activity Recognition
(HAR)}~\cite{sztyler2016onbody} is a real-world dataset about activities for 15
individuals, 8 males and 7 females, with varying fitness levels and BMIs. We
use data from two sensors---accelerometer and gyroscope---attached to 6 body
locations---head, shin, thigh, upper arm, waist, and chest. We consider 5
activities---lying, running, sitting, standing, and walking. The dataset
contains 36 numerical attributes (2 sensors $\times$ 6 body-locations $\times$
3 co-ordinates) and 2 categorical attributes---activity-type and person-ID. We
pre-processed the dataset to aggregate the measurements over a small time window,
resulting in 10,000 tuples per person and activity, for a total of 750,000
tuples.

\smallskip

\noindent\textbf{Extreme Verification Latency (EVL)}~\cite{souzaSDM:2015}
is a widely used benchmark to evaluate drift-detection algorithms in
non-stationary environments under extreme verification latency. It contains 16
synthetic datasets with incremental and gradual concept drifts over time. The
number of attributes of these datasets vary from 2 to 6 and each of them has
one categorical attribute.

\subsection{Trusted Machine Learning}\label{exp-invariants-for-ML}

\looseness-1 We now demonstrate the applicability of \dis in the TML problem.
We show that, serving tuples that violate the training data's \dis are \nc, and
therefore, an ML model is more likely to perform poorly on those tuples.

\smallskip

\begin{figure}[t!]
	\centering
	\setlength{\tabcolsep}{5pt}
	\renewcommand\arraystretch{0.88}
	\small{
	\begin{tabular}{ccccc}
		\hline
		&  \multirow{ 2}{*}{Train} & \multicolumn{3}{c}{Serving}\\
		\cline{3-5}
		&& Daytime & Overnight & Mixed \\
		\midrule
		\textbf{Average violation} & 0.02\% & 0.02\% & 27.68\% & 8.87\%\\
		\textbf{MAE} & 18.95	 &  18.89 & 80.54 & 38.60\\
		\bottomrule
		
	\end{tabular}
	}
		\vspace{-3mm}	
	 \caption{Average \invariant violation (in percentage) and MAE (for linear regression) 
	 of four data splits on the airlines dataset. The \invariants were learned on
 	 \texttt{Train}, excluding the target attribute, \texttt{delay}.}
	 
	\label{fig:airlines-summary}
	\vspace{2mm}
	\centering
	\includegraphics[width=0.8\linewidth]{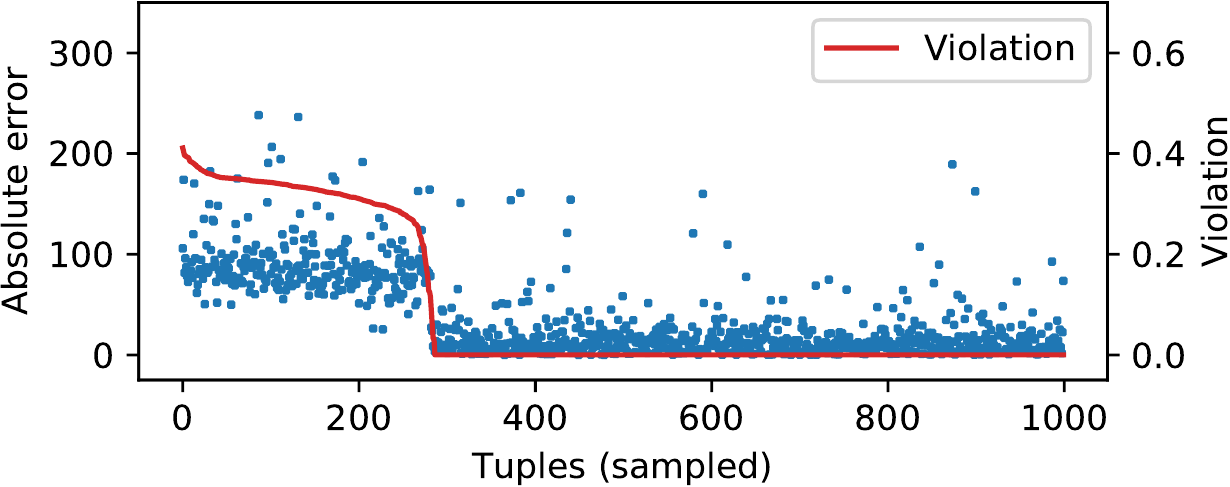}
	\vspace{-3mm}	
	\caption{\Invariant violation strongly correlates with the absolute error
	 of delay prediction of a linear regression model.}
	 \vspace{2mm}	
	\label{fig:airlines}
\end{figure}

\noindent \textbf{Airlines.} We design a regression task of predicting the
arrival delay and train a linear regression model for the task. \reviseone{Our
goal is to observe whether the mean absolute error of the predictions
(positively) correlates to the \invariant violation for the serving tuples.} In
a process analogous to the one described in Example~\ref{ex:tml}, our training
dataset (\texttt{Train}) comprises of a subset of daytime flights---flights
that have arrival time later than the departure time (in 24 hour format). We
design three serving sets: (1)~\texttt{Daytime}: similar to \texttt{Train}, but
another subset, (2)~\texttt{Overnight}: flights that have arrival time earlier
than the departure time (the dataset does not explicitly report the date of
arrival), and (3)~\texttt{Mixed}: a mixture of \texttt{Daytime}
and~\texttt{Overnight}. \reviseone{A few sample tuples of this dataset are in
Fig.~\ref{fig:flights}.}

\reviseone{Our experiment involves the following steps: (1)~\system computes
\dis $\Phi$ over \texttt{Train}, while \emph{ignoring} the target attribute
\texttt{delay}. (2)~We compute average \invariant violation for all four
datasets---\texttt{Train}, \texttt{Daytime}, \texttt{Overnight}, and
\texttt{Mixed}---against $\Phi$ (first row of Fig.~\ref{fig:airlines-summary}).
(3)~We train a linear regression model over \texttt{Train}---including
\texttt{delay}---that learns to predict arrival delay. (4)~We compute mean
absolute error (MAE) of the prediction accuracy of the regressor over the four
datasets (second row of Fig.~\ref{fig:airlines-summary}).} We find
that \invariant violation is a very good proxy for prediction error, as they
vary in a similar manner across the four datasets. The reason is that the model
implicitly assumes that the \invariants (e.g., $AT - DT - DUR \approx 0$)
derived by \system will always hold, and, thus, deteriorates when the assumption
no longer holds.

\looseness-1 \reviseone{To observe the rate of false positives and false
negatives, we investigate the relationship between constraint violation and
prediction error at tuple-level granularity.} We sample 1000 tuples from
\texttt{Mixed} and organize them by decreasing order of violations
(Fig.~\ref{fig:airlines}). \reviseone{For all the tuples (on the left) that
incur high \invariant violations, the regression model incurs high error for
them as well. This implies that \system reports no false positives. There are
some false negatives (right part of the graph), where violation is low, but the
prediction error is high. Nevertheless, such false negatives are very few.}

\smallskip

\noindent\textbf{HAR.} \looseness-1
On the HAR dataset, we design a supervised classification task to identify
persons from their activity data \reviseone{that contains 36 numerical
attributes.} We construct \texttt{train\_x} with data for sedentary activities
(lying, standing, and sitting), and \texttt{train\_y} with the corresponding
person-IDs. We learn \dis on \texttt{train\_x}, and train a Logistic Regression
(LR) classifier using the annotated dataset $[\texttt{train\_x};
\texttt{train\_y}]$. During serving, we mix mobile activities (walking and
running) with held-out data for sedentary activities and observe how the
classification's mean accuracy-drop \reviseone{(i.e., how much the mean
prediction accuracy decreases compared to the mean prediction accuracy over the
training data)} relates to average \invariant violation. \reviseone{To avoid
any artifact due to sampling bias, we repeat this experiment $10$ times for
different subsets of the data by randomly sampling $5000$ data points for each of
training and serving.} Fig.~\ref{fig:har-ml-experiment}~depicts our
findings: classification degradation has a clear positive correlation with
violation (pcc = 0.99 with p-value = 0).

\smallskip

\looseness-1 \noindent \revisetwo{\emph{Noise sensitivity.} \label{noise}
Intuitively, noise weakens \dis by increasing variance in the training data,
which results in reduced violations of the serving data. However, this is
desirable: as more noise makes machine-learned models less likely to overfit,
and, thus, more robust. In our experiment for observing noise sensitivity of
\dis, we use only mobile activity data as the serving set and start with
sedentary data as the training set. Then we gradually introduce noise in the
training set by mixing mobile activity data. As
Fig.~\ref{fig:har-ml-experiment-noise} shows, when more noise is added to the
training data, \dis start getting weaker; this leads to reduction in
violations. However, the classifier also becomes robust with more noise, which
is evident from gradual decrease in accuracy-drop (i.e., increase in accuracy).
Therefore, even under the presence of noise, the positive correlation between
classification degradation and violation persists (pcc = 0.82 with p-value =
0.002).}

\smallskip
\noindent
\fbox{
\parbox{0.96\columnwidth}{
\emph{Key takeaway:} 
\system derives \dis whose violation is a strong proxy of model prediction
accuracy. \reviseone{Their correlation persists even in the presence of noise.}}
}

\begin{figure}[t!]
	\centering
	\hspace{-4mm}
	\begin{subfigure}[t]{0.33\textwidth}
		\centering
	\includegraphics[width=0.9\linewidth]{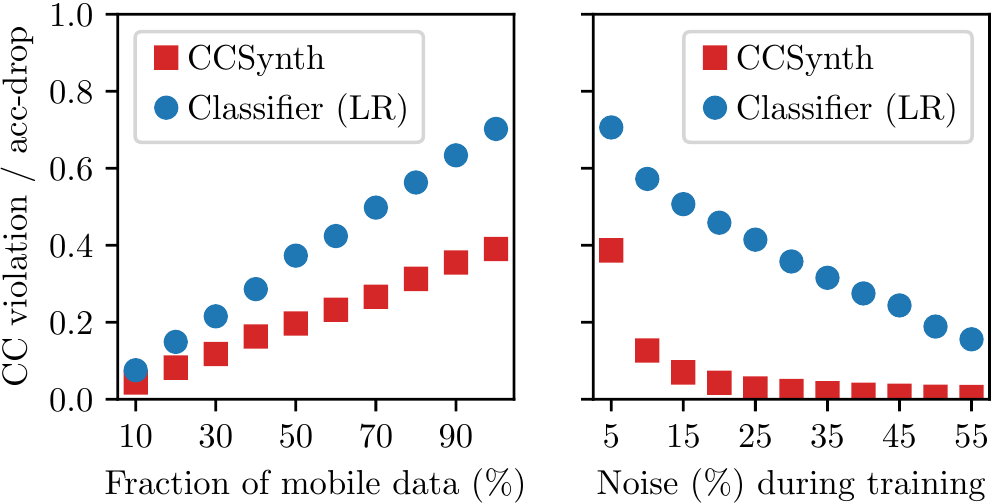}
	\vspace{-2mm}
	\caption{\phantom{randomrand}}
	\label{fig:har-ml-experiment}
	\end{subfigure}	
	\hspace{-17mm}
	\begin{subfigure}[t]{0.01\textwidth}
		\centering
	\vspace{-2mm}
	\caption{}
	\label{fig:har-ml-experiment-noise}
	\end{subfigure}	
	\hspace{7mm}
	\begin{subfigure}[t]{0.2\textwidth}
		\centering
	\includegraphics[width=0.81\linewidth]{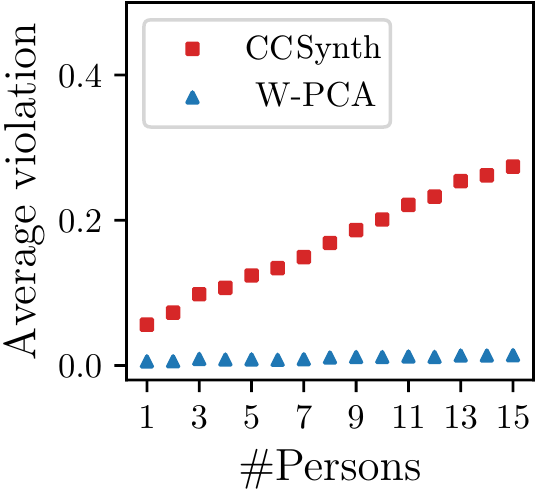}
		\vspace{-2mm}
	\caption{}
	\label{fig:gradual-drift-har}
	\end{subfigure}
	\hspace{-7mm}
		\vspace{-3mm}	
	\caption{(a)~As a higher fraction of mobile activity data is mixed with
sedentary activity data, \dis are violated more, and the classifier's
mean accuracy-drop increases. 
(b)~\revisetwo{As more noise is added during training, \dis get weaker,
leading to less violation and decreased accuracy-drop.}
(c)~\system detects the gradual local drift on the
HAR dataset as more people start changing their
activities. In contrast, weighted-PCA (W-PCA) fails to detect drift in absence of
a strong global drift.} 
		\vspace{2mm}	
	\centering
	\includegraphics[width=1\linewidth]{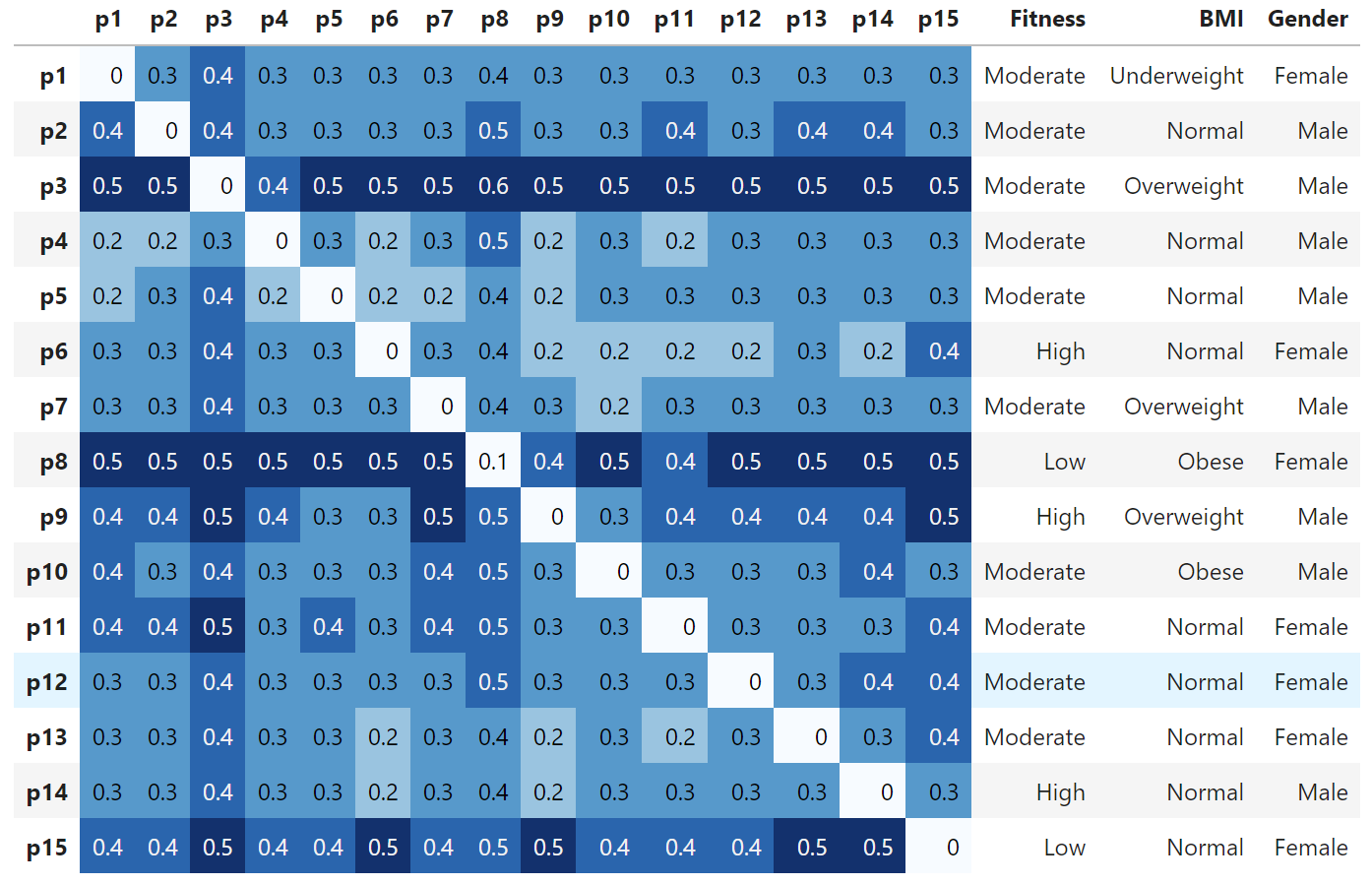}
		\vspace{-7mm}	
	\caption{\looseness-1 Inter-person \invariant violation heat map. Each person has a very low self-violation.}
	\label{fig:har-inter-person-drift-heatmap}
		\vspace{1mm}	
	\centering
\end{figure}

\smallskip

\subsection{Data Drift}\label{exp-invariants-for-drift}
%
We now present results of using \dis for drift-detection; specifically, for
\emph{quantifying} drift in data. Given a baseline dataset $D$, and a new
dataset $D'$, the drift is measured as average violation of tuples in $D'$ on
\dis learned for $D$.

\smallskip

\noindent\textbf{HAR.} We perform two drift-quantification experiments on
HAR:

\smallskip

\noindent \emph{Gradual drift.} \looseness-1 For observing how \system detects
gradual drift, we introduce drift in an organic way. The initial training
dataset contains data of exactly one activity for each person. This is a
realistic scenario as one can think of it as taking a snapshot of what a group
of people are doing during a reasonably small time window. We introduce gradual
drift to the initial dataset by altering the activity of one person at a time.
To control the amount of drift, we use a parameter $K$. \reviseone{When $K =
1$, the first person switches their activity, i.e., we replace the tuples
corresponding to the first person performing activity A with new tuples that
correspond to the same person performing another activity B.} When $K = 2$, the
second person switches their activity in a similar fashion, and so on. As we
increase $K$ from $1$ to $15$, we expect a gradual increase in the drift
magnitude compared to the initial training data. When $K = 15$, all persons
have switched their activities from the initial setting, and we expect to
observe maximum drift. We repeat this experiment $10$ times, and display the
average \invariant violation in Fig.~\ref{fig:gradual-drift-har}: the drift
magnitude (violation) indeed increases as more people alter their activities.

\begin{figure*}[t]
	\centering	
	\includegraphics[width=.85\linewidth]{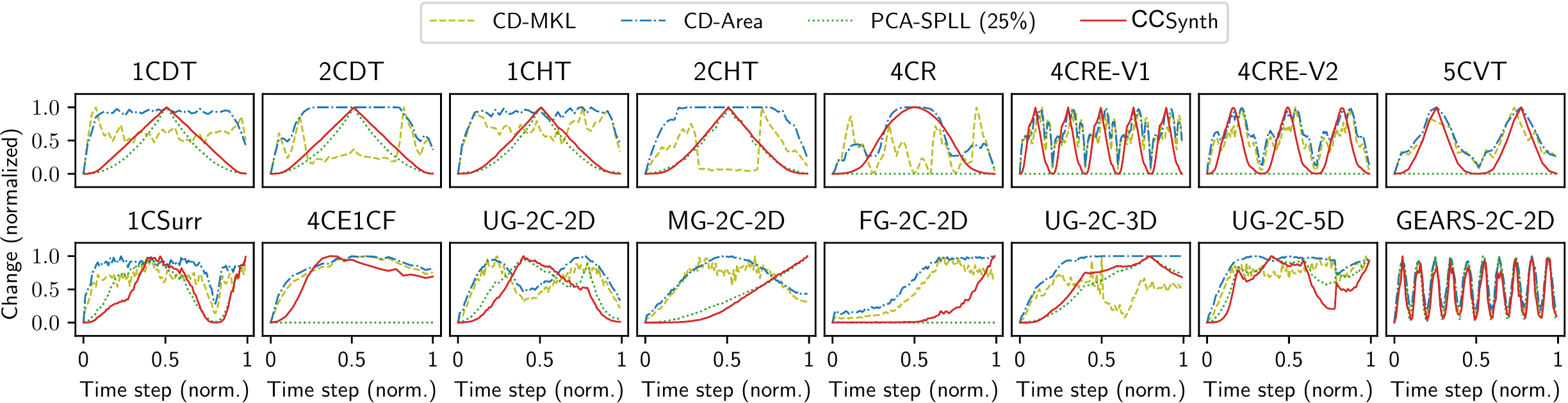}	
		\vspace{-3mm}
	 \caption{In the EVL benchmark, \system quantifies drift correctly for all
	 cases, outperforming other approaches. PCA-SPLL fails to detect drift in a few cases by
 	 discarding all principal components; CD-MKL and CD-Area are too sensitive to
 	 small drift and detect spurious drifts.}
	\vspace{-2mm}
	\label{fig:drift-baseline-comparison-EVL}
\end{figure*}

The baseline weighted-PCA approach (W-PCA) fails to model local \invariants
(who is doing what), and learns some weaker global \invariants indicating that
``a group of people are performing some activities''. Thus, it fails to detect
the gradual local drift. \system can detect drift when individuals switch
activities, as it learns \emph{disjunctive} \invariants that encode who is
doing what.

\smallskip

\noindent \looseness-1 \emph{Inter-person drift.} \reviseone{The goal of this
experiment is to observe how effectively \dis can model the representation of an
entity and whether such learned representations can be used to accurately
quantify drift between two entities.} We use half of each person's data to
learn the \invariants, and compute violation on the held-out data. \system
learns disjunctive \invariants for each person over all activities, and then we
use the violation w.r.t.\ the learned \invariants to measure how much the other
persons drift. While computing drift between two persons, we compute
activity-wise \invariant violation scores and then average them out. In
Fig.~\ref{fig:har-inter-person-drift-heatmap}, the violation score at row
\texttt{p1} and column \texttt{p2} denotes how much \texttt{p2} drifts from
\texttt{p1}. As one would expect, we observe a very low self-drift across the
diagonal. Interestingly, our result also shows that some people are more
different from others, which appears to have some correlation with (the hidden
ground truth) fitness and BMI values. This asserts that the \invariants we
learn for each person are an accurate abstraction of that person's activities,
as people do not deviate too much from their usual activity patterns.

\smallskip

\noindent \textbf{EVL.} We now compare \system against other state-of-the-art
drift detection approaches on the EVL benchmark.

\smallskip

\noindent\emph{Baselines.}
We use two drift-detection baselines as described below:

\smallskip

(1)~PCA-SPLL~\cite{DBLP:journals/tnn/KunchevaF14}\footnote{\scriptsize{SPLL
source code:} \scriptsize{\url{github.com/LucyKuncheva/Change-detection}}},
similar to us, also argues that principal components with lower variance are
more sensitive to a general drift, and uses those for dimensionality reduction.
It then models multivariate distribution over the reduced dimensions and
applies semi-parametric log-likelihood (SPLL) to detect drift between two
multivariate distributions. However, PCA-SPLL discards all high-variance
principal components and does not model disjunctive \invariants.

\smallskip
 
(2)~CD (Change
Detection)~\cite{DBLP:conf/kdd/QahtanAWZ15}\footnote{\scriptsize{CD source
code:} \scriptsize{\url{mine.kaust.edu.sa/Pages/Software.aspx}}} is another
PCA-based approach for drift detection in data streams. But unlike PCA-SPLL, it
ignores low-variance principal components. CD projects the data onto top $k$
high-variance principal components, which results into multiple univariate
distributions. We compare against two variants of CD: CD-Area, which uses the
intersection area under the curves of two density functions as a divergence
metric, and CD-MKL, which uses Maximum KL-divergence as a symmetric divergence
metric, to compute divergence between the univariate distributions.

\smallskip

\looseness-1 Fig.~\ref{fig:drift-baseline-comparison-EVL} depicts how \system
compares against CD-MKL, CD-Area, and PCA-SPLL, on 16 datasets in the
EVL benchmark. For PCA-SPLL, we retain principal components that contribute to
a cumulative explained variance below 25\%. Beyond drift detection, which just
detects if drift is above some threshold, we focus on drift
quantification. A tuple $(x,y)$ in the plots denotes that drift
magnitude for dataset at $x^{th}$ time window, w.r.t.\ the dataset at the first
time window, is $y$. Since different approaches report drift magnitudes in
different scales, we normalize the drift values within $[0, 1]$. Additionally,
since different datasets have different number of time windows,
for the ease of exposition, we normalize the time window indices. Below
we state our key findings from this experiment:

\smallskip

\looseness-1 \emph{\system's drift quantification matches the ground truth.} In
all of the datasets in the EVL benchmark, \system is able to correctly quantify
the drift, which matches the ground truth~\cite{evlVideo} exceptionally well.
In contrast, as CD focuses on detecting the drift point, it is ill-equipped to
precisely quantify the drift, which is demonstrated in several cases (e.g.,
2CHT), where CD fails to distinguish the deviation in drift magnitudes. In
contrast, both PCA-SPLL and \system correctly quantify the drift.
\revisetwo{Since CD only retains high-variance principal components, it is more
susceptible to noise and considers noise in the dataset as significant drift,
which leads to incorrect drift quantification. In contrast, PCA-SPLL and
\system ignore the noise and only capture the general notion of drift.} In all
of the EVL datasets, we found CD-Area to work better than CD-MKL, which also
agrees with the authors' experiments.

\smallskip

\looseness-1 \emph{\system models local drift.} When the dataset contains
instances from multiple classes, the drift may be just local, and not global
(e.g., 4CR dataset as shown in the Appendix). In such cases, PCA-SPLL fails to
detect drift (4CR, 4CRE-V2, and FG-2C-2D). In contrast, \system learns
disjunctive \invariants and quantifies local drifts accurately.

\smallskip
\noindent
\fbox{
\parbox{0.95\columnwidth}{
\emph{Key takeaways:} 
\system can effectively detect data drift, both global and local, is robust
across drift patterns, and significantly outperforms the state-of-the-art
methods.}
}

\section{Related Work}\label{sec:applications}

There is extensive literature on
data-profiling~\cite{DBLP:journals/vldb/AbedjanGN15} primitives that model
relationships among data attributes, such as functional dependencies
(FD)~\cite{papenbrock2015functional, DBLP:conf/sigmod/ZhangGR20} and their
variants (metric~\cite{koudas2009metric},
conditional~\cite{DBLP:conf/icde/FanGLX09},
soft~\cite{DBLP:conf/sigmod/IlyasMHBA04}, approximate~\cite{huhtala1999tane,
kruse2018efficient}, relaxed~\cite{caruccio2016discovery},
pattern~\cite{qahtan2020pattern}, etc.), differential
dependencies~\cite{song2011differential}, denial
constraints~\cite{DBLP:journals/pvldb/ChuIP13,
DBLP:journals/corr/abs-2005-08540, DBLP:journals/pvldb/BleifussKN17,
pena2019discovery}, statistical constraints~\cite{DBLP:conf/sigmod/YanSZWC20},
etc. However, none of them focus on learning approximate arithmetic
relationships that involve multiple numerical attributes in a noisy setting,
which is the focus of our work. Some variants of
FDs~\cite{DBLP:conf/sigmod/IlyasMHBA04, koudas2009metric, huhtala1999tane,
kruse2018efficient, caruccio2016discovery} consider noisy setting, but they
require noise parameters to be explicitly specified by the user. In contrast,
we do not require any explicit noise parameter.

The issue of trust, resilience, and interpretability of artificial intelligence
(AI) systems has been a theme of increasing interest
recently~\cite{DBLP:conf/cav/Jha19, DBLP:journals/crossroads/Varshney19,
DBLP:journals/corr/abs-1904-07204, DBLP:conf/mlsys/KangRBZ20}, particularly for safety-critical
data-driven AI systems~\cite{DBLP:journals/bigdata/VarshneyA17,
DBLP:conf/hicons/TiwariDJCLRSS14}. A standard way to decide whether to trust a
classifier or not, is to use the classifier-produced confidence score. However,
as a prior work~\cite{DBLP:conf/nips/JiangKGG18} argues, this is not always
effective since the classifier's confidence scores are not well-calibrated.
While some recent techniques~\cite{DBLP:conf/nips/JiangKGG18,
DBLP:conf/sigmod/SchelterRB20, DBLP:journals/corr/HendrycksG16c,
DBLP:journals/corr/abs-1812-02765} aim at validating the inferences made by
machine-learned models on unseen tuples, they usually require knowledge of the
inference task, access to the model, and/or expected cases of data shift, which
we do not. Furthermore, they usually require costly hyper-parameter tuning and
do not generate closed-form data profiles like \dis
(Fig.~\ref{relatedWorkMatrix}). Prior work on data drift, change detection, and
covariate shift~\cite{DBLP:conf/sigmod/Aggarwal03, DBLP:journals/tnn/BuAZ18,
dasu2006information, DBLP:journals/eswa/MelloVFB19, DBLP:conf/kdd/ReisFMB16,
DBLP:journals/inffus/FaithfullDK19, DBLP:conf/icml/Ho05, hooi2019branch,
DBLP:conf/sdm/KawaharaS09, DBLP:conf/vldb/KiferBG04,
DBLP:journals/eswa/SethiK17, DBLP:conf/kdd/SongWJR07, DBLP:conf/sac/IencoBPP14,
DBLP:conf/cbms/TsymbalPCP06, DBLP:journals/corr/WangA15,
DBLP:conf/sbia/GamaMCR04, DBLP:conf/sdm/BifetG07, gaber2006classification,
DBLP:journals/tnn/RutkowskiJPD15, DBLP:conf/iri/SethiKA16} relies on modeling
data distribution. However, data distribution does not capture constraints,
which is the primary focus of our work.

Few works~\cite{DBLP:journals/corr/abs-1812-02765,
DBLP:journals/corr/HendrycksG16c, DBLP:journals/corr/abs-1909-03835} use
autoencoder's~\cite{hinton2006reducing, rumelhart1985learning} input
reconstruction error to determine if a new data point is out-of-distribution.
Our approach is similar to outlier-detection~\cite{kriegel2012outlier} and
one-class-classification~\cite{DBLP:conf/icann/TaxM03}. However, \dis differ
from these approaches as they perform under the additional requirement to
generalize the data in a way that is exploited by a given class of ML models.
In general, there is a clear gap between representation learning (that models
data likelihood)~\cite{hinton2006reducing, rumelhart1985learning,
achlioptas2017learning, karaletsos2015bayesian} and the (constraint-oriented)
data-profiling techniques to address the problem of trusted AI and our aim is
to bridge this gap.

\section{Summary and Future Directions}\label{discussion}

We introduced \dis, and the notion of \nc tuples for trusted machine learning.
We presented an efficient and highly-scalable approach for synthesizing \dis;
and demonstrated their effectiveness to tag \nc tuples and quantify data drift.
The experiments validate our theory and our principle of using low-variance
\views to generate effective \dis. We have studied only two use-cases from a
large pool of potential applications using linear \dis. In future, we want to
explore more powerful nonlinear \dis using autoencoders. Moreover, we plan to
explore approaches to learn \dis in a decision-tree-like structure where
categorical attributes will guide the splitting conditions and leaves will
contain simple \dis. Further, we envision a mechanism---built on top of
\dis---to explore differences between datasets.

\bibliographystyle{abbrv}
\bibliography{paper}

\ifTechRep
\appendix

\section{System Parameters}\label{app:F}
\looseness-1
Our technique for deriving (unnormalized) importance factor $\gamma_k$, for
bounded-\view \invariant on \view $F_k$, uses the mapping $\frac{1}{\log(2 +
\sigma(F_k(D)))}$. This mapping correctly translates our principles for
quantifying violation by putting high weight on \dis constructed from
low-variance projections, and low weight on \dis constructed from high-variance
projections. While this mapping works extremely well across a large set of
applications (including the ones shown in our experimental results), our
quantitative semantics are not limited to any specific mapping. In fact, the
function to compute importance factors for bounded-\views can be user-defined
(but we do not require it from the user). Specifically, a user can plug in any
custom function to derive the (unnormalized) importance factors. Furthermore,
our technique to compute the bounds $\lb$ and $\ub$ can also be customized (but
we do not require it from the user either). Depending on the application
requirements, one can apply techniques used in machine learning literature
(e.g., cross-validation) to tighten or loosen the \dis by tuning these
parameters. However, we found our technique---even without any
cross-validation---for deriving these parameters to be very effective in most
practical applications.

\section{Proof of Lemma~\ref{LEMMA:MAIN}}\label{app:A}

\begin{proof}
    Pick $\beta_1, \beta_2$ s.t. $\beta_1^2+\beta_2^2=1$ and the following equation holds:
	\begin{align}\label{condition}
		\sign{\rho_{F_1,F_2}}\beta_1\stddev{F_1(D)} + \beta_2\stddev{F_2(D)} = 0
	\end{align}
    Let $t$ be any tuple that is incongruous w.r.t. $\langle F_1, F_2
    \rangle$. Now, we compute how far $t$ is from the mean of the \view $F$
    on $D$:
    \begin{align*}
        |\delF{F}{t}| &= |F(t) - \avg{F(D)}|
        \\
        &=  |\beta_1 F_1(t) + \beta_2 F_2(t) - \avg{ \beta_1 F_1(D) + \beta_2 F_2(D) }|
        \\
        &=  |\beta_1 \delF{F_1}{t} + \beta_2 \delF{F_2}{t}|
        \\
        &=  |\beta_1 \delF{F_1}{t}| + |\beta_2 \delF{F_2}{t}|
    \end{align*}
    The last step is correct only when $\beta_1 \delF{F_1}{t}$ and $\beta_2
    \delF{F_2}{t}$ are of same sign. We prove this by cases:

	\smallskip
	
         \noindent (Case 1). $\rho_{F_1,F_2} \geq \frac{1}{2}$.
         In this case, $\beta_1$ and
 	 $\beta_2$ are of different signs due to Equation~\ref{condition}. Moreover,
 	 since $t$ is incongruous w.r.t. $\langle F_1, F_2\rangle$, $\delF{F_1}{t}$
 	 and $\delF{F_2}{t}$ are of different signs. Hence, $\beta_1 \delF{F_1}{t}$
 	 and $\beta_2 \delF{F_2}{t}$ are of same sign.

	\smallskip    
	
         \noindent (Case 2). $\rho_{F_1,F_2} \leq -\frac{1}{2}$.
         In this case, $\beta_1$ and
  	 $\beta_2$ have the same sign due to Equation~\ref{condition}. Moreover,
  	 since $t$ is incongruous w.r.t. $\langle F_1, F_2\rangle$, $\delF{F_1}{t}$
  	 and $\delF{F_2}{t}$ are of same sign. Hence, $\beta_1 \delF{F_1}{t}$ and
  	 $\beta_2 \delF{F_2}{t}$ are of same sign.

\smallskip

Next, we compute the variance of $F$ on $D$:

    \begin{align*}
        \stddev{F(D)}^2 &{=}\frac{1}{|D|}\sum_{t\in D} ( \beta_1 \delF{F_1}{t} {+} \beta_2 \delF{F_2}{t})^2
        \\ &{=}  \beta_1^2\stddev{F_1(D)}^2  {+} \beta_2^2\stddev{F_2(D)}^2 
		\\ &\phantom{{=}  \beta_1^2\stddev{F_1(D)}^2 \; }{+} 2 \beta_1\beta_2 \rho_{F_1,F_2}\stddev{F_1(D)}\stddev{F_2(D)}
        \\ &{=}  \beta_1^2\stddev{F_1(D)}^2  {+} \beta_1^2\stddev{F_1(D)}^2  {-} 2 \beta_1^2 |\rho_{F_1,F_2}|\stddev{F_1(D)}^2
        \\ &{=}  2\beta_1^2\stddev{F_1(D)}^2 (1 - |\rho_{F_1,F_2}|)
    \end{align*}

    Hence, $\stddev{F(D)} =
    \sqrt{2(1-|\rho_{F_1,F_2}|)}|\beta_1|\stddev{F_1(D)}$, which is also equal
    to $\sqrt{2(1-|\rho_{F_1,F_2}|)}|\beta_2|\stddev{F_2(D)}$.
    Since $\sqrt{2(1-|\rho_{F_1,F_2}|)}| \leq 1$, and since $|\beta_k| < 1$, we conclude that
    $\stddev{F(D)} < \stddev{F_k(D)}$.

	Now, we compute $\frac{|\delF{F}{t}|}{\stddev{F(D)}}$ next using the above
	derived facts about $|\delF{F}{t}|$ and $\stddev{F(D)}$.
        \begin{align*}
            \frac{|\delF{F}{t}|}{\stddev{F(D)}} 
            &> \frac{|\beta_1 \delF{F_1}{t}|}{\sqrt{2(1-|\rho_{F_1,F_2}|)}|\beta_1|\stddev{F_1(D)}}
            \\
            &= \frac{|\delF{F_1}{t}|}{\sqrt{2(1-|\rho_{F_1,F_2}|)}\stddev{F_1(D)}}
            &\geq \frac{|\delF{F_1}{t}|}{\stddev{F_1(D)}}
    \end{align*}
	 \looseness-1 The last step uses the fact that $|\rho_{F_1,F_2}| \geq \frac{1}{2}$.
	 Similarly, we also get $\frac{|\delF{F}{t}|}{\stddev{F(D)}} {>}
	 \frac{|\delF{F_2}{t}|}{\stddev{F_2(D)}}$. Hence, $\phi_F$ is stronger than
	 both $\phi_{F_1}$ and $\phi_{F_2}$ on $d$, using Lemma~\ref{lemma:helper}.
	 This completes the proof.
\end{proof}

\section{Proof of Theorem~\ref{THM:MAIN}}\label{app:B}
\begin{proof}
    First, 
    we use Lemma~\ref{LEMMA:MAIN} on $F_i,F_j$ to construct $F$. We initialize $I := \{i,j\}$.
    Next, we repeatedly do the following:
    We iterate over all $F_k$, where $k\not\in I$, and check if 
    $|\rho_{F,F_k}| \geq \frac{1}{2}$ for some $k$.  If yes,
    we use Lemma~\ref{LEMMA:MAIN} (on $F$ and $F_k$) to update $F$ to be the new \view returned by the lemma. 
    We update $I := I \cup \{k\}$, and continue the iterations.
    If $|\rho_{F,F_k}| < \frac{1}{2}$ for all $k\not\in I$, then we stop. The final $F$ and index set
    $I$ can easily be seen to satisfy the claims of the theorem.
\end{proof}

\section{Proof of Theorem~\ref{THM:ALGO2-CORRECTNESS}}\label{app:C}
We first provide some additional details regarding the statement of the
theorem. Since standard deviation is not scale invariant, if there is no
constraint on the norm of the linear projections, then it is possible to scale
down the linear \views to make their standard deviations arbitrarily small.
Therefore, claim~(1) can not be proved for {\em{any}} linear \view, but only
linear \views whose $2$-norm is not too ``small''. Hence, we restate the
theorem with some additional technical conditions.

\begin{sloppypar} Given a numerical dataset $D$, let $\mathcal{F} =
\{F_1,F_2,\ldots, F_K\}$ be the set of linear \views returned by
Algorithm~\ref{fig:pca}. Let $\sigma^* = \min_k^{K}\stddev{F_k(D)}$. WLOG,
assume $\sigma^* = \stddev{F_1(D)}$ where $F_1 = \vec{A}^T\vec{w^*}$. Assume
that the attribute mean is zero for all attributes in $D$ (call this
Condition~1). Then,
\end{sloppypar}	 
	\begin{enumerate}[label=(\arabic*)]
	 \item $\sigma^* \leq \stddev{F(D)}$ for every possible linear \view $F$
	 whose $2$-norm is sufficiently large, i.e., we require $||\vec{w}|| \geq 1$
	 for $F = \vec{A}^T\vec{w}$. If we do not assume Condition~1, then the
	 requirement changes to $||\vec{w}|| \geq ||\vec{w}^{*e}|| -
	 \avg{D^T\vec{w}}||$. Here $\vec{w}^{*e}$ is the vector constructed by augmenting
	 a dimension to $\vec{w}^*$ to turn it to an eigenvector of ${D^e}^TD^e$ where $D^e =
	 [\vec{1}; D]$.
        \item $\forall F_j, F_k \in \mathcal{F}$ s.t.\ $F_j\neq F_k$, $\rho_{F_j, F_k} = 0$.
            If we do not assume Condition~1, then the correlation coefficient is close to $0$ for
            those $F_j, F_k$ whose corresponding eigenvalues are much smaller than $|D|$.
	\end{enumerate}

\begin{proof}
The proof uses the following facts:
    \begin{description}
    	\item [(Fact~1)] If we add a constant $c$ to each element of a set $S$
    	of real values to get a new set $S'$, then $\stddev{S} = \stddev{S'}$.
    	
    	\item [(Fact~2)] The Courant-Fischer min-max theorem~\cite{Horn:2012:MA:2422911} states that the 
    	vector $\vec{w}$ that minimizes
    	$||M\vec{w}||/||\vec{w}||$ is 
    	the eigenvector of $M^T M$ corresponding to the lowest eigenvalue (for any matrix $M$).
    	
    	\item [(Fact~3)] Since $D_N' := [\vec{1}; D_N]$, by definition:
    	$\stddev{D_N\vec{w}} = \frac{||D_N'\vec{w'}||}{\sqrt{|D|}}$, where
    	$\vec{w}' = \left[\begin{matrix}-\avg{D_N\vec{w}} \\
    	\vec{w}\end{matrix}\right]$
    	
    	\item [(Fact~4)] By the definition of variance, $\stddev{S}^2 = ||S||^2 -
    	\avg{S}^2$. 
	\end{description}
	
	\smallskip

    Let $F = \vec{A}^T\vec{w}$ be an arbitrary linear \view. Since $D$ is
    numerical, $D=D_N$. Let $D^e$ denote $D_N'$. (We use the superscript $e$ to
    denote the augmented vector/matrix).
	\[\arraycolsep=1.4pt\def\arraystretch{1.2}
     \begin{array}{rcll}
         \lefteqn{\stddev{D^T\vec{w}}^2}
         \\ & = & \stddev{D^T\vec{w} - \vec{1}\mu}^2  & \mbox{ (Fact~1), $\mu=\avg{D^T\vec{w}}$}
         \\ & = & \stddev{{D^e}^T\vec{w}^e}^2 & \mbox{ where $\vec{w}^e = \left[\begin{matrix}-\mu \\ \vec{w}\end{matrix}\right]$}
         \\ & = & \frac{||{D^e}^T\vec{w}^e||^2}{|D|} & \mbox{ (Fact~3)}
         \\ & \geq & \frac{||{D^e}^T\vec{w}^{*e}||^2 \cdot ||\vec{w}^e||^2}{{|D|} \cdot ||\vec{w}^{*e}||^2} & \mbox{ (Fact~2)}
         \\ & = & (\stddev{{D^e}^T\vec{w}^{*e}}^2 + b^2) \cdot \frac{||\vec{w}^e||^2}{||\vec{w}^{*e}||^2} & \mbox{ (Fact~4), $b=\avg{{D^e}^T\vec{w}^{*e}}$}
         \\ & = & (\stddev{{D}^T\vec{w}^*+c}^2 + b^2) \cdot \frac{||\vec{w^e}||^2}{||\vec{w}^{*e}||^2} & \mbox{ Expand ${D^e}^T\vec{w}^{*e}$}
         \\ & = & (\stddev{{D}^T\vec{w}^*}^2 + b^2) \cdot \frac{||\vec{w}^e||^2}{||\vec{w}^{*e}||^2} & \mbox{ (Fact~1)}
         \\ & = & ({\sigma^*}^2 + b^2) \cdot \frac{||\vec{w}^e||^2}{||\vec{w}^{*e}||^2} & \mbox{ definition of $\sigma^*$}
         \\ & \geq & {\sigma^*}^2  & \mbox{ by assumption $\frac{||\vec{w}^e||^2}{||\vec{w}^{*e}||^2} \geq 1$}
     \end{array}
    \]
    For the last step, we use the technical condition that the norm of the extension of $\vec{w}$ 
    (extended by augmenting the mean over $D\vec{w}$) is 
    at least as large as the norm of extension of $\vec{w}^*$ (extended to make it an eigenvector of ${D^e}^T{D^e}$).
    When Condition~1 holds, 
    ${||\vec{w}^e||^2} = ||\vec{w}||^2$ (because $\avg{F(D)}$ will be $0$ and therefore, 
    $\vec{w}^e = \left[\begin{array}{c}0\\ \vec{w}\end{array}\right]$),
        and ${||\vec{w}^{*e}||^2} = 1$ (for the same reason), and hence 
    $\frac{||\vec{w}^e||^2}{||\vec{w}^{*e}||^2} \geq 1$.

    For part~(2) of the claim, let $F_i = \vec{A}^T \vec{w_i}$ for all $i$, where
    $\vec{w_i}$ are the coefficients of the linear \view\ $F_i$.
		Let $c_i = \avg{F_i(D)}$.
	
	\begin{description}
		\item [(Fact~5)] If Condition~1 holds, $\forall i \; c_i = 0$.
	\end{description}
	
    By construction of $F_i$'s, we know that $w_i$ can be extended to be an eigenvector 
    $\left[\begin{array}{c} d_i\\ \vec{w_i}\end{array}\right]$ of  ${D^e}^T {D^e}$ 
        (with corresponding eigenvalue $\lambda_i$).
    In general, 
	
	\begin{description}
	\item [(Fact~6)] It is easy to work out that $d_i = \frac{-c_i}{1-\frac{\lambda_i}{|D|}}$.
	\end{description}

    Thus, we have:
    \[\arraycolsep=1.4pt\def\arraystretch{1.2}
        \begin{array}{rcl@{\quad}l}
			\lefteqn{\rho_{F_j,F_k}}\\
             & = & \frac{\sum_{t\in D}
            \delF{F_j}{t}\delF{F_k}{t}}{|D|\stddev{F_j(D)}\stddev{F_k(D)}} & \mbox{ (definition of $\rho$)}
            \\ & = & \frac{
            (D\vec{w_j} - c_j\vec{1})^T (D\vec{w_k} - c_k\vec{1})}{|D|\stddev{F_j(D)}\stddev{F_k(D)}}
            \\ & = & \frac{
                (D^e\vec{w}_j^e)^T D^e\vec{w}_k^e}{|D|\stddev{F_j(D)}\stddev{F_k(D)}} & \mbox{ 
               $w_i^e=\left[\begin{array}{c} -c_i\\\vec{w}_i\end{array}\right]$}
            \\ & = & \frac{
            \vec{w}_j^{eT} {D^e}^T D^e\vec{w}_k^e}{|D|\stddev{F_j(D)}\stddev{F_k(D)}}
            \\ & = & \frac{
            \vec{w}_j^{eT} \lambda_k \vec{w}_k^e}{|D|\stddev{F_j(D)}\stddev{F_k(D)}}
               &\mbox{ (Fact~5,6), ${D^e}^T D^e\vec{w}_k^e = \lambda_k \vec{w}_k^e$}
            \\ & = & 0 &\mbox{ (eigenvectors are orthogonal)}
        \end{array}
    \]
    When Condition~1 does not hold, Fact~5 would not hold, but Fact~6 continues to hold, and hence
    by continuity, if $|\frac{\lambda_i}{|D|}|$ is close to $0$,  then $d_i$ will be close
    to $-c_i$, and $\rho_{F_j,F_k}$ will be close to $0$.

\end{proof}

\section{Proof of Proposition~\ref{PROP:EXISTENCE}}\label{app:D}
\begin{proof}
	\begin{sloppypar}
    We show that the \di $\Phi := \forall{f,g\in\CC}: f(D) = g(D) \Rightarrow
    f(t) - g(t) = 0$, is the required \di over $D$ to detect tuples that are
    \nc with respect to $\CC$ and $[D;Y]$.
	\end{sloppypar}
		
    First, we claim that $\Phi$ is a \di for $D$. For this, we need to
    prove that every tuple in $D$ satisfies $\Phi$. 
    Consider any $t'\in D$. We need to prove
    that $f(t') = g(t')$ for all $f, g\in\CC$ s.t. $f(D)=g(D)=Y$. Since $t'\in
    D$, and since $f(D)=g(D)$, it follows that $f(t')=g(t')$. This shows that
    $\Phi$ is a \di for every tuple in $D$.

    Next, we claim that $\Phi$ is {\em{not satisfied}} by exactly tuples that
    are \nc\ w.r.t. $\CC$ and $[D;Y]$. Consider any $t'$ such that
    $\neg\Phi(t')$. By definition of $\Phi$, it follows that there exist $f,
    g\in\CC$ s.t. $f(D)=g(D)=Y$, but $f(t')\neq g(t')$. This is equivalent to
    saying that $t'$ is \nc, by definition. \end{proof}

\section{Motivation for disjunctive \dis}
We now provide an example motivating the need for disjunctive \dis.
\begin{example}
	PCA-based approach fails in cases where there exist different piecewise linear
	trends within the data. If we apply PCA to learn \dis on the entire dataset of
	Fig.~\ref{normalPCA}, it will learn two low-quality \invariants, with very high
	variance. In contrast, partitioning the dataset into three partitions
	(Fig.~\ref{disjointPCA}), and then learning \invariants separately on each
	partition, will result in significant improvement of the learned \invariants.
\end{example}

\section{Implication for \Nc Tuples}\label{app:E}
Here, we provide justification for our definition of \nc tuples.
\begin{proposition}\label{PROP:ONE}
    If $t\in\DDom^m$ is a \nc\ tuple w.r.t. $\CC$ and $[D;Y]$, then
    for any $f\in\CC$ s.t. $f(D)=Y$, there exists a $g\in\CC$ s.t. 
    $g(D)=Y$ but $f(t)\neq g(t)$.
\end{proposition}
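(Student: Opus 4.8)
The plan is to unfold the definition of \nc tuple to extract a witnessing pair of functions, and then argue by a simple case split that any given $f$ must disagree with at least one member of that pair at $t$. The whole argument is essentially a pigeonhole observation on the value $f(t)$.

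First I would fix an arbitrary $f \in \CC$ with $f(D) = Y$, exactly as in the hypothesis. Since $t$ is \nc w.r.t.\ $\CC$ and $[D;Y]$, Definition~\ref{def:trustworthy} supplies two functions $f_1, f_2 \in \CC$ such that $f_1(D) = f_2(D) = Y$ and $f_1(t) \neq f_2(t)$. These play the role of the existential witnesses guaranteed by non-conformance, and I would rename them (rather than reuse $f,g$) to keep the quantifiers straight.

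The core observation is that $f(t)$ is a single element of $\coDom$, while $f_1(t)$ and $f_2(t)$ are \emph{distinct}; hence $f(t)$ cannot equal both of them. I would then split into two cases. If $f(t) \neq f_1(t)$, set $g := f_1$. Otherwise $f(t) = f_1(t)$, and since $f_1(t) \neq f_2(t)$ we get $f(t) \neq f_2(t)$, so set $g := f_2$. In either case $g \in \CC$, $g(D) = Y$, and $g(t) \neq f(t)$, which is precisely the required witness. This establishes the claim for the arbitrary $f$ we fixed, and therefore for all admissible $f$.

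There is no genuine technical obstacle here; the only point requiring care is the interplay of quantifiers—the definition of \nc provides an existential pair of functions, whereas the proposition must hold for \emph{every} admissible $f$. The argument goes through precisely because the distinctness $f_1(t) \neq f_2(t)$ leaves no room for the scalar $f(t)$ to coincide with both, guaranteeing a usable $g$ in every case.
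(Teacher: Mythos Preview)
Your proof is correct and follows essentially the same approach as the paper: extract the witnessing pair $f_1,f_2$ (the paper calls them $g,g'$) from the definition of \nc, then observe that $f(t)$ cannot coincide with both distinct values $f_1(t)$ and $f_2(t)$, so one of them serves as the required $g$. The only cosmetic difference is that you spell out both cases explicitly while the paper dispatches them with a ``WLOG''.
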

\begin{proof}
    By the definition of \nc\ tuple, there exist $g,g' \in \CC$ s.t. $g(D) = g'(D) = Y$, but $g(t)\neq g'(t)$.
    Now, given a function $f \in \CC$ s.t. $f(D)=Y$, the value $f(t)$ can be either equal to $g(t)$ or $g'(t)$, but
    not both. WLOG, say $f(t)\neq g(t)$. Then, we have found a function $g$ s.t. 
    $g(D)=Y$ but $f(t)\neq g(t)$, which completes the proof.
\end{proof}

Note that even when we mark $t$ as \nc, it is possible that the learned model
makes the correct prediction on that tuple. However, there is a good chance that it makes a 
different prediction. Hence, it is useful to be aware of \nc
tuples.

\subsection*{\DIs as Preconditions for Trusted Machine Learning}
Let $\CC$ denote a class of functions. Given a dataset $D$, suppose that a
tuple $t$ is \nc. This means that there exist $f, g\in\CC$ s.t. $f(t) \neq
g(t)$, but $f(D)=g(D)$. Now, consider the logical claim that $f(D) = g(D)$.
Clearly, $f$ is not identical to $g$ since $f(t) \neq g(t)$. Therefore, there
is a nontrivial ``proof'' (in some logic) of the fact that ``for all tuples
$t\in D: f(t)=g(t)$''. This ``proof'' will use some properties of $D$, and let
$\phi$ be the formula denoting these facts. If $\phi_D$ is the characteristic
function for $D$, then the above argument can be written as, $$ \phi_D(\vec{A})
\Rightarrow \phi(\vec{A}), \quad \mbox{ and } \quad \phi(\vec{A}) \Rightarrow
f(\vec{A}) = g(\vec{A}) $$ where $\Rightarrow$ denotes logical implication.

\looseness-1 In words, $\phi$ is a \di for $D$ and it
serves as a {\em{precondition}} in the ``correctness proof'' that shows (a
potentially machine-learned model) $f$ is equal to (potentially a ground truth)
$g$. If a tuple $t$ fails to satisfy the precondition $\phi$, then it is
possible that the prediction of $f$ on $t$ will not match the ground truth
$g(t)$.

\begin{example}\label{ex:two}
	\looseness-1
	 Let $D = \{(0,1), (0,2), (0,3)\}$ be a dataset with two attributes $A_1,
	 A_2$, and let the output $Y$ be $1$, $2$, and $3$, respectively. Let $\CC
	 \subseteq ((\Real\times\Real)\mapsto\Real)$ be the class of linear functions
	 over two variables $A_1$ and $A_2$. Consider a new tuple $(1,4)$. It is
	 \nc\ since there exist two different functions, namely $f(A_1,A_2) = A_2$ and
	 $g(A_1,A_2) = A_1+A_2$, that agree with each other on $D$, but disagree on
	 $(1,4)$. In contrast, $(0,4)$ is not \nc\ because there is no function in
	 $\CC$ that maps $D$ to $Y$, but produces an output different from $4$. We
	 apply Proposition~\ref{PROP:EXISTENCE} on the sets $D$, $Y$, and $\CC$. Here,
	 $\CC$ is the set of all linear functions given by $\{\alpha A_1 + A_2 \mid
	 \alpha\in\Real\}$.
         The \di $\Phi$,
	 whose negation characterizes the \nc\ tuples w.r.t. $\CC$ and
	 $[D;Y]$, is $\forall{\alpha_1,\alpha_2}: \alpha_1 A_1 + A_2
	 = \alpha_2 A_1 + A_2 $, which is equivalent to $A_1 = 0$.
\end{example}

\subsection*{Sufficient Check for \Nc Tuples}
\looseness-1 In practice, finding \dis that are necessary and sufficient for
detecting if a tuple is \nc is difficult. Hence, we focus on weaker \invariants
whose violation is sufficient, but not necessary, to classify a tuple as \nc.
We can use such \invariants to get a procedure that has false negatives (fails
to detect some \nc\ tuples), but no false positives (never identifies a tuple
as \nc\ when it is not).

\subsubsection*{Model Transformation using Equality \Invariants}
For certain \dis, we can prove that a \invariant violation by $t$ implies
that $t$ is \nc by showing that those \invariants can transform a model
$f$ that works on $D$ to a different model $g$ that also works on $D$, but
$f(t)\neq g(t)$. We claim that equality \invariants (of the form $F(\vec{A}) =
0$) are useful in this regard. First, we make the point using the scenario from
Example~\ref{ex:two}.

\begin{example}\label{ex:three}
    Consider the set $\CC$ of functions, and the annotated dataset $[D ; Y]$
    from Example~\ref{ex:one}. The two functions $f$ and $g$, where $f(A_1,A_2)
    = A_2$ and $g(A_1,A_2) = A_1+A_2$, are equal when restricted to $D$; that
    is, $f(D) = g(D)$. What property of $D$ suffices to prove $f(A_1,A_2)=
    g(A_1,A_2)$, i.e., $A_2 = A_1 + A_2$? It is $A_1 = 0$. Going the other way,
    if we have $A_1 = 0$, then $f(A_1,A_2) = A_2 = A_1 + A_2 = g(A_1,A_2)$.
    Therefore, we can use the equality \invariant $A_1 = 0$ to transform the
    model $f$ into the model $g$ in such a way that the $g$ continues to match
    the behavior of $f$ on $D$. Thus, an equality \invariant can be exploited to
    produce multiple different models starting from one given model. Moreover,
    if $t$ violates the equality \invariant, then it means that the models, $f$
    and $g$, would not agree on their prediction on $t$; for example, this
    happens for $t = (1,4)$.
\end{example}

Let $F(\vec{A}) = 0$ be an equality \invariant for the dataset $D$. If a learned
model $f$ returns a real number, then it can be transformed into another model
$f + F$, which will agree with $f$ only on tuples $t$ where $F(t) = 0$. Thus,
in the presence of equality \invariants, a learner can return $f$ or its
transformed version $f + F$ (if both models are in the class $\CC$). This
condition is a ``relevancy'' condition that says that $F$ is ``relevant'' for
class $\CC$. If the model does not return a real, then we can still use
equality \invariants to modify the model under some assumptions that include
``relevancy'' of the \invariant.

\begin{figure}[t]
	\centering
	\resizebox{0.9\columnwidth}{!}{
	\begin{subfigure}{.2\textwidth}
	  \centering
	  \includegraphics[width=1\linewidth]{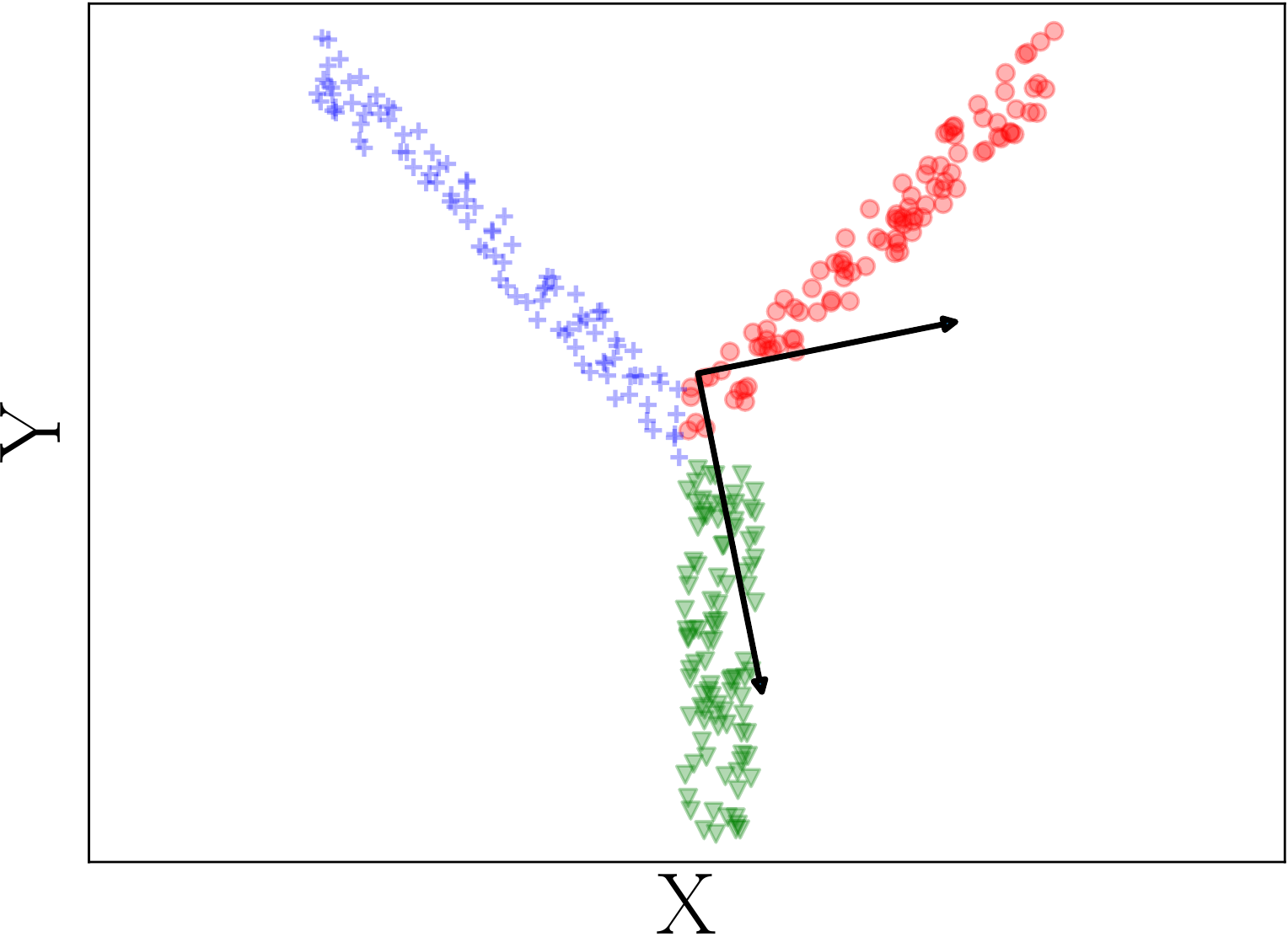}
	  \vspace{-5mm}
	  \caption{PCA}
	  \label{normalPCA}
	\end{subfigure}%
	\hspace{5mm}
	\begin{subfigure}{.2\textwidth}
	  \centering
	  \includegraphics[width=1\linewidth]{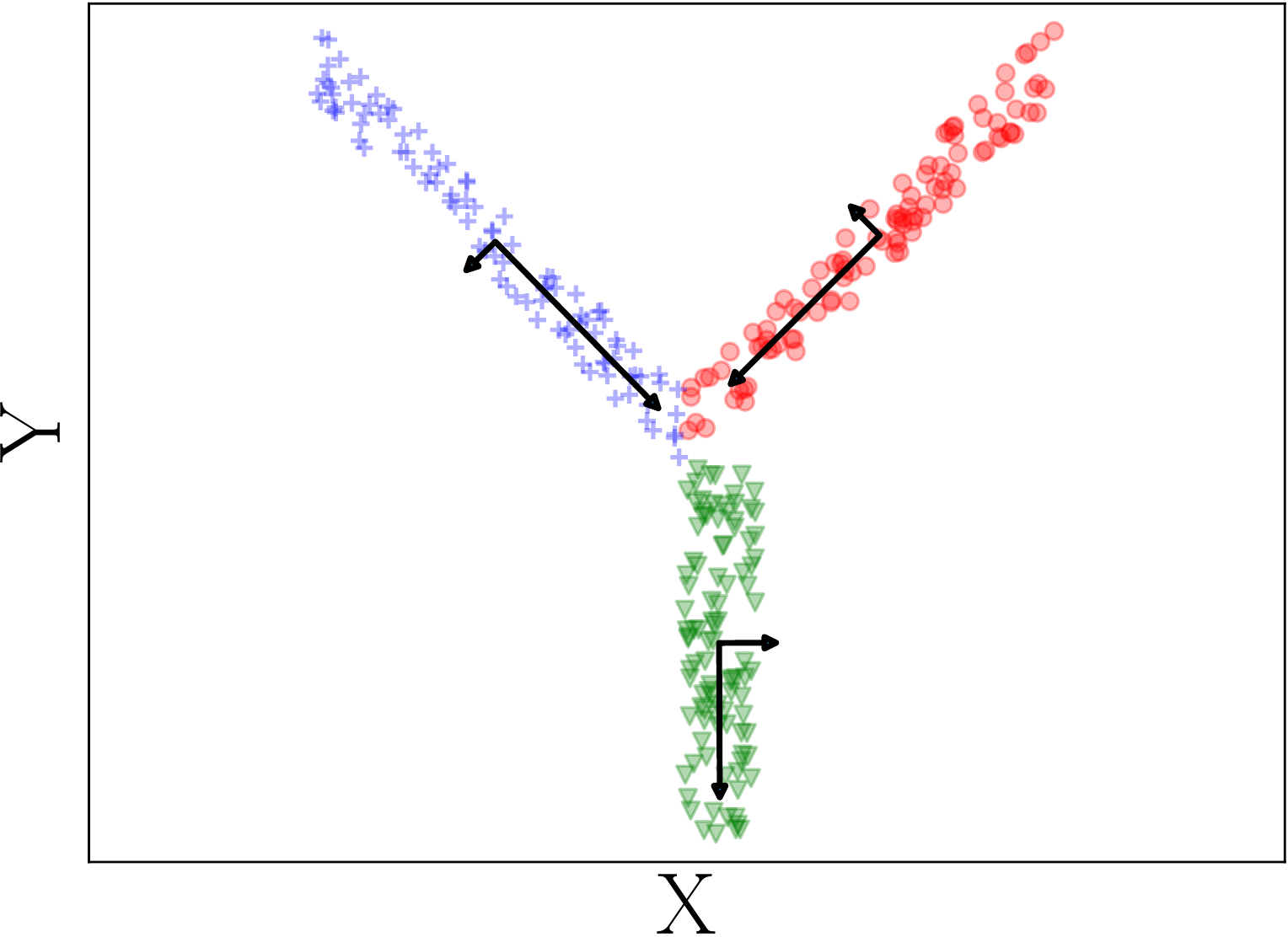}
	  \vspace{-5mm}
	  \caption{Disjoint PCA}
	  \label{disjointPCA}
	\end{subfigure}
	}
	\vspace{-3mm}
	\caption{Learning PCA-based \invariants globally results in low quality \invariants
	when data satisfies strong local \invariants.}
	\label{fig:pcaComparison}
	\vspace{2mm}
\end{figure}

\subsubsection*{A Theorem for Sufficient Check for Unsafe Tuples}
We first formalize the notions of \emph{nontrivial} datasets---which are
annotated datasets such that at least two output labels differ---and
\emph{relevant} \invariants---which are \invariants that can be used to transform
models in a class to other models in the same class.

\smallskip
\paragraph{Nontrivial.}
    An annotated dataset $[D;Y]$ is \emph{nontrivial} if there exist $i, j$ s.t. $y_i
    \neq y_j$. 
\smallskip
 
\paragraph{Relevant.}
    A \invariant $F(\vec{A})=0$ is \emph{relevant} to a class $\CC$ of models
    if whenever $f\in \CC$, then
    $\lambda t: f( \mathtt{ite}(\alpha F(t), t^c, t) ) \in \CC$
    for a constant tuple $t^c$ and real number $\alpha$.
    The if-then-else function $\mathtt{ite}(r, t^c, t)$ returns $t^c$ when $r=1$,
    returns $t$ when $r=0$, and is free to return anything otherwise.
    If tuples admit addition, subtraction, and scaling, then one 
	such if-then-else function is
    $t + r*(t^c-t)$.

\smallskip

We now state a sufficient condition for identifying \nc tuples. 

\begin{theorem}[Sufficient Check for \Nc Tuples]\label{THM:NFP}
    Let $[D;Y] \subset \DDom^m\times\coDom$ be an annotated dataset, 
    $\CC$ be a class of functions, and
    $F$ be a \view on $\DDom^m$ s.t.
\begin{itemize}\setlength{\itemindent}{0.5in}
        \item[A1.] $F(\vec{A})=0$ is a strict \invariant for $D$, 
        \item[A2.] $F(\vec{A})=0$ is relevant to $\CC$,
        \item[A3.] $[D;Y]$ is nontrivial, and
        \item[A4.] there exists $f\in\CC$ s.t. $f(D) = Y$.
    \end{itemize}
    For $t\in \DDom^m$, if $F(t)\neq 0$, then $t$ is \nc.
\end{theorem}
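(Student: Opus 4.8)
The plan is to leverage assumption A4 to obtain a function $f \in \CC$ with $f(D) = Y$, and then use the relevance property A2 to build a second function $g \in \CC$ that matches $f$ on all of $D$ but disagrees with $f$ at the query tuple $t$. Exhibiting such a pair $(f, g)$ immediately establishes that $t$ is \nc\ via Definition~\ref{def:trustworthy}, since it requires exactly two functions in $\CC$ that agree on $D$ (both equal to $Y$) yet split on $t$.

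First I would fix the scaling so that the if-then-else gadget fires precisely at $t$. Since $F(t) \neq 0$, take $\alpha := 1/F(t)$ and set
\[
    g := \lambda t': f(\mathtt{ite}(\alpha F(t'), t^c, t')),
\]
with the constant tuple $t^c$ to be chosen below; relevance (A2) guarantees $g \in \CC$. The choice of $\alpha$ makes $\alpha F(t) = 1$, so $\mathtt{ite}(\alpha F(t), t^c, t) = t^c$ and therefore $g(t) = f(t^c)$. Meanwhile, for every $t' \in D$ the strict-invariant assumption A1 gives $F(t') = 0$, hence $\alpha F(t') = 0$, $\mathtt{ite}(0, t^c, t') = t'$, and $g(t') = f(t') = y_{t'}$ independently of the choice of $t^c$. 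This already shows $g(D) = f(D) = Y$ no matter how $t^c$ is picked.

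The remaining task---and the only place where nontriviality A3 enters---is to choose $t^c$ so that $f(t^c) \neq f(t)$. Nontriviality supplies indices $i, j$ with $y_i \neq y_j$, so $f(t_i) = y_i \neq y_j = f(t_j)$ for $t_i, t_j \in D$; since these two values differ, at least one of them differs from $f(t)$, and I would take $t^c$ to be the corresponding tuple. Then $g(t) = f(t^c) \neq f(t)$ while $g(D) = Y = f(D)$, so $f$ and $g$ witness that $t$ is \nc.

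I expect the main point needing care to be the precise reading of the relevance assumption A2: it must be interpreted as guaranteeing membership $g \in \CC$ for the specific $\alpha$ and $t^c$ the proof selects (i.e., quantified over all reals $\alpha$ and all constant tuples $t^c$), and one must confirm that the selected if-then-else function acts as the identity on $D$ and as the constant map to $t^c$ at $t$. Once that interpretation is pinned down, the rest is bookkeeping driven by A1 (collapsing the gadget on $D$), A4 (the base function), and A3 (a target value distinct from $f(t)$).
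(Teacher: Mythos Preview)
Your proposal is correct and follows essentially the same route as the paper: fix $f$ from A4, use the relevance gadget from A2 to define $g(t') = f(\mathtt{ite}(\alpha F(t'), t^c, t'))$, invoke A1 to collapse the gadget on $D$ so that $g(D)=Y$, and use A3 to pick $t^c \in \{t_i, t_j\}$ with $f(t^c)\neq f(t)$. The only cosmetic difference is that you explicitly set $\alpha = 1/F(t)$ so that $\alpha F(t)=1$ and the $\mathtt{ite}$ provably returns $t^c$ at $t$, whereas the paper's proof silently takes $\alpha=1$ and leans on the ``free to return anything otherwise'' clause in the definition of $\mathtt{ite}$; your handling of this point is cleaner.
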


\begin{proof}
    WLOG, let $t_1, t_2$ be the two tuples in $D$ s.t. $y_1\neq y_2$ (A3). 
    Since $f(D) = Y$ (A4), it follows that
    $f(t_1) = y_1 \neq y_2 = f(t_2)$. Let $t$ be a new tuple s.t. $F(t)\neq 0$.
    Clearly, $f(t)$ can not be equal to both $y_1$ and $y_2$.
    WLOG, suppose $f(t) \neq y_1$.
    Now, consider the function $g$ defined by
    $
    \lambda \tau : f( \mathtt{ite}(F(\tau), t_1, \tau) )
    $.
    By~(A2), we know that $g\in \CC$.
    Note that $g(D) = Y$ since for any tuple $t_i \in D$, $F(t_i)=0$ (A1),
    and hence $g(t_i) = f( \mathtt{ite}(0, t_1, t_i)) = f(t_i) = y_i$.
    Thus, we have two models, $f$ and $g$, s.t. $f(D) = g(D) = Y$.
    To prove that $t$ is a \nc\ tuple, 
    we have to show that $f(t) \neq g(t)$. Note that $g(t) = f( \mathtt{ite}(F(t), t_1, t) ) = f(t_1) = y_1$ (by definition of $g$).
    Since we already had $f(t) \neq y_1$, it follows that we have $f(t)\neq g(t)$. This completes the proof.
\end{proof}

We caution that our definition of \nc\ is liberal: existence of even one pair
of functions $f,g$---that differ on $t$, but agree on the training set $D$---is
sufficient to classify $t$ as \nc. It ignores issues related to the
probabilities of finding these models by a learning procedure. Our intended use
of Theorem~\ref{THM:NFP} is to guide the choice for the class of \invariants,
given the class $\CC$ of models, so that we can use violation of a \invariant
in that class as an indication for caution. For most classes of models, linear
arithmetic \invariants are relevant. Our formal development has completely
ignored that data (in machine learning applications) is noisy, and exact
equality \invariants are unlikely to exist. However, the development above can
be extended to the noisy case by replacing exact equality is replaced by
approximate equality. For example, when learning from dataset $D$ and
ground-truth $f'$, we may not always learn a $f$ that exactly matches $f'$ on
$D$, but is only close (in some metric) to $f'$. Similarly, equality
\invariants need not require $F(t) = 0$ for all $t \in D$, but only $F(t)
\approx 0$ for some suitable definition of approximate equality. For ease of
presentation, we have restricted ourselves to the simpler setting, which
nevertheless brings out the salient points.

\begin{example}\label{ex:four}
    Consider the annotated dataset $[D;Y]$ and the class $\CC$, from
    Example~\ref{ex:three}. Consider the equality \invariant $F(A_1,A_2) = 0$,
    where the \view $F$ is defined as $F(A_1, A_2) = A_1$. Clearly, $F(D) = \{0
    \; 0\; 0\}$, and hence, $F(A_1,A_2) = 0$ is a \invariant for $D$. The
    \invariant is also relevant to the class of linear models $\CC$. Clearly,
    $[D;Y]$ is nontrivial, since $y_1 = 1 \neq 2 = y_2$. Also, there exists
    $f\in\CC$ (e.g., $f(A_1,A_2) = A_2$) s.t. $f(D) = Y$. Now, consider the
    tuple $t = (1,4)$. Since $F(t) = 1 \neq 0$, Theorem~\ref{THM:NFP} implies
    that $t$ is \nc.
\end{example}

\subsection*{SQL Check Constraints} Due to the simplicity of the conformance
language to express \dis, they can be easily enforced as SQL check constraints
to prevent insertion of \nc tuples to a database.

\section{Applications of \dis} In database systems, \dis can be used to detect
change in data and query workloads, which can help in database
tuning~\cite{koch2013}. They have application in data cleaning (error detection
and missing value imputation): the violation score serves as a measure of
error, and missing values can be imputed by exploiting relationships among
attributes that \dis capture. \Dis can detect outliers by exposing tuples that
significantly violate them. Another interesting data management application is
\emph{data-diff}~\cite{DBLP:conf/kdd/SuttonHGC18} for exploring differences
between two datasets: our disjunctive \invariants can explain how different
partitions of two datasets vary.

\smallskip

In machine learning, \dis can be used to suggest when to retrain a
machine-learned model. Further, given a pool of machine-learned models and the
corresponding training datasets, we can use \dis to \emph{synthesize} a new
model for a new dataset. A simple way to achieve this is to pick the model such
that \invariants learned from its training data are minimally violated by the
new dataset. Finally, identifying non-conforming tuples is analogous to
{\em{input validation}} that performs sanity checks on an input before it is
processed by an application.

\section{Visualization of Local Drift}

When the dataset contains instances from multiple classes, the drift may be
just local, and not global. Fig.~\ref{fig:4CR} demonstrates a scenario for the
4CR dataset over the EVL benchmark. If we ignore the color/shape of the tuples,
we will not observe any significant drift across different time steps.

\begin{figure}[t]
 	\centering
	\vspace{-1mm}	
 	\includegraphics[width=\linewidth]{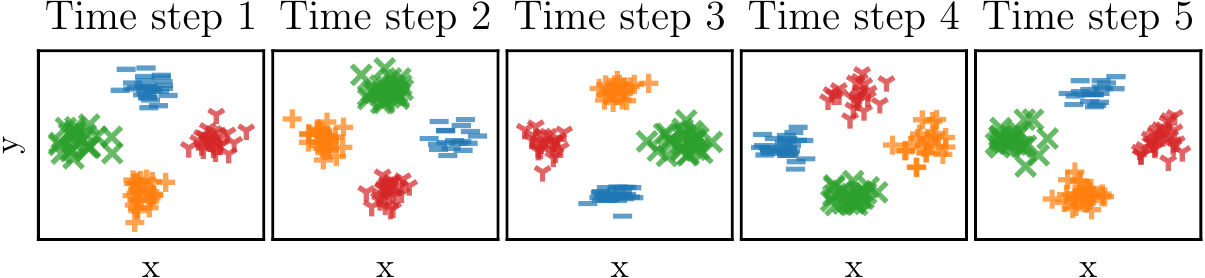}	
	\vspace{-6mm}	
 	 \caption{\looseness-1 Snapshots over time for 4CR dataset with local drift. 
 	 It reaches maximum drift from the initial distribution at time step 3 and
 	 goes back to the initial distribution at time step 5.}
	 \vspace{3mm}
 	\label{fig:4CR}

	\includegraphics[width=0.9\linewidth]{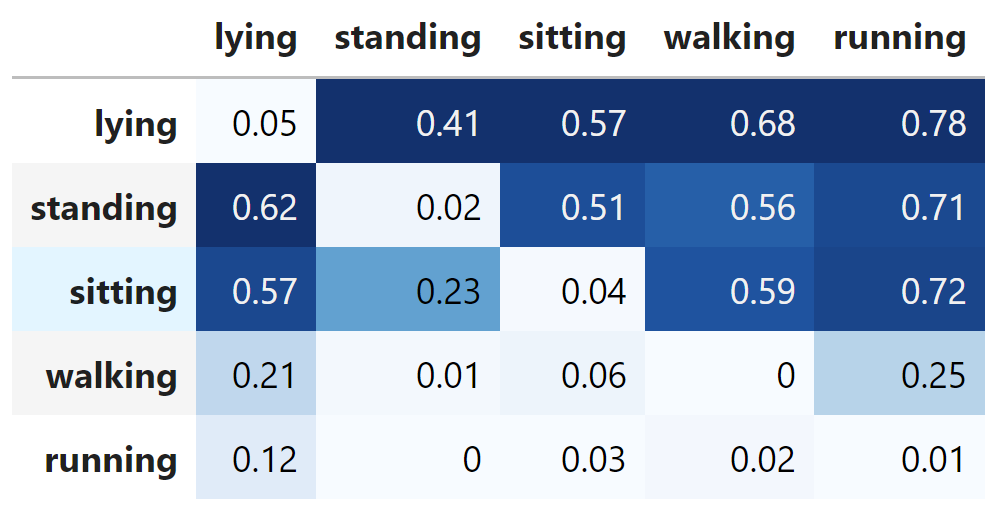}	
		\vspace{-3mm}	
	\caption{Inter-activity \invariant violation heat map. 
	 Mobile activities violate the \invariants of the sedentary
	 activities more.} 
	 \vspace{4mm}	
	\label{fig:har-inter-activity-drift-heatmap}
\end{figure}
 
 \section{More Data-Drift Experiments}
\looseness-1 \noindent \emph{Inter-activity drift.} Similar to inter-person
\invariant violation, we also compute inter-activity \invariant violation over
the HAR dataset (Fig.~\ref{fig:har-inter-activity-drift-heatmap}). Note the
asymmetry of violation scores between activities, e.g., \texttt{running} is
violating the \invariants of \texttt{standing} much more than the other way
around. A close observation reveals that, all mobile activities violate all
sedentary activities more than the other way around. This is because, the
mobile activities behave as a ``safety envelope'' for the sedentary activities.
For example, while a person walks, they also stand (for a brief moment); but
the opposite does not happen.

\section{Explaining Non-conformance}\label{extune}
When a serving dataset is determined to be sufficiently deviated or drifted
from the training set, the next step often is to characterize the difference. A
common way of characterizing these differences is to perform a causality or
responsibility analysis to determine which attributes are most responsible for
the observed drift (non-conformance). We use the violation values produced by
\dis, along with well-established principles of causality, to quantify
responsibility for non-conformance.

\smallskip 

\begin{figure*}
	\centering
	\resizebox{0.9\linewidth}{!}
	{
	\begin{subfigure}[b]{.16\textwidth}
	  \centering
	  \includegraphics[width=\linewidth, height=70mm]{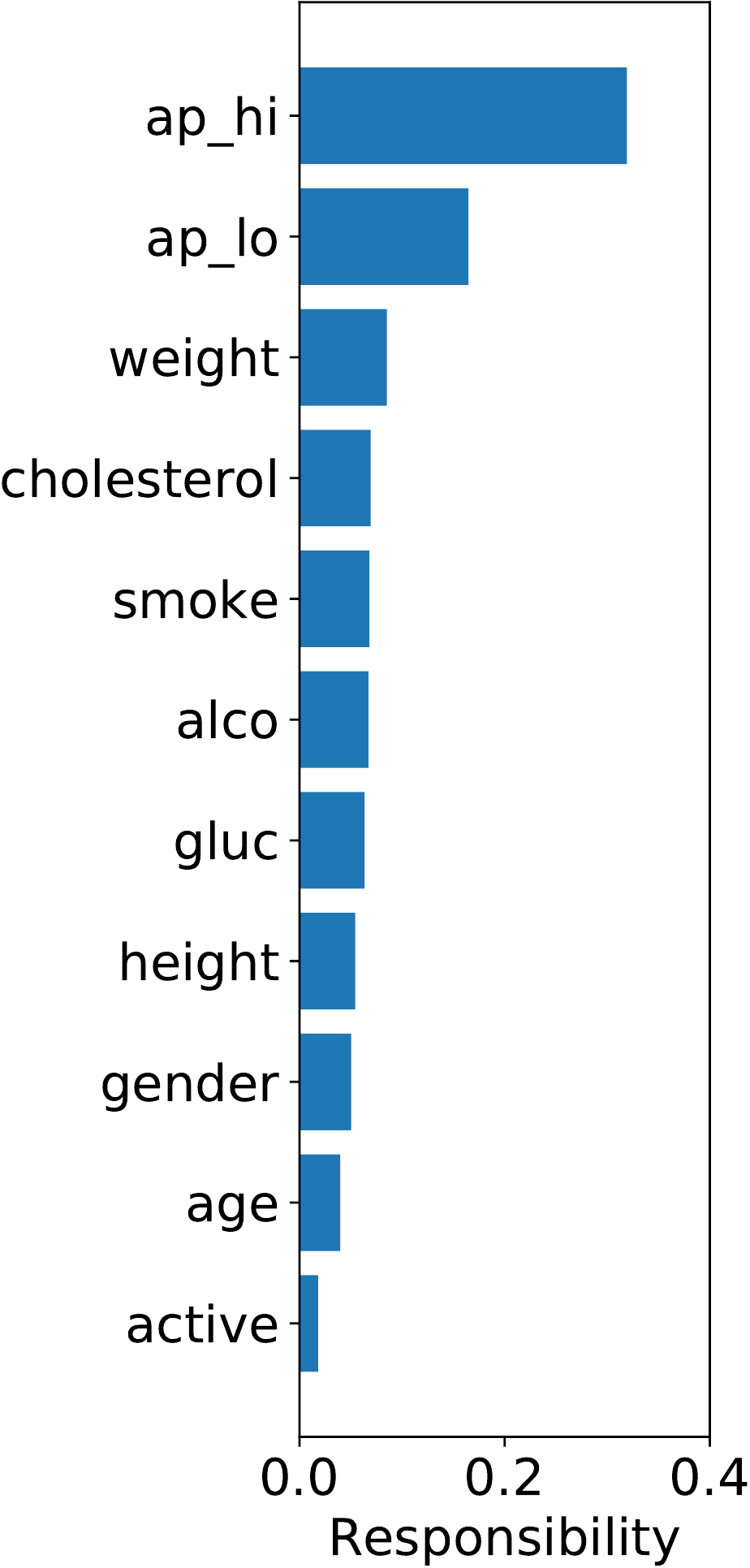}
	  \caption{}
	  \label{cardio}
	\end{subfigure}
	\hspace{2mm}
	\begin{subfigure}[b]{.16\textwidth}
	  \centering
	  \includegraphics[width=\linewidth, height=70mm]{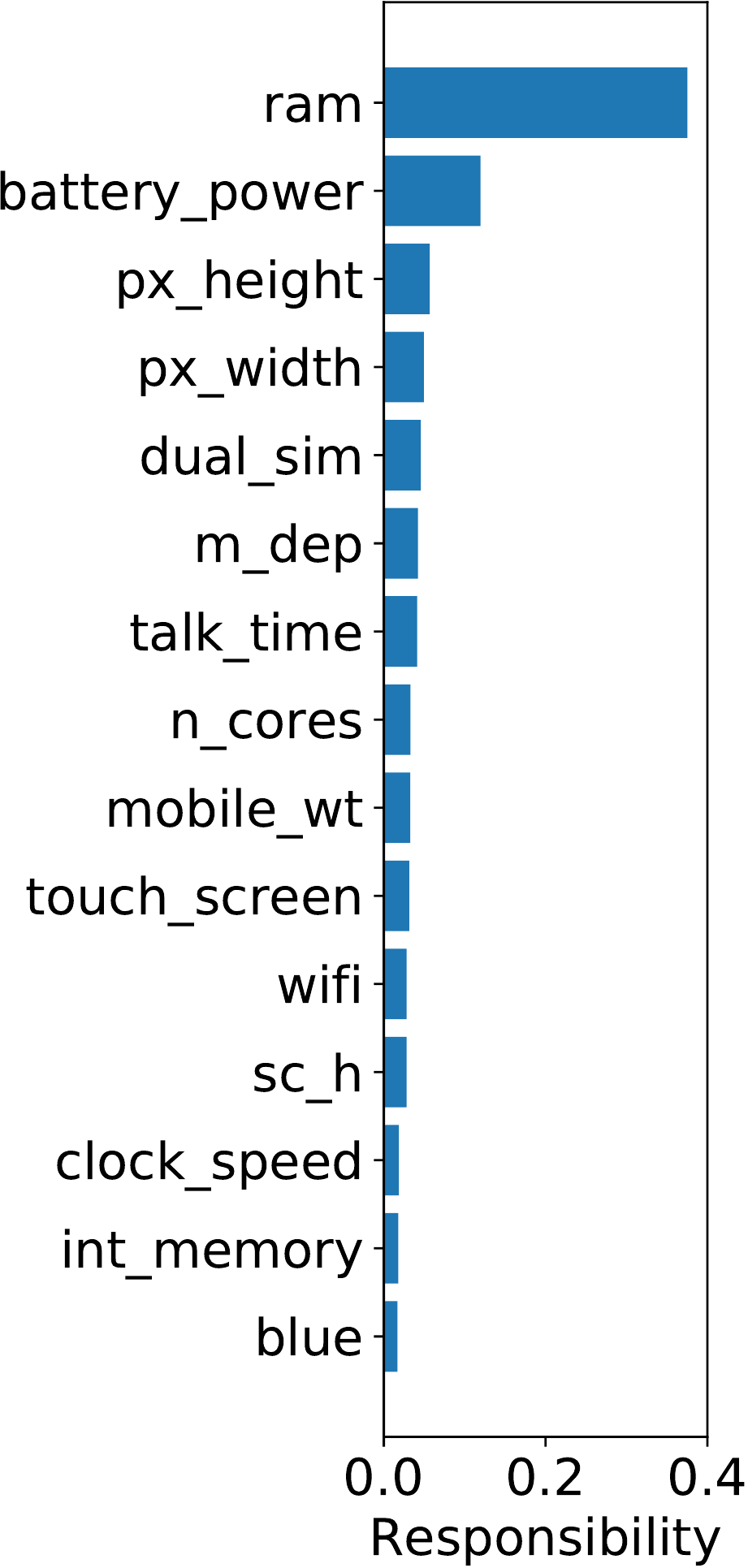}
	  \caption{}
	  \label{mobilePrice}
	\end{subfigure}
	\hspace{2mm}
	\begin{subfigure}[b]{.16\textwidth}
	  \centering
	  \includegraphics[width=\linewidth, height=70mm]{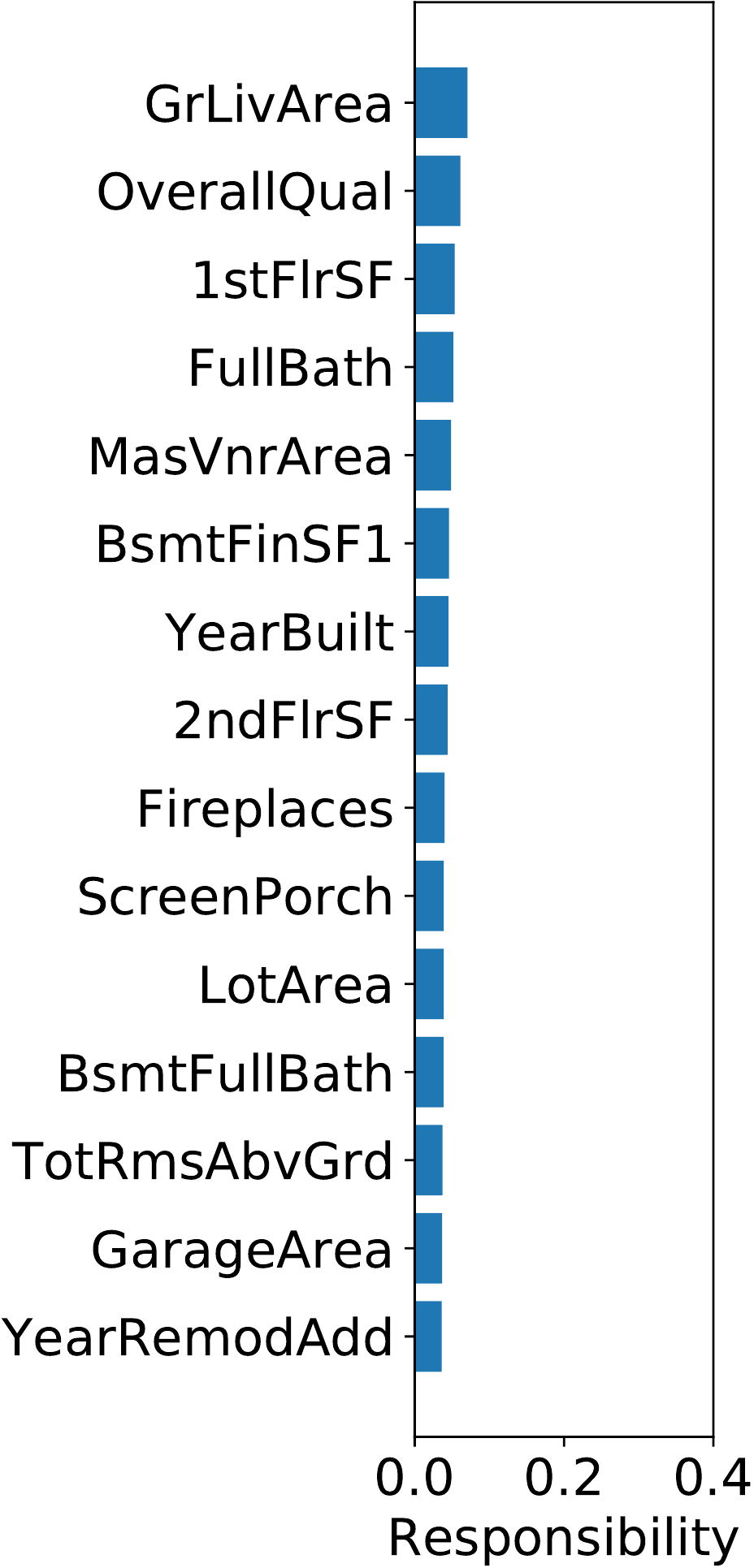}
	  \caption{}
	  \label{housePrice}
	\end{subfigure}
	\hspace{2mm}
	\begin{subfigure}[b]{.36\textwidth}
		\centering
		\includegraphics[width=.8\linewidth, height=70mm]{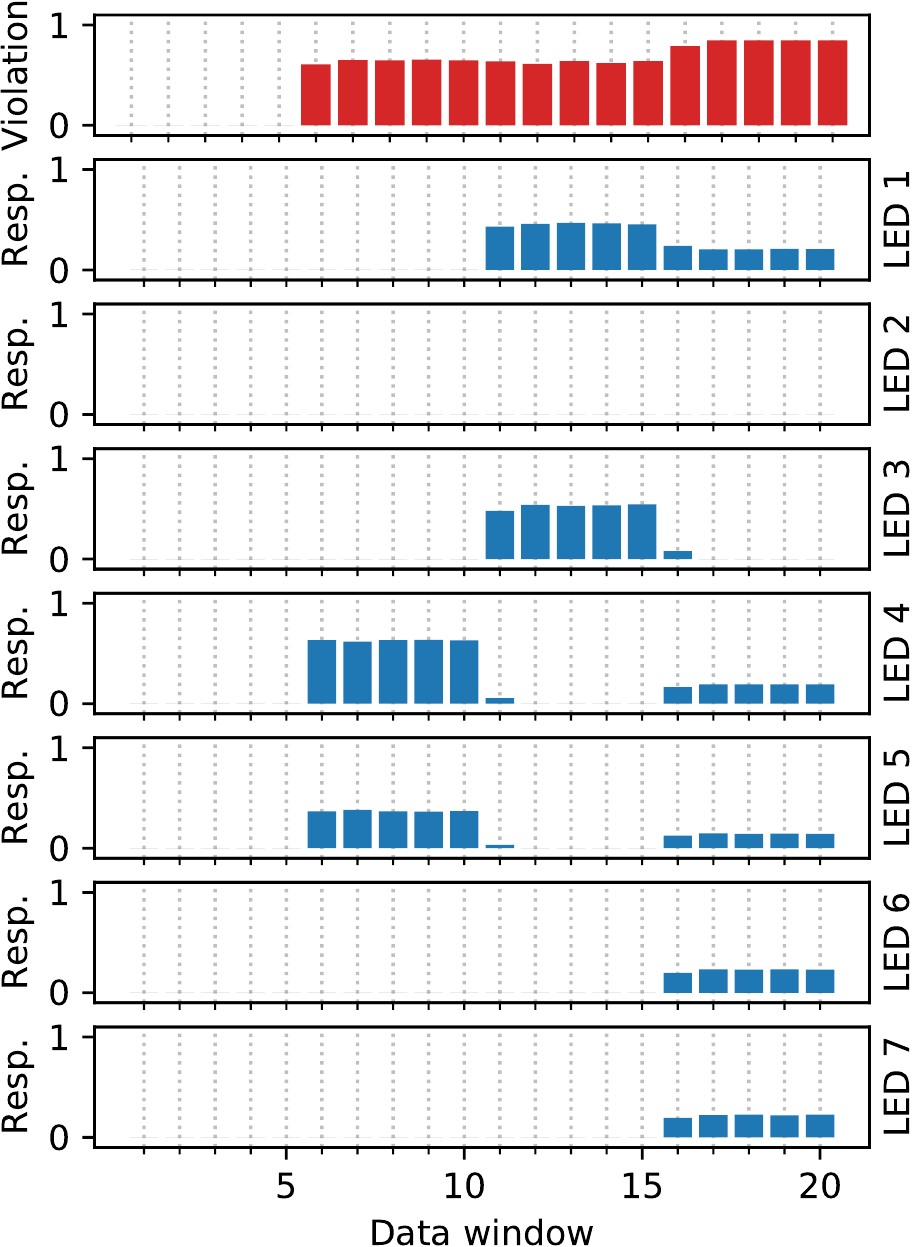}
	  	\caption{}
	    \label{fig:LED-drift}
	\end{subfigure}
	\vspace{0mm}
	}
	\caption{Responsibility assignment on attributes for drift on 
	(a)~Cardiovascular disease: trained on patients with no disease and served on patients with disease,
	(b)~Mobile Prices: trained on cheap mobiles and served on expensive mobiles and 
	(c)~House Prices: trained on house with price $<=$ 100K and served on house with price $>=$ 300K.
	(d)~Detection of drift on LED dataset. The dataset drifts every 5
		 windows (25,000 tuples). At each drift, a certain set of LEDs
		 malfunction and take responsibility of the drift.}
	\vspace{2mm}
	\label{fig:extune}
\end{figure*}

\noindent\textbf{\extune.} \looseness-1 We built a tool
\extune~\cite{DBLP:conf/sigmod/FarihaTRG20}, on top of \system, to compute the
responsibility values as described next. Given a training dataset $D$ and a
non-conforming tuple $t \in {\DDom}^m$, we measure the \emph{responsibility} of
the $i^{th}$ attribute $A_i$ towards the non-conformance as follows: (1)~We
intervene on $t.A_i$ by altering its value to the mean of $A_i$ over $D$ to
obtain the tuple $t^{(i)}$. (2)~In $t^{(i)}$, we compute how many additional
attributes need to be altered to obtain a tuple with no violation. If $K$
additional attributes need to be altered, $A_i$ has responsibility
$\frac{1}{K+1}$. (3)~This responsibility value for each tuple $t$ can be
averaged over the entire serving dataset to obtain an aggregate responsibility
value for $A_i$.
Intuitively, for each tuple, we are ``fixing'' the value of $A_i$ with a ``more
typical'' value, and checking how close (in terms of additional fixes required)
this takes us to a conforming tuple. The larger the number of additional fixes
required, the lower the responsibility of $A_i$.

\smallskip

\noindent\textbf{Datasets.} We use four datasets for this
evaluation: (1)~\emph{Cardiovascular Disease}~\cite{cardioSource} is a
real-world dataset that contains information about cardiovascular patients with
attributes such as height, weight, cholesterol level, glucose level, systolic
and diastolic blood pressures, etc. (2)~\emph{Mobile
Prices}~\cite{mobilePriceSource} is a real-world dataset that contains
information about mobile phones with attributes such as ram, battery power,
talk time, etc. (3)~\emph{House Prices}~\cite{housePriceSource} is a real-world
dataset that contains information about houses for sale with attributes such as
basement area, number of bathrooms, year built, etc. (4)~\emph{LED} (Light
Emitting Diode)~\cite{DBLP:journals/jmlr/BifetHKP10} is a synthetic benchmark.
The dataset has a digit attribute, ranging from 0 to 9, 7 binary
attributes---each representing one of the 7 LEDs relevant to the digit
attribute---and 17 irrelevant binary attributes. This dataset includes gradual
concept drift every 25,000 rows.

\smallskip

\noindent\textbf{Case studies.} \extune produces bar-charts of responsibility
values as depicted in Fig.~\ref{fig:extune}.
Figures~\ref{cardio},~\ref{mobilePrice}, and~\ref{housePrice} show the
explanation results for Cardiovascular Disease, Mobile Price, and House Price
datasets, respectively. For the cardiovascular disease dataset, the training
and serving sets consist of data for patients without and with cardiovascular
disease, respectively. For the House Price and Mobile Price datasets, the
training and serving sets consist of houses and mobiles with prices below and
above a certain threshold, respectively. As one can guess, we get many useful
insights from the non-conformance responsibility bar-charts such as: ``abnormal
(high or low) blood pressure is a key cause for non-conformance of patients
with cardiovascular disease w.r.t. normal people'', ``RAM is a distinguishing
factor between expensive and cheap mobiles'', ``the reason for houses being
expensive depends holistically on several attributes''.

Fig.~\ref{fig:LED-drift} shows a similar result on the LED dataset. Instead
of one serving set, we had 20 serving sets (the first set is also used as a training
set to learn \dis). We call each serving set a window where each window
contains 5,000 tuples. This dataset introduces gradual concept drift every
25,000 rows (5 windows) by making a subset of LEDs malfunctioning. As one can
clearly see, during the initial 5 windows, no drift is observed. In the next 5
windows, LED 4 and LED 5 starts malfunctioning; in the next 5 windows, LED 1
and LED 3 starts malfunctioning, and so on.

\begin{figure*}
	\centering
	\resizebox{0.7\textwidth}{!}
	{
	\begin{subfigure}[b]{.12\textwidth}
	  \centering
	  \includegraphics[width=1\linewidth]{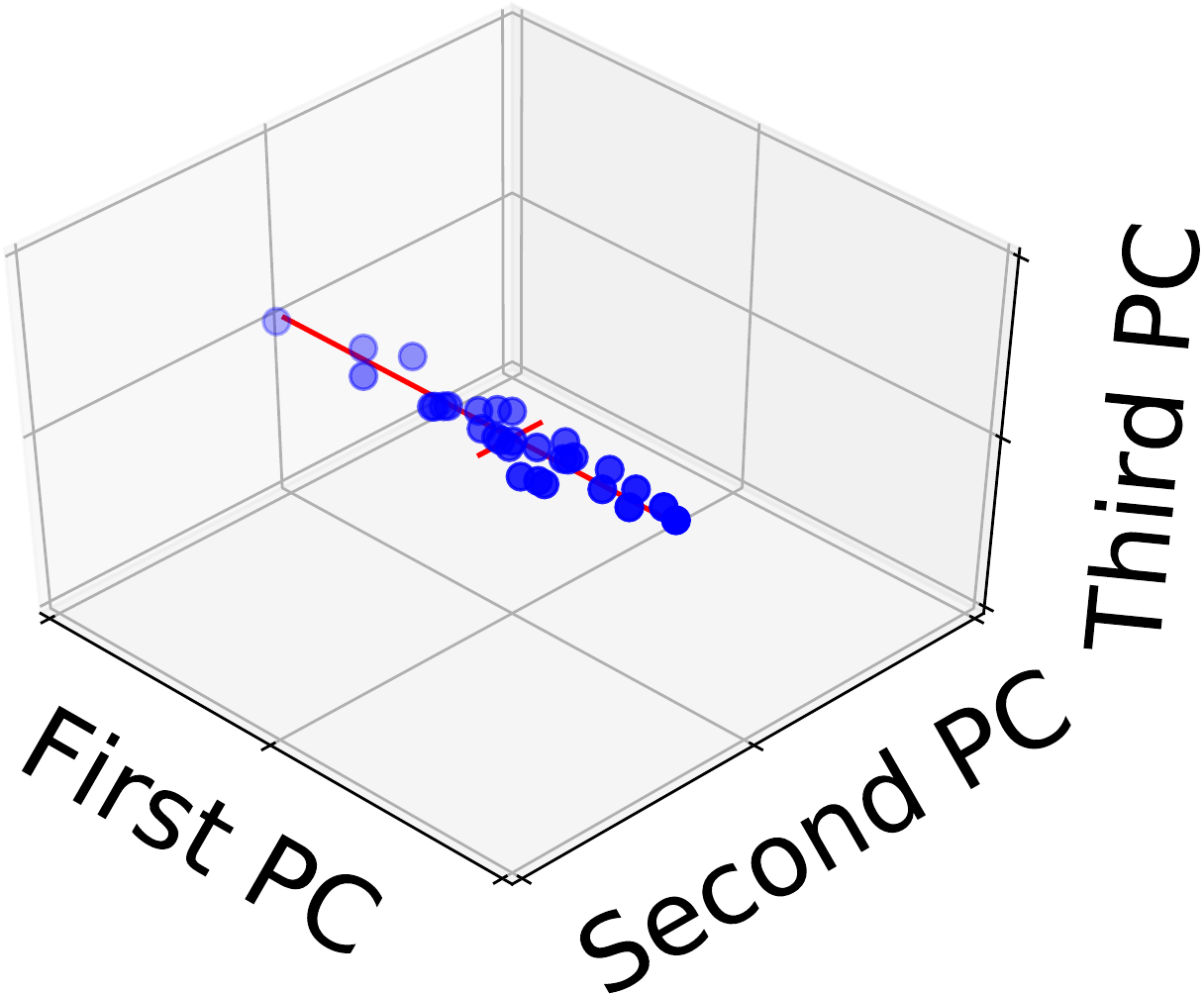}
	  \caption{}
	  \label{subfig:3dPCA}
	\end{subfigure}%
	\hspace{2mm}
	\begin{subfigure}[b]{.12\textwidth}
	  \centering
	  \includegraphics[width=1\linewidth]{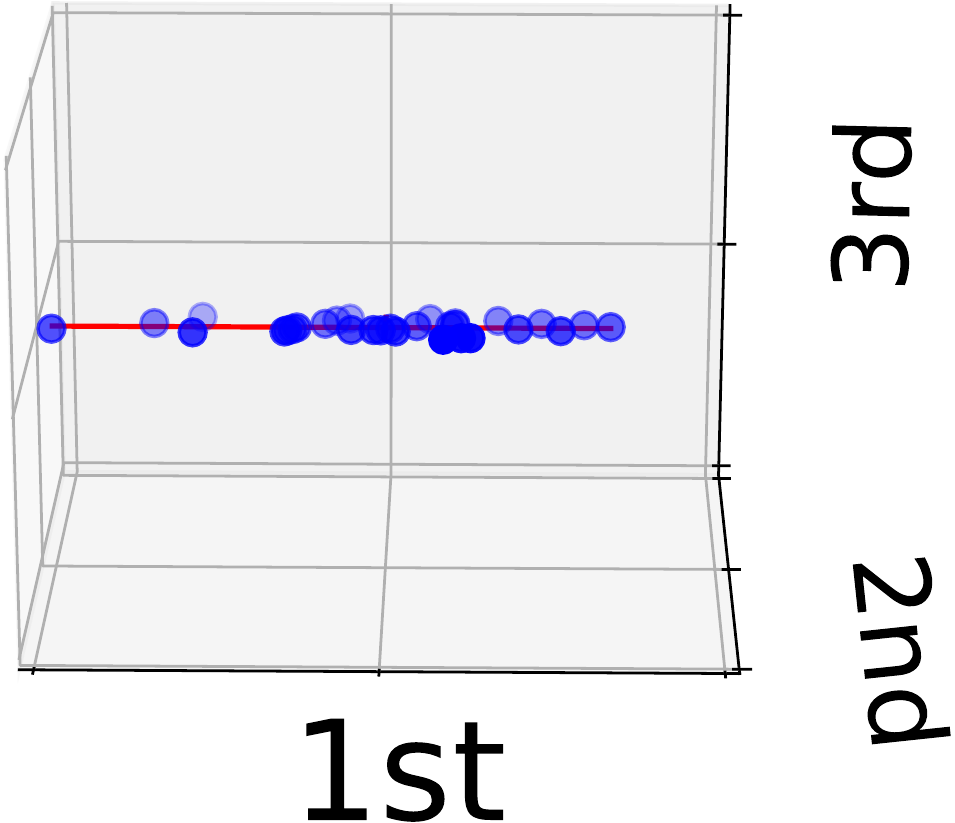}
	  \vspace{-5mm}
	  \caption{}
	  \label{subfig:firstPCA}
	\end{subfigure}
	\hspace{0mm}
	\begin{subfigure}[b]{.12\textwidth}
	  \centering
	  \includegraphics[width=1\linewidth]{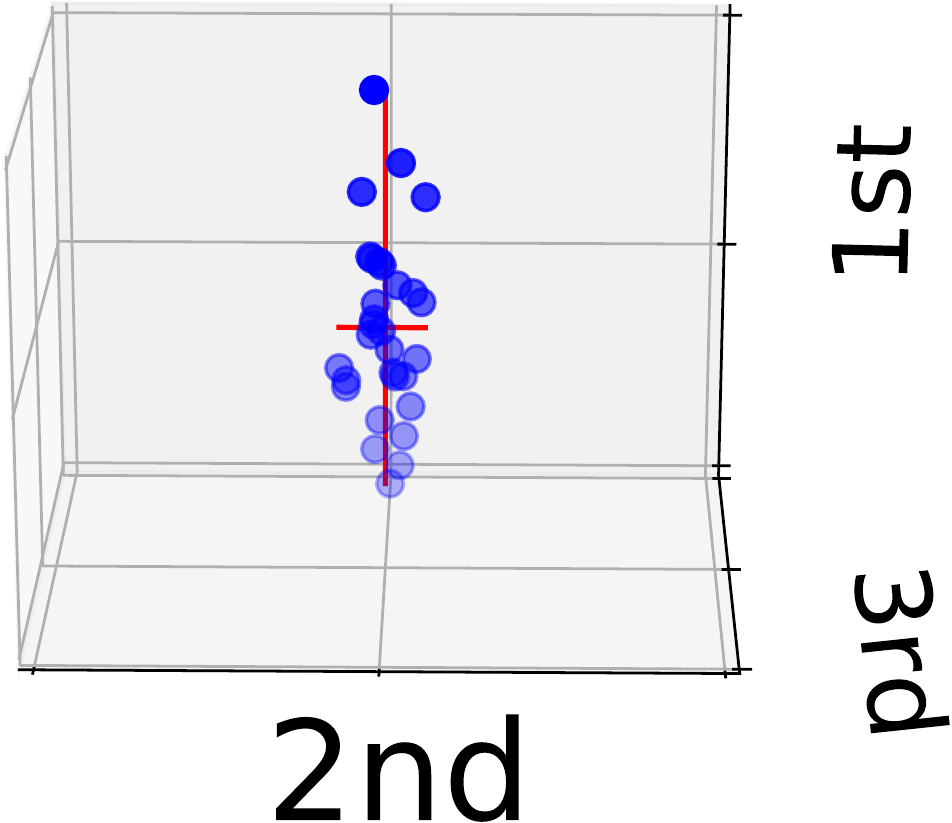}
	  \vspace{-5mm}
	  \caption{}
	  \label{subfig:secondPCA}
	\end{subfigure}
	\hspace{0mm}
	\begin{subfigure}[b]{.12\textwidth}
	  \centering
	  \includegraphics[width=1\linewidth]{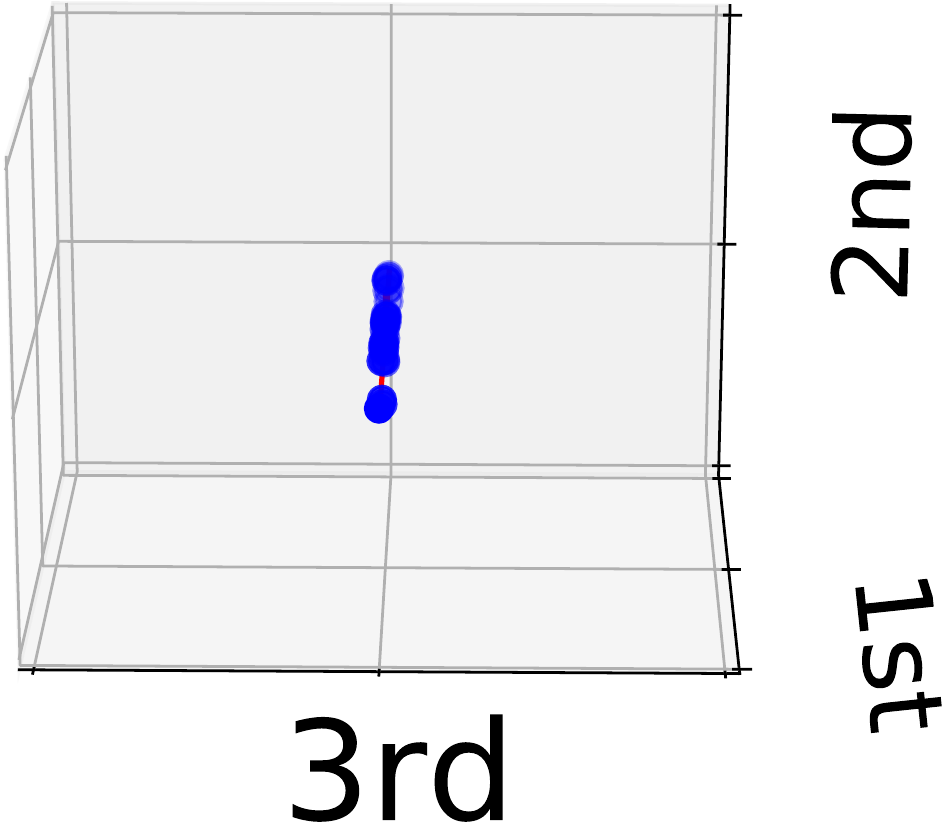}
	  \caption{}
	  \label{subfig:thirdPCA}
	\end{subfigure}
	}
	\caption{\small
	(a)~3D view of a set of tuples projected onto the space of principal components (PC). 
	 (b)~The first PC gives the projection with highest standard deviation and
	 thus constructs the weakest \di with a very broad range for its bounds.
	 (c)~The second PC gives a projection with moderate
	 standard deviation and constructs a relatively stronger \di.
	 (d)~The third PC gives the projection with lowest standard deviation and
	 constructs the strongest \di.
	 }
	 \label{fig:pcadetails} 
	 \vspace{0mm} 
 \end{figure*}

\section{Contrast with Prior Art}

\paragraph{Simple \dis vs. least square techniques.} Note that the lowest
variance principal component of $[\vec{1} ; D_N]$ is related to the ordinary
least square (OLS)---commonly known as linear regression---estimate for
predicting $\vec{1}$ from $D_N$; but OLS minimizes error for the target
attribute only. Our PCA-inspired approach is more similar to total least
squares (TLS)---also known as orthogonal regression---that minimizes
observational errors on all predictor attributes. However, TLS returns only the
lowest-variance \view (Fig.~\ref{subfig:thirdPCA}). In contrast, PCA offers
multiple \views at once (Figs.~\ref{subfig:firstPCA}, \ref{subfig:secondPCA},
and~\ref{subfig:thirdPCA}) for a set of tuples (Fig.~\ref{subfig:3dPCA}), which
range from low to high variance and have low mutual correlation (since they are
orthogonal to each other). Intuitively, \dis constructed from all projections
returned by PCA capture various aspects of the data, as it forms a bounding
hyper-box around the data tuples. However, to capture the relative importance
of \dis, we inversely weigh them according to the variances of their \views in
the quantitative semantics.

\medskip

\paragraph{Compound \invariants vs. denial constraints.} If we try to express
the compound \invariant $\psi_2$ of Example~\ref{ex:constraints} using the
notation from traditional denial constraints~\cite{DBLP:journals/pvldb/ChuIP13}
(under closed-world semantics), where $\month$ always takes values from
$\{$``May'', ``June'', ``July''$\}$, we get the following:
	{
	\begin{align*}
     	\Delta:	   	\neg \; ((\month &= \text{``May''})  \!\!\!\!\!\! & \wedge  \;\; \neg \;(-2 \leq \mathcolorbox{lightgray}{AT - DT - DUR} \leq 0)) \\
     \wedge  \; \neg \; ((\month &= \text{``June''}) \!\!\!\!\!\! & \wedge  \;\; \neg \;(\phantom{-}0  \leq \mathcolorbox{lightgray}{AT - DT - DUR}  \leq 5))\\
     \wedge  \; \neg \; ((\month &= \text{``July''}) \!\!\!\!\!\! & \wedge  \;\; \neg \;(-5 \leq \mathcolorbox{lightgray}{AT - DT - DUR}  \leq 0))
	\end{align*}
	}	
\looseness-1 Note however that arithmetic expressions that specify linear
combination of numerical attributes (highlighted fragment signifying a
projection) are disallowed in denial constraints, which only allow raw
attributes and constants within the constraints. Furthermore, existing
techniques that compute denial constraints offer no mechanism to discover
constraints involving such a composite attribute (projection). Under an
open-world assumption, \dis are more conservative---and
therefore, more suitable for certain tasks such as TML---than denial
constraints. For example, a new tuple with $\month = $ ``August'' will satisfy
the above constraint $\Delta$ but not the compound \di $\psi_2$ of
Example~\ref{ex:constraints}.

\smallskip

\paragraph{Data profiling.} \Dis, just like other constraint models, fall under
the umbrella of data profiling using
metadata~\cite{DBLP:journals/vldb/AbedjanGN15}. There is extensive literature
on data-profiling primitives that model relationships among data attributes,
such as unique column combinations~\cite{DBLP:journals/pvldb/HeiseQAJN13},
functional dependencies (FD)~\cite{papenbrock2015functional,
DBLP:conf/sigmod/ZhangGR20} and their variants (metric~\cite{koudas2009metric},
conditional~\cite{DBLP:conf/icde/FanGLX09},
soft~\cite{DBLP:conf/sigmod/IlyasMHBA04}, approximate~\cite{huhtala1999tane,
kruse2018efficient}, relaxed~\cite{caruccio2016discovery}, etc.), differential
dependencies~\cite{song2011differential}, order
dependencies~\cite{DBLP:journals/vldb/LangerN16,
DBLP:journals/pvldb/SzlichtaGGKS17}, inclusion
dependencies~\cite{papenbrock2015divide, DBLP:journals/jiis/MarchiLP09}, denial
constraints~\cite{DBLP:journals/pvldb/ChuIP13,
DBLP:journals/corr/abs-2005-08540, DBLP:journals/pvldb/BleifussKN17,
pena2019discovery}, and statistical
constraints~\cite{DBLP:conf/sigmod/YanSZWC20}. However, none of them focus on
learning approximate arithmetic relationships that involve multiple numerical
attributes in a noisy setting, which is the focus of our work.

\looseness-1 Soft FDs~\cite{DBLP:conf/sigmod/IlyasMHBA04} model correlation and
generalize traditional FDs by allowing uncertainty, but are limited in modeling
relationships between only a pair of attributes. Metric
FDs~\cite{koudas2009metric} allow small variations in the data, but the
existing work focuses on verification only and not discovery of metric FDs.
Some variants of FDs~\cite{huhtala1999tane, kruse2018efficient,
caruccio2016discovery, koudas2009metric} consider noisy setting, but they
require the allowable noise parameters to be explicitly specified by the user.
However, determining the right settings for these parameters is non-trivial.
Most existing approaches treat constraint violation as Boolean, and do not
measure the degree of violation. In contrast, we do not require any explicit
noise parameter and provide a way to quantify the degree of violation of \dis.

Conditional FDs~\cite{DBLP:conf/icde/FanGLX09} require the FDs to be satisfied
conditionally (e.g., a FD may hold for US residents and a different FD for
Europeans). Denial constraints (DC) are a universally-quantified
first-order-logic formalism~\cite{DBLP:journals/pvldb/ChuIP13} and can adjust
to noisy data, by adding predicates until the constraint becomes exact over the
entire dataset. However, this can make DCs large, complex, and uninterpretable.
While approximate denial constraints~\cite{pena2019discovery} exist, similar to
approximate FD techniques, they also rely on the users to provide the error
threshold.

\smallskip

\paragraph{Input validation.} Our work here contributes to, while also building
upon, work from machine learning, programming languages, and software
engineering. In software engineering, input validation has been used to improve
reliability~\cite{Cohen80}. For example, it is especially used in web
applications where static and dynamic analysis of the code, that processes the
input, is used to detect vulnerabilities~\cite{Su:PLDI2007}. For monitoring
deployed systems, few prior works exploit
\invariants~\cite{DBLP:journals/arobots/JiangED17,
DBLP:conf/hicons/TiwariDJCLRSS14}. To prevent unwanted outcomes, input
validation techniques~\cite{DBLP:conf/issre/HayesO99,
DBLP:conf/edo/BuehrerWS05} are used in software systems. However, such
mechanisms are usually implemented by deterministic rules or \invariants, which
domain experts provide. In contrast, we learn \dis in an unsupervised manner.

\smallskip

\paragraph{Trusted AI.} The issue of trust, resilience, and interpretability of
artificial intelligence (AI) systems has been a theme of increasing interest
recently~\cite{DBLP:conf/cav/Jha19, DBLP:journals/crossroads/Varshney19,
DBLP:journals/corr/abs-1904-07204}, particularly for high-stake and
safety-critical data-driven AI systems~\cite{DBLP:journals/bigdata/VarshneyA17,
DBLP:conf/hicons/TiwariDJCLRSS14}. A standard way to decide whether to trust a
classifier or not, is to use the classifier-produced confidence score. However,
unlike classifiers, regressors lack a natural way to produce such confidence
scores. To evaluate model performance, regression diagnostics check if the
assumptions made by the model during training are still valid for the serving
data. However, they require knowledge of the ground-truths for the serving
data, which is often unavailable.

\smallskip

\paragraph{Data drift.} \looseness-1 Prior work on data drift, change detection,
and covariate shift~\cite{DBLP:conf/sigmod/Aggarwal03,
DBLP:journals/tnn/BuAZ18, dasu2006information, DBLP:journals/eswa/MelloVFB19,
DBLP:conf/kdd/ReisFMB16, DBLP:journals/inffus/FaithfullDK19,
DBLP:conf/icml/Ho05, hooi2019branch, DBLP:conf/sdm/KawaharaS09,
DBLP:conf/vldb/KiferBG04, DBLP:journals/eswa/SethiK17, DBLP:conf/kdd/SongWJR07,
DBLP:conf/sac/IencoBPP14} relies on modeling data distribution, where change is
detected when the data distribution changes. However, data distribution does
not capture constraints, which is the primary focus of our work. Instead of
detecting drift globally, only a handful of works model local
concept-drift~\cite{DBLP:conf/cbms/TsymbalPCP06} or drift for imbalanced
data~\cite{DBLP:journals/corr/WangA15}. Few data-drift detection mechanisms
rely on availability of classification accuracy
~\cite{DBLP:conf/sbia/GamaMCR04, DBLP:conf/sdm/BifetG07,
gaber2006classification, DBLP:journals/tnn/RutkowskiJPD15} or classification
``blindspots''~\cite{DBLP:conf/iri/SethiKA16}. Some of these works focus on
adapting change in data, i.e., learning in an environment where change in data
is expected~\cite{DBLP:conf/sdm/BifetG07, DBLP:journals/csur/GamaZBPB14,
ouyang2011study, DBLP:conf/aistats/SubbaswamySS19,
DBLP:journals/jfi/YuAWSWP19}. Such adaptive techniques are useful to obtain
better performance for specific tasks; however, their goal is orthogonal to
ours.

\smallskip

\balance

\sloppy \paragraph{Representation learning, outlier detection, and one-class
classification.} \looseness-1 Few
works~\cite{DBLP:journals/corr/abs-1812-02765, DBLP:journals/corr/HendrycksG16c}
 , related to our \di-based approach, use
autoencoder's~\cite{hinton2006reducing, rumelhart1985learning} input
reconstruction error to determine if a new data point is out of distribution.
Another mechanism~\cite{DBLP:journals/corr/abs-1909-03835} learns data
\emph{assertions} via autoencoders towards effective detection of invalid
serving inputs. However, such an approach is task-specific and needs a specific
system (e.g., a deep neural network) to begin with. Our approach is similar to
outlier-detection approaches~\cite{kriegel2012outlier} that define outliers as
the ones that deviate from a generating mechanism such as local correlations.
We also share similarity with
one-class-classification~\cite{DBLP:conf/icann/TaxM03}, where the training data
contains tuples from only one class. In general, there is a clear gap between
representation learning approaches (that models data
likelihood)~\cite{hinton2006reducing, rumelhart1985learning,
achlioptas2017learning, karaletsos2015bayesian} and the (constraint-oriented)
data-profiling techniques to address the problem of trusted AI. Our aim is to
bridge this gap by introducing \dis that are more abstract, yet informative,
descriptions of data, tailored towards characterizing trust in ML predictions.

\fi 

\end{document}